\providecommand{\tabularnewline}{\\}
\numberwithin{equation}{section}
\numberwithin{figure}{section}
\theoremstyle{plain}
\newtheorem{thm}{\protect\theoremname}
  \theoremstyle{remark}
  \newtheorem{rem}[thm]{\protect\remarkname}
  \theoremstyle{plain}
  \newtheorem{lem}[thm]{\protect\lemmaname}
  \providecommand{\lemmaname}{Lemma}
  \providecommand{\remarkname}{Remark}
\providecommand{\theoremname}{Theorem}
\begin{document}

\title{Heterotic-$\mathbf{F}$-theory Duality with Wilson Line Symmetry-breaking}
\begin{abstract}
We begin with an $E_{8}\times E_{8}$ Heterotic model broken to an
$SU(5)_{gauge}$ and a mirror $SU(5)_{gauge}$, where one $SU(5)$
and its spectrum is identified as the visible sector while the other
can be identified as a hidden mirror world. In both cases we obtain
the minimal supersymmetric standard model spectrum after Wilson-line
symmetry-breaking enhanced by a low energy R-parity enforced by a
local (or global) $U(1)_{X}$-symmetry. Using Heterotic/$F$-theory
duality, we show how to eliminate the vector-like exotics which were
obtained in previous constructions. In these constructions, the Calabi-Yau
{[}CY{]} four-fold was defined by an elliptic fibration with section
over a base $B_{3}$ and a GUT surface given by $K3/\mathbb{Z}_{2}=$
Enriques surface. In the present paper we construct a quotient CY
four-fold fibered by tori with two elliptic structures given by a
pair of sections fibered over the Enriques surface. Using Heterotic/$F$-theory
duality we are able to define the cohomologies used to derive the
massless spectrum.

Our model for the 'correct' $F$-theory dual of a Heterotic model
with Wilson-line symmetry-breaking builds on prior literature but
employs the stack-theoretic version of the dictionary between the
Heterotic semi-stable $E_{8}$-bundles with Yang-Mills connection
and the $dP_{9}$-fibrations used to construct the $F$-theory dual. 
\end{abstract}

\author{Herbert Clemens and Stuart Raby }

\address{Mathematics Department, Ohio State University, Columbus OH 43210,
USA}

\email{clemens.43@osu.edu}

\address{Physics Department, Ohio State University, Columbus OH 43210, USA}

\email{raby.1@osu.edu}

\date{August 2, 2019}

\maketitle
\tableofcontents{}

\section{Introduction}

\subsection{The physics}

Supersymmetric grand unified theories {[}SUSY GUTs{]} \cite{Dimopoulos:1981yj,Dimopoulos:1981zb,Ibanez:1981yh}
have many nice properties. These include an explanation of the family
structure of quarks and leptons with the requisite charge assignments
under the Standard Model {[}SM{]} gauge group $SU(3)_{C}\times SU(2)_{L}\times U(1)_{Y}$
and a prediction of gauge coupling unification at a scale of order
$10^{16}$ GeV. The latter is so far the only direct hint for the
possible observation of supersymmetric particles at the LHC. UV completions
of SUSY GUTs in string theory also provide a consistent quantum mechanical
description of gravity. As a result of this golden confluence, many
groups have searched for SUSY GUTs in string theory. In fact, it has
been shown that by demanding SUSY GUTs in string constructions one
can find many models with features much like that of the minimal supersymmetric
Standard Model {[}MSSM{]} \cite{Lebedev:2006kn,Lebedev:2007hv,Kim:2007mt,Lebedev:2008un,Blumenhagen:2008zz,Anderson:2011ns,Anderson:2012yf}. 
\begin{flushleft}
The past several years have seen significant attention devoted to
the study of supersymmetric GUTs in $F$-theory \cite{Donagi:2008ca,Beasley:2008dc,Beasley:2008kw,Donagi,Blumenhagen,Grimm:2010ez,Tatar}.
Both local and global $SU(5)$ $F$-theory GUTs have been constructed
where $SU(5)$ is spontaneously broken to the SM via non-flat hypercharge
flux. One problem with this approach for GUT breaking is that large
threshold corrections are generated at the GUT scale due to the non-vanishing
hypercharge flux \cite{Donagi,Mayrhofer:2013ara,Blumenhagen,Blumenhagen2}.
An alternative approach to breaking the GUT group is using a Wilson
line in the hypercharge direction, i.e. a so-called flat hypercharge
line bundle. In this case it is known that large threshold corrections
are not generated at the GUT scale (or, in fact, leading to precise
gauge coupling unification at the compactification scale in orbifold
GUTs)\cite{Krippendorf:2013dqa,Raby:2009sf} and \cite{Ross:2004mi,Hebecker:2004ce,Trapletti:2006xv,Anandakrishnan}.
\par\end{flushleft}

In a previous paper, the present authors and collaborators constructed
a global $SU(5)$ $F$-theory model with Wilson line breaking \cite{Marsano:2012yc}.
The model contained the vector multiplets for the MSSM gauge group,
3 families of quarks and leptons, 4 pairs of Higgs doublets, and in
addition, a vector-like pair of chiral multiplets in the representation
$(\mathbf{3},\mathbf{2})_{-5/6}\oplus(\mathbf{\overline{3}},\mathbf{2})_{+5/6}$.
In terms of the model defined on an elliptically fibered CY 4-fold
with GUT surface defined as an Enriques surface, $K_{3}/\mathbb{Z}_{2}$,
the massless spectrum is given in terms of cohomologies of the flux
line bundle (or twists of the flux line bundle) on the GUT surface.
It was then shown that the holomorphic Euler character of any flat
bundle on $S_{{\rm GUT}}$ is equivalent to its Todd genus \cite{Beasley:2008kw,Donagi:2008ca,Marsano:2012yc},
through 
\begin{equation}
c_{1}(L_{Y})=0\quad\implies\quad\chi(S_{2},L_{Y})=\int_{S_{{\rm GUT}}}\text{Td}(TS_{{\rm GUT}}),
\end{equation}
so we have that 
\begin{equation}
\chi(S_{{\rm GUT}},L_{Y})=h^{0}(S_{{\rm GUT}},L_{Y})-h^{1}(S_{{\rm GUT}},L_{Y})+h^{2}(S_{{\rm GUT}},L_{Y})=1.
\end{equation}
Since all $h^{m}(S_{{\rm GUT}},L_{Y})$ cannot be vanishing, we are
guaranteed to get some massless vector or chiral states, $(\mathbf{3},\mathbf{2})_{-5/6}$'s
and $(\mathbf{\overline{3}},\mathbf{2})_{+5/6}$'s. We emphasize that
the presence of some kind of vector-like exotic matter is not a specific
issue with this Enriques model but rather a general property of any
model that breaks $SU(5)_{{\rm GUT}}\rightarrow SU(3)\times SU(2)\times U(1)_{Y}$
with a flat $U(1)_{Y}$ bundle on a holomorphic surface $S_{{\rm GUT}}$.
We note that this derivation is, however, only valid if the $F$-theory
is compactified on a CY 4-fold with section. In the present paper
we show how to eliminate the vector-like exotics and evade this theorem.

We need to understand how to modify the description of the exotics
(and the matter content) in a situation where there is no single distinguished
section, or, more precisely, where the torus fibration has two sections
`on equal footing.' The prescription we use gives exactly such a description.
The purpose of this paper is to present a model for Heterotic/$F$-theory
duality in which $SU(5)$ symmetry is broken (on both sides) by Wilson
lines. The work derives from the previous global $F$-theory model
with Wilson line symmetry-breaking \cite{Marsano:2012yc}. It modifies
the previous model so as to allow the construction of a Heterotic
dual. It adapts previously known Heterotic techniques for eliminating
undesirable features of the model, such as vector-like exotics, by
constructing the torus-fibration (on both sides) with two sections
\cite{Donagi:2004ub}. This allows us to use the Heterotic technique
of translation by the difference of the two sections to form the requisite
$\mathbb{Z}_{2}$-action in order to evade the above-mentioned theorem
and eliminate the vector-like exotics. The model has some very nice
features. It contains the gauge group $SU(5)$ broken via a Wilson
line in the hypercharge direction to $SU(3)\times SU(2)\times U(1)_{}$.
It contains three families of quarks and leptons and one pair of Higgs
doublets. Furthermore our $F$-theory model exhibits a $\mathbb{Z}_{4}$
$\mathbf{R}$-symmetry in the semi-stable limit of the $F$-theory
model.

In addition it has a local $U(1)_{X}$ gauge symmetry, given by a
global (4 + 1) split spectral cover, where $U(1)_{X}$ is the $U(1)$
in $SO(10)$ commuting with $SU(5)_{gauge}$; Higgs pairs have charge,
$\pm2$, while $SU(5)$ \textbf{10}'s and \textbf{$\bar{5}$}'s have
charge, +1 and -3, respectively. This gauge symmetry preserves $R$-parity,
eliminating baryon and lepton number violating dimension-4 operators
\cite{Grimm:2010ez}. $U(1)_{X}$ can also be used to identify possible
right-handed neutrino states which are $SU(5)$ singlet matter states
with $U(1)_{X}$ charge +5 \cite{Tatar,Blumenhagen}.

Some problems for this construction are as follows. The gauge symmetry
$U(1)_{X}$ also forbids Majorana masses for the right-handed neutrinos.
In addition, in Subsection \ref{sec:The-D-term} we calculate the
$D$-term for $U(1)_{X}$ (see Eqn. 3.30 in \cite{Grimm:2010ez,Grimm:2011tb}),
requiring $D_{U(1)_{X}}=0$ so that this symmetry is not spontaneously
broken by fields derived from the adjoint representation of $E_{8}$.
In addition, $U(1)_{X}$ is not sufficient to prevent dimension-5
baryon and lepton number violating operators. It is possible that
the local $U(1)_{X}$ symmetry is broken down to a global $U(1)_{X}$
symmetry via a Stueckelberg mechanism \cite{Tatar,Grimm:2010ez},
but this is beyond the scope of the present paper. The $U(1)_{X}$
symmetry may also be broken to a $\mathbb{Z}_{2}$ matter parity by
a non-perturbative effect at the GUT scale. This would then allow
for right-handed neutrino Majorana masses at the GUT scale.

Finally, a very novel feature of the model is that it contains a twin/mirror
$SU(5)$ symmetry broken to a mirror SM with three families of mirror
quarks and leptons and a pair of mirror Higgs multiplets.\footnote{Mirror world defined as the parity transform of the Standard Model
has been reviewed in the paper by L.B. Okun \cite{Okun}. This paper
has many references which we refer to the reader. Some of these references
however include another related definition of the mirror world given
by a generalized $\mathbb{Z}_{2}$-symmetry which takes the Standard
Model into the twin or mirror sector with states having identical
charges but different masses and couplings. Papers in this genre include,
for example, \cite{Berezhiani}, \cite{Chacko}, and \cite{Barbieri-1}.}This is a direct consequence of the fact that the $\mathrm{GUT}$
surface, $S_{\mathrm{GUT}}^{\vee}=$ Enriques, is a branched (therefore
irreducible) double cover of the base $B_{2}$. The gauge and Yukawa
couplings in the visible and mirror worlds are determined by volume
moduli which must still be stabilized and supersymmetry broken. As
a result mirror matter does not necessarily have the same mass as
visible matter or the same value of their gauge and Yukawa couplings.
This mirror sector is a possible candidate for the dark matter in
the universe.

\subsection{The mathematics}

We next give an idea of the mathematics of our model for Heterotic/$F$-theory
duality in which $SU\left(5\right)$-symmetry is broken (on both sides)
by a Wilson line construction. As mentioned, the mathematical model
derives from a previous global $F$-theory model with Wilson line
symmetry-breaking \cite{Marsano:2012yc}. It modifies the previous
model so as to allow the construction of a Heterotic dual. It also
adapts previously known Heterotic techniques for eliminating undesirable
features of the model, such as vector-like exotics, by constructing
the torus-fibration (on both sides) that admits two sections. More
properly, the two sections, taken together, should be thought of as
determining an invariant $\mathbf{g}_{2}^{1}$ linear series on the
torus fiber.

Our goal for this paper is to present a global phenomenonologically
consistent Heterotic model $V_{3}^{\vee}/B_{2}^{\vee}$ and $F$-theory
dual $W_{4}^{\vee}/B_{3}^{\vee}$. These will be constructed as equivariant
$\mathbb{Z}_{2}$-quotients of another set of dual Heterotic/$F$-theory
models $V_{3}/B_{2}$ and $W_{4}/B_{3}$ with respective involutions
\begin{equation}
\tilde{\beta}_{3}/\beta_{2}:V_{3}/B_{2}\rightarrow V_{3}/B_{2}\label{eq:b3}
\end{equation}
and 
\begin{equation}
\tilde{\beta}_{4}/\beta_{3}:W_{4}/B_{3}\rightarrow W_{4}/B_{3}\label{b4}
\end{equation}
whose quotients $V_{3}^{\vee}/B_{2}^{\vee}$ and $W_{4}^{\vee}/B_{3}^{\vee}$
are our ultimate goal. However much of the work will center on $V_{3}/B_{2}$
and $W_{4}/B_{3}$ and their respective involutions, passing to the
quotients only late in the story. Furthermore much of the work has
already been completed in our companion papers \cite{Clemens-1,Clemens-2},
including in \cite{Clemens-2} the complete construction of 
\[
B_{3}=\mathbb{\mathbb{P}}^{1}\times B_{2}
\]
and the involution $\beta_{3}$ and calculation of their numerical
invariants. The groundwork contained in those papers will be referred
to as needed in what follows.

In short, the fundamental challenge is to construct compatible involutions
(\ref{eq:b3}) and (\ref{b4}) so that both leave their top-degree
holomorphic forms invariant, that is, so that their respective quotients
are Calabi-Yau manifolds. Because of other necessary characteristics
of the Heterotic and $F$-theory models, we showed in \cite{Clemens-1}
that (\ref{eq:b3}) must act as 
\begin{equation}
\frac{dx}{y}\mapsto\frac{dx}{y}\label{eq:y}
\end{equation}
on the relative (Weierstrass) one-form on $V_{3}/B_{2}$ whereas (\ref{b4})
must act as 
\begin{equation}
\frac{dx}{y}\mapsto-\frac{dx}{y}\label{eq:-y}
\end{equation}
on the relative (Weierstrass) one-form on $W_{4}/B_{3}$. We showed
this necessity on the $F$-theory side in \cite{Clemens-1} by tracing
the Tate form 
\begin{equation}
wy^{2}=x^{3}+a_{5}wxy+a_{4}zwx^{2}+a_{3}z^{2}w^{2}y+a_{2}z^{3}w^{2}x+a_{0}z^{5}w^{3}\label{eq:Tate}
\end{equation}
back to its $E_{8}$-origins, namely 
\begin{equation}
wy^{2}=x^{3}+a_{0}z^{5}.\label{eq:E8 sing}
\end{equation}
We showed in \cite{Clemens-2} that the coefficients $a_{j}$ of the
Tate form, as well as $z$ and $y/x$ must go to minus themselves
under the $\mathbb{Z}_{2}$-action. One identifies the configuration
of exceptional curves in the crepant resolution of (\ref{eq:E8 sing})
with the configuration of the positive simple roots in the $E_{8}$-Dynkin
diagram. A consequence of this identification is that (\ref{eq:-y})
sends each positive simple root to its negative. We preserve $E_{8}$-symmetry
by counteracting this reversal of roots by the operation of complex
conjugation on the complex algebraic group $E_{8}^{\mathbb{C}}$,
an operation that leaves untouched the compact real form $E_{8}$.
This last is reflected in the fact that tracing real roots back to
$E_{8}$ requires that \eqref{eq:Tate} and \eqref{eq:E8 sing} be
rescaled by dividing both sides by $a_{0}^{6}$. This rescales both
$z$ and $y/x$ by $a_{0}^{-1}$ and $y/a_{0}x$ indeed becomes invariant
under the $\mathbb{Z}_{2}$-action.

\subsection{The organization of the paper}

In Section \ref{sec:-and-its} we derive the Weierstrass equation
for the Tate form. We introduce the construction of a second section
$\tau$ of $W_{4}/B_{3}$ in addition to the tautological section
$\zeta$ `at infinity.' We discuss the action of the involution $\tilde{\beta}_{4}/\beta_{3}$
as reflecting complex conjugation on the complex algebraic groups
$SL\left(5,\mathbb{C}\right)$ and $E_{8}^{\mathbb{C}}$ whose compact
real forms are $SU\left(5\right)$ and $E_{8}$ respectively. 

In Section \ref{sec:The-spectral-divisor} we derive the spectral
variety from the Tate form and discuss its decomposition into components
of degree four and one respectively.

In Sections \ref{sec:Semi-stable-degeneration-and} and \ref{sec:NC}
we discuss the semi-stable limit and the relation to the Heterotic
dual. In particular we show how to build the normal-crossing $K3$
from an elliptic curve with two flat $E$-bundles.

In Section \ref{sec:Geometric-Model-Double-Cover} we project the
fourfold $W_{4}/B_{3}$ to the $\mathbb{P}^{1}$-bundle $Q/B_{3}$
whose fiber is the degree-$2$ linear series determined by the two
points $\tau\left(b_{3}\right)$ and $\zeta\left(b_{3}\right)$ where
the two sections intersect the fiber. We do this by projecting each
fiber of $W_{4}/B_{3}$ from the third point of intersection with
the line joining $\tau\left(b_{3}\right)$ and $\zeta\left(b_{3}\right)$
. This projection, that we name $\overline{W}_{4}$ blows up the singular
locus of $W_{4}$. In this way we obtain a commutative diagram 
\[
\begin{array}{ccc}
\overline{W}_{4} & \rightarrow & W_{4}\\
\downarrow &  & \downarrow\\
Q & \rightarrow & B_{3}
\end{array}
\]
where the top horizontal map is crepant partial resolution, the left-hand
vertical map is $2-1$ and the bottom horizontal map is a $\mathbb{P}^{1}$-fibration
whose fibers correspond to the degree-$2$ linear system on the fibers
of the right-hand vertical map determined by the lifts of the two
sections $\left(\tau\right)$ and $\left(\zeta\right)$ to $\overline{W}_{4}/B_{3}$.
The Calabi-Yau fourfold $\overline{W}_{4}$ is a branched double cover
of $Q$. The affine fiber coordinate $\vartheta_{0}$ of $Q/B_{3}$
is a section of $K_{B_{3}}^{-1}$ as are the coefficients $a_{j}$
in the Tate form and $t:=y/x$.

We next construct a full crepant resolution of $\overline{W}_{4}$
to obtain a smooth model $\tilde{W}_{4}$. This is accomplished in
Section \ref{sec:Desingularization-of}. The first step is the partial
resolution $W_{4}^{\left(1\right)}/B_{3}$ of $\overline{W}_{4}/B_{3}$
that will create an divisor $D_{0}$ connecting the divisors $D_{1}$
(the inherited component from $W_{4}/B_{3}$) and $D_{4}$ (the exceptional
divisor of $\overline{W}_{4}/W_{4})$. Together these three divisors
comprise 
\[
S_{\mathrm{GUT}}\times_{B_{3}}W_{4}^{\left(1\right)}.
\]
This will be followed by a third partial desingularization $W_{4}^{\left(2\right)}/B_{3}$
of $W_{4}^{\left(1\right)}/B_{3}$ extending over a general point
of $S_{\mathrm{GUT}}$. Its exceptional divisor will be reducible
so that 
\[
S_{\mathrm{GUT}}\times_{B_{3}}W_{4}^{\left(2\right)}=D_{0}\cup D_{1}\cup D_{2}\cup D_{3}\cup D_{4}
\]
that configure themselves over a general point of $S_{\mathrm{GUT}}$
as an extended $A_{4}$-Dynkin diagram. That will in turn be followed
by a final codimension-$2$ desingularization $\tilde{W}_{4}/B_{3}$
of $W_{4}^{\left(2\right)}/B_{3}$ over the curve $\Sigma_{\mathbf{\bar{5}}}^{\left(44\right)}\subseteq S_{\mathrm{GUT}}$.

It should be noted that, in the process of putting $\left(\zeta\right)$
and $\left(\tau\right)$ on equal footing as the first step in the
desingularization, neither can be given preference as the one passing
through the inherited component 
\[
S_{\mathrm{GUT}}\times_{B_{3}}\overline{W}_{4}.
\]
The `inherited' role is assumed by $D_{0}$ while, over a general
point of $S_{\mathrm{GUT}}$, the proper transform $\left(\tilde{\zeta}\right)$
of $\left(\zeta\right)$ intersects $D_{1}$ and the proper transform
$\left(\tilde{\tau}\right)$ of $\left(\tau\right)$ intersects $D_{4}$.

In Section \ref{sec:Higgs-line-bundle} we calculate the Higgs line
bundle that will govern the computation of the chiral spectrum. In
Subsections \ref{subsec:The--G-flux-in} and \ref{subsec:Numerical-conditions-on}
we compute the $G$-flux. In Subsection \ref{sec:The-D-term} we establish
the vanishing of the $D$-term for a suitably chosen $\tilde{\beta}$-symmetric
Kähler metric.

In Section \ref{subsec:Symmetry-breaking} we discuss the symmetry-breaking
induced by wrapping the Wilson line on the involution $\tilde{\beta}_{4}$
on $\tilde{W}_{4}$. The Wilson line breaks $SU(5)$-symmetry to the
$SU(3)_{C}\times SU(2)_{L}\times U(1)_{Y}$ symmetry of the Standard
Model.

It is in Section \ref{sec:Skew-twisted-cohomology} that we calculate
the complete spectrum of the theory. The desired invariants follow
rather directly from the results of \cite{Clemens-2} where the toric
presentation of $B_{3}$ and its involution $\beta_{3}$ were explored
in detail. The present paper demonstrates a method for eliminating
vector-like exotics when breaking the GUT symmetry with the Wilson
line. In this section we discuss just one $SU(5)_{gauge}\times SU(5)_{Higgs}$
sector in the semi-stable limit of the $F$-theory model, however,
the massless spectrum in the hidden $SU(5)_{gauge}\times SU(5)_{Higgs}$
is identical.

Finally in Section \ref{sec:Asymptotic--R-symmetry} discuss the $\mathbb{Z}_{4}$
R-symmetry on the semi-stable limit of our $F$-theory model.
\begin{rem}
Throughout this paper, we will employ the following notational convention.
Projections with image space $A$ will in general be denoted as $\pi_{A}$.
This notation will be employed regardless of the domain of the map,
which (hopefully) will be clear from the context. 
\end{rem}

\begin{rem}
Throughout this paper, we will let 
\[
\mathbb{P}_{\left[i_{1},\ldots,,i_{d}\right]}^{d-1}
\]
denote the weighted complex projective $\left(d-1\right)$-space with
integer weights $\left[i_{1},\ldots,,i_{d}\right]$ and we will let
\[
\mathbb{P}_{\left[u_{1},\ldots,u_{d}\right]}
\]
denote the (unweighted) complex projective space with homogeneous
coordinates $\left[u_{1},\ldots,,u_{d}\right]$. 
\end{rem}

\section{$W_{4}/B_{3}$, $E_{8}$-unfolding and its symmetries\label{sec:-and-its}}

Our starting point is the Fano threefold 
\[
B_{3}=\mathbb{P}_{\left[u_{0},v_{0}\right]}\times B_{2}
\]
where $B_{2}$ is the $D_{2}$ del Pezzo surface studied in Section
6 of \cite{Clemens-2}. $B_{2}$ is a double cover of the projective
plane 
\begin{equation}
D_{2}\rightarrow\mathbb{P}_{\left[n_{-1},m_{i},m_{-i}\right]}\label{eq:D2}
\end{equation}
branched along a specific smooth quartic curve admitting a $\mathbb{Z}_{4}$-action.
Denote 
\[
N:=c_{1}\left(K_{B_{3}}^{-1}\right).
\]

We form 
\begin{equation}
P:=\mathbb{P}\left(\mathcal{O}_{B_{3}}\oplus\mathcal{O}_{B_{3}}\left(2N\right)\oplus\mathcal{O}_{B_{3}}\left(3N\right)\right)\label{eq:P}
\end{equation}
with homogeneous fiber coordinates $\left[w,x,y\right]$ and canonical
bundle 
\[
\mathcal{O}_{P}\left(-3\right)\otimes\pi_{B_{3}}^{\ast}\left(-6N\right).
\]
For 
\begin{equation}
a_{j},\,z,\,\frac{y}{x}=t\in H^{0}\left(\mathcal{O}_{B_{3}}\left(N\right)\right),\label{eq:divcl}
\end{equation}
we write the Tate form (\ref{eq:Tate}) for an elliptically fibered
fourfold $W_{4}/B_{3}$ in determinantal form as 
\begin{equation}
\left|\begin{array}{cc}
x^{3}+a_{4}zwx^{2}+a_{2}z^{3}w^{2}x+a_{0}z^{5}w^{3} & 1\\
wy^{2}-\left(a_{5}wx+a_{3}z^{2}w^{2}\right)y & 1
\end{array}\right|=0.\label{eq:T}
\end{equation}
Since 
\[
K_{P}^{-1}=\mathcal{O}_{P}\left(3\left(\left\{ w=0\right\} +6N\right)\right),
\]
$W_{4}$ is Calabi-Yau. Here as in \cite{Clemens-2} $B_{3}$ admits
an involution $\beta_{3}$ with finite fixpoint set with respect to
which 
\[
t:=\frac{y}{x},z,a_{0},a_{2},a_{3},a_{4},a_{5}\in H^{0}\left(\mathcal{O}_{B_{3}}\left(N\right)\right)^{\left[-1\right]},
\]
the $\left(-1\right)$-eigenspace, and 
\[
\begin{array}{c}
w\in H^{0}\left(\mathcal{O}_{P}\left(1\right)\otimes\pi_{B_{3}}^{\ast}\mathcal{O}_{B_{3}}\right)^{\left[+1\right]}\\
x\in H^{0}\left(\mathcal{O}_{P}\left(1\right)\otimes\pi_{B_{3}}^{\ast}\mathcal{O}_{B_{3}}\left(2N\right)\right)^{\left[+1\right]}\\
y\in H^{0}\left(\mathcal{O}_{P}\left(1\right)\otimes\pi_{B_{3}}^{\ast}\mathcal{O}_{B_{3}}\left(3N\right)\right)^{\left[-1\right]}.
\end{array}
\]
with respect to the induced involution on $W_{4}/B_{3}$. The equation
$z=0$ defines a smooth surface that, by the adjunction formula, must
be a $K3$-surface.

\subsection{Weierstrass form with respect to $\zeta$ \label{subsec:Weierstrass-form-for}}

Let 
\[
\zeta:B_{3}\rightarrow W_{4}
\]
be the standard section given by 
\[
\zeta\left(b_{3}\right)=\left\{ \left[w,x,y\right]=\left[0,0,1\right]\right\} .
\]
Referring to (\ref{eq:T}) we change the equation of $W_{4}/B_{3}$
into Weierstrass form based at the section $\zeta$ in the standard
way. Namely we complete the square with respect to $y$ as follows.
\[
\begin{array}{c}
wy^{2}=x^{3}+a_{5}wxy+a_{4}zwx^{2}+a_{3}z^{2}w^{2}y+a_{2}z^{3}w^{2}x+a_{0}z^{5}w^{3}\\
w\left(y^{2}-\left(a_{5}x+a_{3}z^{2}w\right)y+\frac{\left(a_{5}x+a_{3}z^{2}w\right)^{2}}{4}\right)=\\
x^{3}+a_{4}zwx^{2}+a_{2}z^{3}w^{2}x+a_{0}z^{5}w^{3}+w\frac{\left(a_{5}x+a_{3}z^{2}w\right)^{2}}{4}\\
w\left(y-\frac{a_{5}x+a_{3}z^{2}w}{2}\right)^{2}=x^{3}+\left(a_{4}z+\frac{a_{5}^{2}}{4}\right)wx^{2}+z^{2}\left(a_{2}z+\frac{a_{3}a_{5}}{2}\right)w^{2}x+z^{4}\left(a_{0}z+\frac{a_{3}^{2}}{4}\right)w^{3}\\
=x^{3}+Awx^{2}+Bz^{2}w^{2}x+Cz^{4}w^{3}
\end{array}
\]
where 
\[
\begin{array}{c}
A=a_{4}z+\frac{a_{5}^{2}}{4}\\
B=a_{2}z+\frac{a_{3}a_{5}}{2}\\
C=a_{0}z+\frac{a_{3}^{2}}{4}.
\end{array}
\]
Then we eliminate the $x^{2}$-term by setting 
\begin{equation}
\begin{array}{c}
\underline{x}=x+\frac{a_{5}^{2}+4a_{4}z}{12}w=x+\frac{A}{3}w\\
\underline{y}=y-\frac{a_{5}x+a_{3}z^{2}w}{2}
\end{array}\label{eq:Wnfcoor}
\end{equation}
finally yielding the Weierstass form 
\begin{equation}
w\underline{y}^{2}=\underline{x}^{3}+\left(Bz^{2}-\frac{A^{2}}{3}\right)w^{2}\underline{x}+\left(Cz^{4}-\frac{AB}{3}z^{2}+\frac{2A^{3}}{27}\right)w^{3}.\label{eq:Wnf}
\end{equation}
for our Calabi-Yau fourfold $W_{4}$.

The discriminant of (\ref{eq:Wnf}) is given by 
\[
\begin{array}{c}
4\left(Bz^{2}-\frac{A^{2}}{3}\right)^{3}+27\left(Cz^{4}-\frac{AB}{3}z^{2}+\frac{2A^{3}}{27}\right)^{2}=\\
A^{2}\left(4AC-B^{2}\right)z^{4}+2B\left(2B^{2}-9AC\right)z^{6}+27C^{2}z^{8}.
\end{array}
\]
Expanding the discriminant in powers of $z$ the coefficient of $z^{4}$
becomes $\left(\frac{a_{5}^{2}}{4}\right)^{2}\left(\frac{a_{3}^{2}a_{5}^{2}}{4}-\left(\frac{a_{3}a_{5}}{2}\right)^{2}\right)=0$
and the coefficient of $z^{5}$ does not in general equal zero so
that indeed, by Kodaira's classification, we have $A_{4}$-singularities
over $S_{\mathrm{GUT}}=\left\{ z=0\right\} $.

\subsection{The second section\label{subsec:The-second-section}}

Beside the standard section $\zeta:B_{3}\rightarrow W_{4}$ we require
a second section 
\[
\tau:B_{3}\rightarrow W_{4}
\]
defined by 
\[
b_{3}\mapsto\left[w,x,y\right]=\left[1,z^{2},z^{3}\right].
\]
Substituting this section of $P/B_{3}$ into Tate form (\ref{eq:Tate}),
as in \cite{Clemens-1,Clemens-2} one concludes that the condition
that it lies in $W_{4}$ is 
\begin{equation}
a_{5}+a_{4}+a_{3}+a_{2}+a_{0}=0.\label{eq:cond}
\end{equation}
This section allows a change of the group structure on the fibers
of $W_{4}/B_{3}$ by a translation. Intertwining this translation
with the action of $\tilde{\beta}_{4}$ allows us both to eliminate
vector-like exotics from the $F$-theory model $W_{4}^{\vee}/B_{3}^{\vee}$
and to introduce a $\left(4+1\right)$-split in the spectral divisor
giving the $U\left(1\right)_{X}$ discussed in the Introduction. Translation
of fibers by this section of course leaves the Weierstrass form on
smooth elliptic fibers and $I_{1}$-fibers invariant, that is, all
fibers over $\left(B_{3}-S_{\mathrm{GUT}}\right)$.

Notice that, over a general point $b_{3}\in B_{3}$, $\tau-\zeta$
is not of finite order on $Pic^{0}$ of the cuspidal curve $wy^{2}=x^{3}$
since the parameter $t=\frac{y}{x}=z$ takes all values. So $\tau-\zeta$
is not of finite order in $Pic^{0}\left(\pi^{-1}\left(b_{3}\right)\right)$
if the $a_{j}$ are sufficiently small. Furthermore, if $a_{5}=-a_{0}$
are small and $a_{2}=a_{3}=a_{4}=0$, the same argument gives
\begin{lem}
For a crepant resolution $\tilde{W}_{4}/B_{3}$ of $W_{4}/B_{3}$,
$\tau-\zeta$ is not of finite order in $Pic^{0}\left(\tilde{W}_{4}/B_{3}\right)$
over a general point of $S_{\mathrm{GUT}}:=\left\{ z=0\right\} \subseteq B_{3}$. 
\end{lem}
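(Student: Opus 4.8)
The plan is to analyze the behavior of the fibration $\tilde W_4/B_3$ restricted to the surface $S_{\mathrm{GUT}}=\{z=0\}$ and track the divisor class $\tau-\zeta$ as one approaches and crosses into this locus. The key geometric input is already established in the paragraph preceding the statement: over a general point $b_3\in B_3$, the parameter $t=y/x=z$ sweeps out all values, so that $\tau-\zeta$ is of infinite order on $\mathrm{Pic}^0$ of the generic (cuspidal or smooth) fiber. The specialization $a_5=-a_0$ small with $a_2=a_3=a_4=0$ is engineered precisely so that one retains explicit control over the Weierstrass data while forcing the $A_4$-singularity along $S_{\mathrm{GUT}}$. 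First I would substitute these constraints into the Weierstrass form \eqref{eq:Wnf}, confirming that $A=a_4z+a_5^2/4$, $B=a_2z+a_3a_5/2$, $C=a_0z+a_3^2/4$ collapse to $A=a_5^2/4$, $B=0$, $C=a_0z$, so that the discriminant degenerates in the expected way and the Kodaira fiber type over $z=0$ is indeed $I_5$ (the $A_4$ configuration), matching the extended $A_4$-Dynkin picture promised in the Introduction for $W_4^{(2)}$.

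Next I would pass from the singular fiber over a general point of $S_{\mathrm{GUT}}$ to its crepant resolution. On an $I_5$ fiber the group $\mathrm{Pic}^0$ of the singular rational curve is $\mathbb{C}^\times$ (the multiplicative group), and under crepant resolution the identity component of the Néron model / relative $\mathrm{Pic}^0(\tilde W_4/B_3)$ over the generic point of $S_{\mathrm{GUT}}$ is again $\mathbb{G}_m=\mathbb{C}^\times$. The heart of the argument is to compute the position of the difference class $\tilde\tau-\tilde\zeta$ inside this $\mathbb{G}_m$. Since $\tau-\zeta$ has infinite order on the generic fiber of $B_3$ (where the fiber is cuspidal with $\mathrm{Pic}^0=\mathbb{G}_a=\mathbb{C}$), I would argue by a semicontinuity/limit argument: the section $\tau$ is given explicitly by $[w,x,y]=[1,z^2,z^3]$, so its intersection with the chain of exceptional $\mathbb{P}^1$'s of the $I_5$-fiber, and hence the $\mathbb{G}_m$-coordinate of $\tilde\tau-\tilde\zeta$, is governed by the leading behavior of $t=z$ as $z\to0$. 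The claim is that this coordinate is not a root of unity, equivalently that $\tau-\zeta$ does not become a torsion point of the Mordell–Weil group of $\tilde W_4/B_3$ restricted over a general point of $S_{\mathrm{GUT}}$.

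The main obstacle I anticipate is the transition in fiber type: away from $S_{\mathrm{GUT}}$ the relevant group is the \emph{additive} group $\mathbb{G}_a$ of the cuspidal curve, whereas over $S_{\mathrm{GUT}}$ after resolution it is the \emph{multiplicative} group $\mathbb{G}_m$ of the $I_5$-fiber, and infinite order in $\mathbb{G}_a$ does not automatically transport to non-torsion in $\mathbb{G}_m$ — one must genuinely track the limiting embedding of the section into the component group and the identity component of the special fiber. I would handle this by working in a one-parameter family transverse to $S_{\mathrm{GUT}}$ and using the explicit coordinates: the crepant resolution introduces exceptional components $D_0,\dots,D_4$ arranged as the extended $A_4$-diagram (as the later sections construct), and I would compute which component $\tilde\tau$ meets and with what local multiplicity, using the parametrization $[1,z^2,z^3]$ together with $a_5=-a_0$, $a_2=a_3=a_4=0$. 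Because $a_5=-a_0$ are free small parameters, the $\mathbb{G}_m$-coordinate of the limiting difference is a non-constant function of these moduli; hence for generic (sufficiently small) $a_j$ it avoids the countable set of roots of unity, giving non-torsion over a general point of $S_{\mathrm{GUT}}$. I expect the bookkeeping of the component-group contribution versus the identity-component contribution to be the delicate step, but the explicit form of $\tau$ and the collapse $B=0$, $A=a_5^2/4$ should make the computation tractable rather than merely formal.
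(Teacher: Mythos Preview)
Your proposal is correct and follows essentially the same approach as the paper: specialize to $a_5=-a_0$ small with $a_2=a_3=a_4=0$ and argue that the relevant coordinate on $\mathrm{Pic}^0$ of the fiber varies non-trivially with the moduli, hence is generically not torsion. The paper's proof is literally the phrase ``the same argument gives''; your careful elaboration of the $\mathbb{G}_a$-versus-$\mathbb{G}_m$ transition and the plan to track $\tilde\tau$ through the explicit crepant resolution is precisely the content the paper leaves implicit.
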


So these same assertions hold for a general allowable choice of the
coefficients $a_{j}$. This fact is essential to the proof in \cite{Clemens-2}
of Lemma \ref{lem:With-respect-to}ii) below. It says that our $F$-theory
model has a single Higgs doublet.

Over $b_{3}\in B_{3}$ the line in $\pi^{-1}\left(b_{3}\right)$ between
$\zeta\left(b_{3}\right)$ and $\tau\left(b_{3}\right)$ is given
by the equation $x-z^{2}w=0$. Letting $a_{jkl}:=a_{j}+a_{k}+a_{l}$
and using that $a_{420}=-a_{53}$ we obtain by (\ref{eq:cond}) that
the third point of intersection with $W_{4}$ is given by substituting
$x=z^{2}w$ in (\ref{eq:T}) to obtain 
\[
w\left|\begin{array}{cc}
z^{6}w^{2}+a_{420}z^{5}w^{2} & 1\\
y^{2}+a_{420}z^{2}wy & 1
\end{array}\right|=0,
\]
that is, 
\[
y^{2}+a_{420}z^{2}wy-z^{5}w^{2}\left(z+a_{420}\right)=\left(y-z^{3}w\right)\left(y+\left(z+a_{420}\right)z^{2}w\right)=0.
\]
We denote the third section as 
\[
\upsilon\left(b_{3}\right)=\left[w,z^{2}w,\left(-z-a_{420}\right)z^{2}w\right].
\]

Finally, we therefore have a fourth section 
\[
\mu:B_{3}\rightarrow W_{4}
\]
defined by the third point of intersection of the tangent line to
\begin{equation}
\left|\begin{array}{cc}
x^{3}+a_{4}zwx^{2}+a_{2}z{}^{3}w^{2}x+a_{0}z{}^{5}w^{3} & 1\\
wy^{2}-\left(a_{5}wx+a_{3}z{}^{2}w^{2}\right)y & 1
\end{array}\right|=0\label{eq:3rdsec}
\end{equation}
at $\upsilon\left(b_{3}\right)$.

Taken together these calculations yield the following. 
\begin{lem}
\label{lem:The-sections-,Mordell}The sections $\zeta$, $\tau$,
$\upsilon$ and $\mu$ of $W_{4}/B_{3}$ satisfy the following relations
in $\mathrm{Pic}{}^{3}\left(W_{4}/B_{3}\right)$: 
\[
\begin{array}{c}
3\text{·}\zeta\in\left|\mathcal{O}_{W_{4}/B_{3}}\left(1\right)\right|\\
\zeta+\tau+\upsilon\in\left|\mathcal{O}_{W_{4}/B_{3}}\left(1\right)\right|\\
2\upsilon+\mu\in\left|\mathcal{O}_{W_{4}/B_{3}}\left(1\right)\right|.
\end{array}
\]
The last two relations imply that 
\[
\zeta+\tau\equiv\upsilon+\mu,
\]
that is in classical language, the two divisors are members of the
same distinguished $g_{2}^{1}$ on $W_{4}/B_{3}$. 
\end{lem}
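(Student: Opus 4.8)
The plan is to reduce everything to the classical geometry of the plane-cubic fibers of $W_{4}/B_{3}\subset P$. Because $P=\mathbb{P}(\mathcal{O}_{B_{3}}\oplus\mathcal{O}_{B_{3}}(2N)\oplus\mathcal{O}_{B_{3}}(3N))$ is an honest $\mathbb{P}^{2}$-bundle with fiber coordinates $[w,x,y]$, each fiber $E_{b_{3}}$ of $W_{4}/B_{3}$ is a plane cubic and $\mathcal{O}_{W_{4}/B_{3}}(1)=\mathcal{O}_{P}(1)|_{W_{4}}$ restricts on it to the degree-three hyperplane bundle $\mathcal{O}_{E_{b_{3}}}(1)$. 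The engine is the elementary fact that \emph{every line section $L\cap E_{b_{3}}$ lies in $|\mathcal{O}_{E_{b_{3}}}(1)|$ and that any two line sections are linearly equivalent}. I would therefore establish each of the three asserted relations by exhibiting a single line in $P/B_{3}$ --- equivalently, a global section of an appropriate twist of $\mathcal{O}_{P}(1)$ restricting to $\mathcal{O}_{W_{4}/B_{3}}(1)$ --- whose scheme-theoretic intersection with $W_{4}$ is the stated divisor.

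First I would take $w\in H^{0}(\mathcal{O}_{P}(1))$: setting $w=0$ in the Tate form (\ref{eq:Tate}) collapses it to $x^{3}=0$, so $\{w=0\}$ meets each fiber only at $\zeta=[0,0,1]$ and with multiplicity three. Hence $\mathrm{div}(w|_{W_{4}})=3\zeta$, giving $3\zeta\in|\mathcal{O}_{W_{4}/B_{3}}(1)|$ and exhibiting $\zeta$ as the flex at infinity. Next, the secant through $\zeta$ and $\tau$ is $\{x-z^{2}w=0\}$, and $x-z^{2}w$ is a global section of $\mathcal{O}_{P}(1)\otimes\pi_{B_{3}}^{\ast}\mathcal{O}_{B_{3}}(2N)$ by the divisor-class assignments (\ref{eq:divcl}); the computation preceding the statement identifies its residual intersection with $W_{4}$ as $\upsilon$, whence $\zeta+\tau+\upsilon\in|\mathcal{O}_{W_{4}/B_{3}}(1)|$. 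Finally, for the third relation I would write out the fiberwise tangent line to $W_{4}$ at $\upsilon$ as an explicit linear form (its coefficients being the partial derivatives of the cubic evaluated at $\upsilon$, hence polynomials in $z$ and the $a_{j}$ and so sections of a suitable twist of $\mathcal{O}_{P}(1)$); by the very definition of $\mu$ its residual intersection with $W_{4}$ is $\mu$, so $2\upsilon+\mu\in|\mathcal{O}_{W_{4}/B_{3}}(1)|$.

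The stated consequence is then purely formal. The second and third relations both represent $\mathcal{O}_{W_{4}/B_{3}}(1)$, so $\zeta+\tau+\upsilon\equiv 2\upsilon+\mu$ in $\mathrm{Pic}^{3}(W_{4}/B_{3})$, and cancelling $\upsilon$ gives $\zeta+\tau\equiv\upsilon+\mu$ in $\mathrm{Pic}^{2}$. To see that these two divisors belong to a common $g_{2}^{1}$, I would invoke Riemann--Roch on a general fiber: $h^{0}(\mathcal{O}_{E_{b_{3}}}(\zeta+\tau))=2-1+1=2$, so the degree-two class $\zeta+\tau$ moves in a complete linear system of dimension one, that is a $g_{2}^{1}$. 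Linearly equivalent divisors span the same complete system, so $\zeta+\tau$ and $\upsilon+\mu$ are two members of this single distinguished pencil --- the pencil that later defines the $\mathbb{P}^{1}$-bundle $Q/B_{3}$.

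The fiberwise plane-cubic bookkeeping is routine; the one place that will demand genuine care is the passage from the fiberwise picture to relative statements valid over all of $B_{3}$, and in particular over the singular locus $S_{\mathrm{GUT}}=\{z=0\}$ carrying the $A_{4}$ (and, away from it, $I_{1}$) fibers. There I would need to confirm that the three exhibited linear forms remain global sections with exactly the twists dictated by (\ref{eq:divcl}) and do not vanish identically on any fiber, and that the direct image on $B_{3}$ of $\mathcal{O}_{W_{4}/B_{3}}(\zeta+\tau)$ stays locally free of rank two across the degenerate fibers --- equivalently, that $h^{0}=2$ persists on the arithmetic-genus-one fibers --- so that the $g_{2}^{1}$ globalizes to an actual $\mathbb{P}^{1}$-bundle. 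I expect this relative and degenerate-fiber verification, rather than any individual fiberwise relation, to be the main obstacle.
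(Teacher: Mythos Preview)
Your proposal is correct and follows essentially the same route as the paper: the lemma is presented there as an immediate consequence of the preceding calculations, which are exactly the three line sections you exhibit (the flex line $w=0$, the secant $x-z^{2}w=0$, and the tangent at $\upsilon$ defining $\mu$). If anything you are more careful than the paper, which does not spell out the Riemann--Roch step for the $g_{2}^{1}$ or address the behavior over $S_{\mathrm{GUT}}$; your caution about the relative picture and the degenerate fibers is well placed but not pursued in the source.
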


\subsection{The quotient Calabi-Yau manifolds}

In order to wrap a Wilson line, we must require that involution $\tilde{\beta}_{4}/\beta_{3}$
in \eqref{b4} with quotient $W_{4}^{\vee}/B_{3}^{\vee}$ be such
that $\beta_{3}$ acts fixpoint-freely on the smooth anti-canonical
divisor $S_{\mathrm{GUT}}\subseteq B_{3}$ yielding an Enriques surface
\[
S_{\mathrm{GUT}}^{\vee}\subseteq B_{3}^{\vee}
\]
and that the induced involution $\tilde{\beta}_{3}/\beta_{2}$ on
the dual Heterotic with quotient $V_{3}^{\vee}/B_{2}^{\vee}$ be such
that $\tilde{\beta}_{3}$ acts freely. In order that $W_{4}^{\vee}$
be Calabi-Yau, the involution $\beta_{3}$ on $B_{3}=\mathbb{P}_{\left[u_{0},v_{0}\right]}\times B_{2}$
must have only finite fixpoint-set since $S_{\mathrm{GUT}}$ is an
ample divisor in the Fano manifold $B_{3}$. This forces $\beta_{3}$
to act as $\left(-1\right)$ on the meromorphic two-form on $B_{3}$
with pole on $S_{\mathrm{GUT}}$. That in turn forces $\tilde{\beta}_{4}$
with quotient $W_{4}^{\vee}/B_{3}^{\vee}$ to act as $\left(-1\right)$
on the relative one-form $dx/y$ on $W_{4}/B_{3}$ since otherwise
$W_{4}^{\vee}$ would not be Calabi-Yau. On the other hand, the induced
involution $\tilde{\beta}_{3}/\beta_{2}$ on the Heterotic threefold
$V_{3}/B_{2}$ must act as $\left(+1\right)$ on the relative one-form
$dx/y$ since otherwise $V_{3}^{\vee}$ would not be Calabi-Yau. The
possibility, even necessity, of the sign-reversal is explained in
\cite{Clemens-1}.

\subsection{Three Calabi-Yau fourfolds related by quotienting }

Following \cite{Clemens-2} we use root systems and toric geometry
to actually define three base threefolds and associated Calabi-Yau
fourfolds that we denote by 
\[
\begin{array}{c}
W_{4}^{\wedge}\rightarrow B_{3}^{\wedge}\\
W_{4}\longrightarrow B_{3}=\mathbb{P}_{\left[u_{0},v_{0}\right]}\times B_{2}\\
W_{4}^{\vee}\longrightarrow B_{3}^{\vee}:=B_{3}/\left\{ C_{u,v}\right\} .
\end{array}
\]
$B_{3}^{\wedge}$ is defined to be the resolution of the graph of
the Cremona involution on $\mathbb{P}\left(\mathfrak{h}_{SU\left(5\right)}\right)$
with respect to a basis given by the choice of a system of simple
roots of $SU\left(5\right)$ balanced between positive and negative
Weyl chambers. We have reserved the least cumbersome notation for
the intermediate one $B_{3}$ because, as we have already mentioned,
it is computationally most convenient to work in that setting.

On the $F$-theory side our ultimate target is the (orbifold) Calabi-Yau
fourfold $W_{4}^{\vee}$ with smooth Heterotic dual Calabi-Yau threefold
$V_{3}^{\vee}$ having two bundles with Yang-Mills connections with
structure group symmetry-breaking 
\[
E_{8}\overset{Tate}{\Longrightarrow}SU\left(5\right)_{gauge}\times SU\left(5\right)_{Higgs}\overset{Higgs}{\Longrightarrow}SU\left(5\right)_{gauge}\overset{Wilson-line}{\Longrightarrow}SU(3)_{C}\times SU(2)_{L}\times U(1)_{Y},
\]
as described in \cite{Clemens-2}. Notice that the initial fourfolds
and the involutions do not involve any choice of Weyl chamber. It
is only the crepant resolutions of the Calabi-Yau fourfolds and threefolds
that imply such choices. This is explained in \cite{Clemens-1}.

In order to wrap a Wilson line, we require that $W_{4}/B_{3}$ given
in \eqref{eq:Tate} admit an equivariant involution $\tilde{\beta}_{4}/\beta_{3}$
with quotient $W_{4}^{\vee}/B_{3}^{\vee}$ such that $\beta_{3}$
acts fixpoint-free on the smooth anti-canonical divisor $S_{\mathrm{GUT}}\subseteq B_{3}$
yielding an Enriques surface 
\[
S_{\mathrm{GUT}}^{\vee}\subseteq B_{3}^{\vee}.
\]
In order that $W_{4}^{\vee}$ be Calabi-Yau, the involution $\beta_{3}$
on $B_{3}=\mathbb{P}_{\left[u_{0},v_{0}\right]}\times B_{2}$ must
have only finite fixpoint-set since $S_{\mathrm{GUT}}$ is an ample
divisor in the Fano manifold $B_{3}$. As explained in \cite{Clemens-1}
this forces $\beta_{3}$ to act as $\left(-1\right)$ on the meromorphic
two-form on $B_{3}$ with pole on $S_{\mathrm{GUT}}$.

\subsection{Unfolding the $E_{8}$-singularity}

A basic principle in the mathematics of String Theory is that the
geometric model \eqref{eq:Tate} of $F$-theory must be considered
as having evolved according to the unfolding of the $E_{8}$-surface
singularity 
\[
wy^{2}=x^{3}+a_{0}z^{5}.
\]

In \cite{Clemens-1} we have observed that principle to the letter,
tracing the equivariant crepant resolution implicit in the Tate form
(\ref{eq:Tate}) back to the Brieskorn-Grothendieck equivariant crepant
resolution \cite{Brieskorn,Slodowy} of the semi-universal deformation
of the rational double point singularity \eqref{eq:E8 sing} by requiring
that the section defining $S_{\mathrm{GUT}}$ be given by a formula
\[
z=\sum_{j=2}^{5}\kappa_{j}a_{j}
\]
for generic $\kappa_{j}$.

The assumption \eqref{eq:cond} will, as we will see, reduces the
maximal subgroup decomposition 
\[
\frac{SU\left(5\right)_{gauge}\times SU\left(5\right)_{Higgs}}{\mathbb{Z}_{5}}\subseteq E_{8}
\]
of \cite{Clemens-1} to 
\[
SU\left(5\right)_{gauge}\times U\left(1\right)_{X}\times SU\left(4\right)_{Higgs}\subseteq E_{8}
\]
so that, on the $F$-theory side, one begins with the identification
of maximal tori compatible with the three-dimensional commutative
diagram obtained by pasting the top and bottom morphisms of 
\begin{equation}
\begin{array}{ccc}
\dot{SL}\left(5;\mathbb{C}\right)_{gauge}\times\dot{SL}\left(4;\mathbb{C}\right)_{Higgs}\times\mathbb{C}^{\ast} & \overset{\dot{\kappa}}{\hookrightarrow} & \dot{E}_{8}^{\mathbb{C}}\\
\uparrow &  & \uparrow\\
SU\left(5\right)_{gauge}\times SU\left(4\right)_{Higgs}\times U\left(1\right)_{X} & \hookrightarrow & E_{8}\\
\downarrow &  & \downarrow\\
\ddot{SL}\left(5;\mathbb{C}\right)_{gauge}\times\ddot{SL}\left(4;\mathbb{C}\right)_{Higgs}\times\mathbb{C}^{\ast} & \overset{\ddot{\kappa}}{\hookrightarrow} & \ddot{E}_{8}^{\mathbb{C}}
\end{array}\label{eq:spin}
\end{equation}
to the top and bottom morphisms, respectively, of the commuting diagram
\begin{equation}
\begin{array}{ccc}
\dot{SL}\left(5;\mathbb{C}\right)_{gauge}\times\dot{SL}\left(4;\mathbb{C}\right)_{Higgs}\times\mathbb{C}^{\ast} & \overset{\dot{\kappa}}{\hookrightarrow} & \dot{E}_{8}^{\mathbb{C}}\\
\updownarrow\iota &  & \updownarrow\iota\\
\ddot{SL}\left(5;\mathbb{C}\right)_{gauge}\times\ddot{SL}\left(4;\mathbb{C}\right)_{Higgs}\times\mathbb{C}^{\ast} & \overset{\ddot{\kappa}}{\hookrightarrow} & \ddot{E}_{8}^{\mathbb{C}}
\end{array}\label{eq:iota}
\end{equation}
of real analytic outer complex conjugation involutions. \cite{Clemens-1}
and \cite{Clemens-2} are built around the necessity of a choice of
a positive-negative pair of Weyl chambers of $E_{8}$ with the requirement
that every step of constructions must commute with the passage between
these two chambers. In particular, as we have shown in \cite{Clemens-1},
the equivariant crepant resolution $\tilde{W}_{4}/B_{3}$ of $W_{4}/B_{3}$
depends on the choice of Weyl chamber.

This means that we will have two copies of the crepantly resolved
$\tilde{W}_{4}/B_{3}$ that we designate by letting $\dot{W}_{4}/B_{3}$
denote the $F$-theory model with a choice of positive chamber and
$\ddot{W}_{4}/B_{3}$ with its negative as the choice of positive
Weyl chamber. The action of $\tilde{\beta}_{4}/\beta_{3}$ on $\dot{W}_{4}/B_{3}$
will be a holomorphic involution that acts on roots as the longest
element of the Weyl group $W\left(SU\left(5\right)\right)$ yielding
a Calabi-Yau quotient, and similarly for the quotient of the action
on $\ddot{W}_{4}/B_{3}$. However these quotients are only real-analytically
equivalent, not complex-analytically equivalent. As explained in \cite{Clemens-2}
the exceptional curves over the quotient $S_{\mathrm{GUT}}^{\vee}$
are `flopped' when passing from one to the other. The flop is essentially
invisible on the Heterotic side since it tracks only the real $E_{8}$-bundles
and the `flop' becomes the passage between the two possible complexifications
of the same real $E_{8}$-bundle.

Identifying exceptional components over $S_{\mathrm{GUT}}$ with positive
simple roots forces the involution $\tilde{\beta}_{4}$ to act as
the non-trivial involution on the $A_{4}$-Dynkin diagram, that is,
by the longest element of the Weyl group $W\left(SU\left(5\right)\right)$
on the exceptional divisors of the crepant resolution of $W_{4}/B_{3}$
. Again as shown in \cite{Clemens-1}, it is the commutativity of
the geometric involutions $\tilde{\beta}_{4}/\beta_{3}$ and $\tilde{\beta}_{3}/\beta_{2}$
with the complex conjugate involution $\iota$ in \eqref{eq:iota}
that allows us to incorporate both in the simultaneous quotienting
on both the Heterotic and $F$-theory models that preserves initial
$E_{8}$-symmetry and subsequent $SU\left(5\right)$-symmetry since
\eqref{eq:iota} acts trivially on $SU\left(5\right)$ and $E_{8}$.

\section{The spectral divisor\label{sec:The-spectral-divisor}}
\begin{flushleft}
The crepant resolution $\tilde{W}_{4}/B_{3}$ of $W_{4}/B_{3}$ will
have $I_{5}$-type fibers over generic points of 
\[
S_{\mathrm{GUT}}:=\left\{ z=0\right\} \subseteq B_{3}.
\]
This $I_{5}$-fibration over $S_{\mathrm{GUT}}$ carries the $SU\left(5\right)_{gauge}$-symmetry.
On the other hand, $SU\left(5\right)_{Higgs}$-symmetry is broken
on a five-sheeted branched covering of $B_{3}$ given by the lift
of 
\begin{equation}
\mathcal{C}_{Higgs}:=W_{4}\text{·}\left(\left\{ wy^{2}=x^{3}\right\} -\left\{ w^{4}=0\right\} \right)\label{eq:specdiv}
\end{equation}
to a divisor $\tilde{\mathcal{C}}_{Higgs}\subseteq\tilde{W}_{4}$.
Its symmetry is broken by assigning non-trivial eigenvalues to the
fundamental representation $SU\left(5\right)_{Higgs}$ using the spectral
construction with respect to the push-forward to $B_{3}$ of a line
bundle $\mathcal{L}_{Higgs}$ on $\tilde{\mathcal{C}}_{Higgs}$. We
see this as follows. 
\par\end{flushleft}

\begin{flushleft}
We form 
\[
\hat{P}:=\mathbb{P}\left(\mathcal{O}_{B_{3}}\oplus\mathcal{O}_{B_{3}}\left(N\right)\oplus\mathcal{O}_{B_{3}}\left(2N\right)\oplus\mathcal{O}_{B_{3}}\left(3N\right)\right)
\]
using the fiber coordinate $t$ for $\mathcal{O}_{B_{3}}\left(N\right)$.
The natural projection 
\begin{equation}
\hat{P}\dashrightarrow P\label{eq:ratmp}
\end{equation}
with center $\mathbb{P}\left(\mathcal{O}_{B_{3}}\left(N\right)\right)$
is defined except along the section where $x=y=w=0$. $\hat{P}$ contains
the smooth five-dimensional incidence hypersurface $Y$ given by the
equation 
\[
\left|\begin{array}{cc}
x & y\\
w & t
\end{array}\right|=0,
\]
thereby forming a smooth quadric hypersurface over $B_{3}$ with distinguished
section given by $x=y=w=0$. So the restriction of (\ref{eq:ratmp})
to $Y$ is defined except along the section where it spreads the exceptional
locus over the linear locus $\left\{ x=0\right\} $ in $P$. The result
is a birational morphism 
\[
\hat{Y}\rightarrow P
\]
that maps the exceptional locus over the section isomorphically onto
$\left\{ x=0\right\} \subseteq P$ and blows down the linear loci
$\left\{ x=y=0\right\} \subseteq Y$ and $\left\{ x=w=0\right\} \subseteq Y$. 
\par\end{flushleft}

\begin{flushleft}
Ignoring what happens over $\left\{ w=0\right\} $, that is, setting
$w=1$ and using $t$ as the affine fiber we obtain the equation 
\[
t^{2}x^{2}=x^{3}+a_{5}x^{2}t+a_{4}zx^{2}+a_{3}z^{2}xt+a_{2}z^{3}x+a_{0}z^{5}
\]
as the defining equation for the hypersurface $\hat{W}_{4}-\left\{ w=0\right\} \subseteq\hat{Y}-\left\{ w=0\right\} $.
Using 
\[
\begin{array}{c}
x=wt^{2}\\
y=wt^{3}
\end{array}
\]
the divisor given by the intersection with $\hat{W}_{4}$ then has
equation 
\begin{equation}
a_{5}t^{5}+a_{4}t^{4}z+a_{3}t^{3}z^{2}+a_{2}t^{2}z^{3}+a_{0}z^{5}=0.\label{eq:speceq2-1}
\end{equation}
This is the equation on the affine set $w=1$ of the intersection
of the proper transform of $W_{4}$ with the locus given by 
\begin{equation}
\left\{ y^{2}=x^{3}\right\} \label{eq:spec}
\end{equation}
and is called the \textit{spectral divisor}. The spectral divisor,
in particular, contains the singular locus of $W_{4}$. The condition
\eqref{eq:cond} implies that homogeneous form in \eqref{eq:speceq2-1}
is divisible by $z-t$, that is, the spectral divisor admits a $\left(4+1\right)$
factorization. The involution $\tilde{\beta}_{4}/\beta_{3}$ takes
\[
\left(z-t\right)\mapsto\left(t-z\right)
\]
and leaves \eqref{eq:speceq2-1} invariant. 
\par\end{flushleft}

\begin{flushleft}
Said otherwise, since $W_{4}$ is smooth except over $\left\{ z=0\right\} \subseteq B_{3}$,
the proper transform $\hat{W}_{4}\subseteq\hat{Y}$ of $W_{4}$ blows
up the codimension-$2$ subvariety 
\[
\left\{ z=t=0\right\} 
\]
by forming the incidence fivefold 
\[
\tilde{Y}=\left\{ \left|\begin{array}{cc}
z & t\\
\tilde{z} & \tilde{t}
\end{array}\right|=0\right\} \subseteq Y\times\mathbb{P}_{\left[\tilde{t},\tilde{z}\right]}
\]
with exceptional locus $\mathbb{P}_{\left[\tilde{t},\tilde{z}\right]}\times S_{\mathrm{GUT}}$.
The proper transform $\hat{W}_{4}$ of $W_{4}$ intersects this exceptional
locus in the hypersurface 
\[
\mathcal{D}\subseteq\mathbb{P}_{\left[\tilde{t},\tilde{z}\right]}\times B_{3}.
\]
given by the equation 
\begin{equation}
0=a_{5}\tilde{t}^{5}+a_{4}\tilde{z}\tilde{t}^{4}+a_{3}\tilde{z}^{2}\tilde{t}^{3}+a_{2}\tilde{z}^{3}\tilde{t}^{2}+a_{0}\tilde{z}^{5}.\label{eq:speceq-2}
\end{equation}
(Compare \eqref{eq:speceq2-1} with \eqref{eq:speceq-2}.) 
\par\end{flushleft}

\begin{flushleft}
It is immediate to check that the sections $\zeta$ and $\tau$ both
lie in $W_{4}/B_{3}$ and both lie in 
\begin{equation}
SP:=\left\{ \left|\begin{array}{cc}
x^{3} & 1\\
wy^{2} & 1
\end{array}\right|=0\right\} \subseteq P.\label{eq:seccond}
\end{equation}
For fiber coordinate $t$ for $SP$ with 
\par\end{flushleft}

\[
\begin{array}{c}
wt^{2}:=x\\
wt^{3}:=y
\end{array}
\]
the value $t=z$ gives the section $\tau=\left\{ \left[w,x,y\right]=\left[1,z^{2},z^{3}\right]\right\} $
of $W_{4}\cap SP$. The cohomology class of 
\[
SP\cap W_{4}
\]
is given by 
\[
c_{1}\left(\mathcal{O}_{P}\left(3\right)\right)^{2}.
\]

One sees easily that $W_{4}$ and \eqref{eq:spec} have contact of
order $4$ along $\left(\zeta\right)$ and order $1$ along $\left(\tau\right)$.
Given $b_{3}\in\left(B_{3}-S_{\mathrm{GUT}}\right)$, we denote by
$SP_{\left(4\right)}\left(b_{3}\right)$ the other four points in
which $\left.W_{4}\right|_{\pi^{-1}\left(b_{3}\right)}$ intersects
$SP$. We denote the closure in $P$ of the locus $\cup_{b_{3}\in\left(B_{3}-S_{\mathrm{GUT}}\right)}SP_{\left(4\right)}\left(b_{3}\right)$
as 
\begin{equation}
SP_{\left(4\right)}\subseteq SP\cap W_{4}\label{eq:specvar}
\end{equation}
that since $0=a_{0}+a_{2}+a_{3}+a_{4}+a_{5}$ has equation 
\begin{equation}
\begin{array}{c}
0=a_{5}\tilde{t}^{5}+a_{4}\tilde{t}^{4}\tilde{z}+a_{3}\tilde{t}^{3}\tilde{z}^{2}+a_{2}\tilde{t}^{2}\tilde{z}^{3}+a_{0}\tilde{z}^{5}=\\
\left(a_{5}\tilde{t}^{4}+a_{54}\tilde{t}^{3}\tilde{z}-a_{20}\tilde{t}^{2}\tilde{z}^{2}-a_{0}\tilde{z}^{3}\left(\tilde{t}+\tilde{z}\right)\right)\left(\tilde{t}-\tilde{z}\right)=\\
\left(\left(\tilde{t}+\tilde{z}\right)\left(a_{5}\tilde{t}^{3}+a_{4}\tilde{t}^{2}\tilde{z}-a_{0}\tilde{z}^{3}\right)-a_{420}\tilde{t}^{2}\tilde{z}^{2}\right)\left(\tilde{t}-\tilde{z}\right)
\end{array}\label{eq:preHiggs}
\end{equation}
where $a_{jk}=a_{j}+a_{k}$, etc.\footnote{Often this equation is written in terms of the variable $s=\tilde{t}/\tilde{z}$
so that $\left(\tilde{t}-\tilde{z}\right)=0$ becomes $s-1=0$, thereby
eliminating $\mathbf{10}{}_{\left\{ -4\right\} }$ representations,
as in (70) of \cite{Blumenhagen}.} 
\begin{flushleft}
Thus the spectral divisor 
\begin{equation}
\mathcal{C}_{Higgs}=\mathcal{C}_{Higgs}^{\left(4\right)}+\mathcal{C}_{Higgs}^{\left(1\right)}\label{eq:Higgs decomp}
\end{equation}
is the image of $\mathcal{D}=\mathcal{D}^{\left(4\right)}+\mathcal{D}^{\left(1\right)}$
in $\hat{W}_{4}$.\footnote{By \eqref{eq:sGUT} below, the $\mathcal{D}^{\left(1\right)}$-component
projects to the invariant global section $\upsilon$ of $W_{4}/B_{3}$
given in Lemma \ref{lem:The-sections-,Mordell}, therefore yielding
a global $U\left(1\right)_{X}$. } Thus the involution $\tilde{\beta}_{4}/\beta_{3}$ preserves \eqref{eq:Higgs decomp}.
Using that $a_{54320}=0$, the Higgs curve, that will be important
throught this paper, is defined as the image in $S_{\mathrm{GUT}}$
of the common solutions to the two equations 
\begin{equation}
\begin{array}{c}
a_{5}\tilde{t}^{4}-a_{20}\tilde{t}^{2}\tilde{z}^{2}-a_{0}\tilde{z}^{4}=0\\
a_{54}\tilde{t}^{2}-a_{0}\tilde{z}^{2}=0.
\end{array}\label{eq:Higgspush}
\end{equation}
Writing \eqref{eq:Higgspush} as two equations in the variable $\tilde{t}^{2}/\tilde{z}^{2}$,
the solution set doubly covers the surface in $B_{3}$ defined by
the resultant equation obtained by substituting 
\[
\frac{\tilde{t}^{2}}{\tilde{z}^{2}}=\frac{a_{0}}{a_{54}}
\]
in the first equation to obtain 
\[
a_{5}\left(\frac{a_{0}}{a_{54}}\right)^{2}-a_{20}\left(\frac{a_{0}}{a_{54}}\right)-a_{0}=0
\]
that, again using $a_{54320}=0$, reduces to 
\begin{equation}
a_{0}\text{·}\left|\begin{array}{cc}
a_{4} & -a_{5}\\
a_{3}+a_{0} & a_{3}
\end{array}\right|=0\label{eq:receive}
\end{equation}
with branch locus defined by the restriction of the divisor class
$N$. 
\par\end{flushleft}

\subsection{Adjusting the $E_{8}$-evolution}

Because the image $\left(\tau\right)$ of our second section $\tau$
actually coincides with $\mathcal{C}_{Higgs}^{\left(1\right)}$ in
our $F$-theory model $SU\left(5\right)_{gauge}\times SU\left(5\right)_{Higgs}$
will be actually replaced by

\begin{equation}
SU\left(5\right)_{gauge}\times\left(SU\left(4\right)_{Higgs}\times U\left(1\right)_{X}\right)\label{eq:secimp}
\end{equation}
with maximal torus comprising a maximal torus in $E_{8}$ and so a
vector space isomorphism of Cartan subalgebras 
\[
\mathcal{\mathfrak{h}}_{su\left(5\right)}\oplus\mathcal{\mathfrak{h}}_{su\left(4\right)}\oplus\mathcal{\mathfrak{h}}_{u\left(1\right)}\overset{\cong}{\longrightarrow}\mathcal{\mathfrak{h}}_{e_{8}}
\]
and associated commutative diagram 
\[
\begin{array}{ccc}
SU\left(5\right)\times SU\left(4\right)\times U\left(1\right) & \overset{Ad_{SU\left(5\right)}\times Ad_{SU\left(4\right)}\times Ad_{U\left(1\right)}}{\longrightarrow} & GL\left(su\left(5\right)\times su\left(4\right)\times u\left(1\right)\right)\\
\downarrow &  & \downarrow\\
E_{8} & \overset{Ad_{E_{8}}}{\longrightarrow} & GL\left(e_{8}\right).
\end{array}
\]

Therefore the restriction of $E_{8}\overset{Ad_{E_{8}}}{\longrightarrow}Aut\left(e_{8}\right)$
to $\left(SU\left(5\right)\right)_{gauge}\times\left(SU\left(4\right)_{Higgs}\times U\left(1\right)_{X}\right)$
decomposes as in (88) of \cite{Blumenhagen} into 
\begin{equation}
\begin{array}{c}
\left(\mathbf{15},\mathbf{1}\right)_{0}\oplus\\
\left(\mathbf{1},\mathbf{1}\right)_{0}\oplus\left(\mathbf{1},\mathbf{10}\right)_{-4}\oplus\left(\mathbf{1},\mathbf{\overline{10}}\right)_{4}\oplus\left(\mathbf{1},\mathbf{24}\right)_{0}\\
\oplus\left(\mathbf{4},\mathbf{1}\right)_{5}\oplus\left(\mathbf{4},\mathbf{\overline{5}}\right)_{-3}\oplus\left(\mathbf{4},\mathbf{10}\right)_{1}\\
\oplus\left(\mathbf{\overline{4}},\mathbf{1}\right)_{-5}\oplus\left(\mathbf{\bar{4}},5\right)_{3}\oplus\left(\mathbf{4},\overline{\mathbf{10}}\right)_{-1}\\
\oplus\left(\mathbf{6},\mathbf{5}\right)_{-2}\oplus\left(\mathbf{6},\mathbf{\overline{5}}\right)_{2}
\end{array}\label{eq:Ad}
\end{equation}
Here $\mathbf{5}$ denotes the standard matrix representation 
\[
SU\left(5\right)\rightarrow GL\left(\mathbb{C}^{5}\right),
\]
$\mathbf{10}$ denotes the induced representation 
\[
SU\left(5\right)\rightarrow GL\left(\wedge^{2}\mathbb{C}^{5}\right)
\]
and the 'bar' indicates the conjugate representation obtained by composing
with the inverse map on $SU\left(5\right)$. Analogously for $SU\left(4\right)$.

\section{Semi-stable degeneration and $V_{3}/B_{2}$\label{sec:Semi-stable-degeneration-and}}

To link a crepant resolution $\tilde{W}_{4}/B_{3}$ to its Heterotic
dual $V_{3}/B_{2}$, recall that $B_{3}=B_{2}\times\mathbb{P}_{\left[u_{0},v_{0}\right]}$
so that one can realize $\tilde{W}_{4}/B_{3}$ as $W_{4,1}/B_{3,1}$
in a family of elliptic Calabi-Yau $4$-folds $W_{4,\delta}/B_{3,\delta}$
constructed over 
\[
B_{3,\delta}=B_{2}\times\left\{ a'b'=\delta\text{·}a''b''\right\} \subseteq B_{2}\times\mathbb{P}_{\left[a',a''\right]}\times\mathbb{P}_{\left[b',b''\right]}
\]
as follows. Using the affine parameters
\begin{equation}
\begin{array}{c}
\left[a',a''\right]=\left[a,1\right]\\
\left[b',b''\right]=\left[b,1\right]
\end{array}\label{eq:affcoords}
\end{equation}
 we define the affine family
\begin{equation}
B_{2}\times\left\{ ab=\delta:0\text{\ensuremath{\le}}\delta\text{\ensuremath{\le}}1\right\} \subseteq B_{2}\times\mathbb{C}^{2}\label{eq:deg1}
\end{equation}
and identify $\mathbb{P}_{\left[u_{0},v_{0}\right]}$ with the closure
$\left\{ a'b'=a''b''\right\} $ of $\left\{ ab=1\right\} $ by the
rule
\begin{equation}
\frac{u_{0}-v_{0}}{u_{0}+v_{0}}=a=b^{-1}.\label{deg2}
\end{equation}
That is
\begin{equation}
\begin{array}{c}
\left|\begin{array}{cc}
u_{0}-v_{0} & u_{0}+v_{0}\\
a' & a''
\end{array}\right|=0\\
\\
\left|\begin{array}{cc}
u_{0}-v_{0} & u_{0}+v_{0}\\
b'' & b'
\end{array}\right|=0
\end{array}\label{eq:flow}
\end{equation}
\begin{itemize}
\item In this way we identify
\[
\tilde{W}_{4}/B_{3}/B_{2}
\]
as an elliptic fibration over the closure $B_{2}\times\left\{ a'b'=a''b''\right\} $
of $B_{2}\times\left\{ ab=1\right\} $. Also as in \cite{Clemens-2}
as coefficients in the Tate form \eqref{eq:Tate} for $W_{4}/B_{3}$
we require that
\begin{equation}
a_{2},\,a_{3},\,a_{4},\,a_{5},\,z,\,t=\frac{y}{x}\in H^{0}\left(K_{B_{3}}^{-1}\right)^{\left[-1\right]},\label{eq:sections}
\end{equation}
(that is, anti-invariant under the involution $\beta_{3}$), in other
words they must be linear combinations of the forms listed in Table
2 in \cite{Clemens-2}. Furthermore also as described via Table 2
in \cite{Clemens-2} the four-dimensional family of forms spanned
by $a_{2},a_{3},a_{4},a_{5}\in H^{0}\left(K_{B_{3}}^{-1}\right)^{\left[-1\right]}$
should deform as
\begin{equation}
a_{j,\delta}=\delta\text{·}a_{j}+a_{j,0}\label{eq:contract}
\end{equation}
where the $a_{j,0}$ lie in the three-dimensional $\left(-i\right)$-eigenspace
of the $\mathbb{Z}_{4}$ \textbf{R}-symmetry $T_{u,v}$, again as
given in Table 2 in \cite{Clemens-2}. Finally
\[
z_{\delta}:=\delta z+z_{0}
\]
where $z_{0}:=q\text{·}\left(u_{0}^{2}-v_{0}^{2}\right)$ is skew-invariant
under $\beta_{3}$ so that $q\in H^{0}\left(K_{B_{2}}^{-1}\right)$
must be invariant under the involution
\[
B_{2}\overset{\beta_{2}}{\longrightarrow}B_{2}.
\]
As we again showed in \cite{Clemens-2}, 
\[
B_{2}=D_{2}
\]
is a particular degree-$2$ del Pezzo surface on which the involution
$\beta_{2}$ on $B_{2}$ acts with four fixpoints.
\end{itemize}
Thus $B_{2,\delta}=B_{2}$ for all $\delta$ and 
\[
\begin{array}{c}
B_{3,1}=B_{3}\\
B_{3,0}=\left(B_{2}\times\mathbb{P}_{\left[a',a''\right]}\right)\cup\left(B_{2}\times\mathbb{P}_{\left[b',b''\right]}\right)
\end{array}
\]
with
\[
\left(B_{2}\times\mathbb{P}_{\left[a',a''\right]}\right)\cap\left(B_{2}\times\mathbb{P}_{\left[b',b''\right]}\right)=\left(B_{2}\times\left\{ \left[0,1\right]\right\} \right)\cup\left(B_{2}\times\left\{ \left[0,1\right]\right\} \right)
\]
and applying the relations \eqref{eq:flow}. $a_{2},\,a_{3},\,a_{4},\,a_{5}$
are required to have no common zeros on $B_{3,\delta}$ for $\delta$
on a small complex disk around $\delta=0$. 

The Tate form \eqref{eq:Tate} and the action $\beta_{3}$ on $B_{3}$
as described in the tables in Section 4 of \cite{Clemens-2} then
determine equivariant involutions $\tilde{\beta}_{4}/\beta_{3}$ on
$W_{4}/B_{3}$. For generic choice of $z,$ $S_{\mathrm{GUT}}$ will
not contain any of the eight fixpoints of $\beta_{3}$ acting on $B_{3}$
so $\beta_{3}$ will act freely on $S_{\mathrm{GUT}}$ yielding a
smooth Enriques surface as quotient. The sections $a_{j,\delta},z_{\delta}\in H^{0}\left(K_{B_{3}}^{-1}\right)$
in the Tate form will be allowed to vary under the contraction \eqref{eq:contract}
and in particular the discriminant component $S_{\mathrm{GUT},\delta}\subseteq B_{3,\delta}$
varies as defined by 
\[
z_{\delta}:=\delta\text{·}z+\left(1-\delta\right)q\text{·}\left(u_{0}^{2}-v_{0}^{2}\right)=0
\]
with $\beta_{2}$-invariant $q\in H^{0}\left(K_{B_{2}}^{-1}\right)$. 

\subsection{Degeneration of a single $K3$-surface\label{subsec:Degeneration-of-a}}

Over each point $\left(a,b\right)\in\mathbb{C}^{2}$ associate the
Weierstrass form 
\begin{equation}
y^{2}=x^{3}+g{}_{2}\left(a,b\right)x+g{}_{3}\left(a,b\right)\label{04'}
\end{equation}
where $g_{2}$ is homogeneous of total degree $4$ and $g_{3}$ homogeneous
of total degree $6$. Restrict the Weierstrass form to the locus 
\begin{equation}
\Gamma_{\delta}:=\left\{ \left(a,b\right):a\text{·}b=\delta\right\} \label{05}
\end{equation}
so that for $\delta\neq0$, the discriminant has degree $24$. As
shown below compactifying at infinity yields a $K3$-surface elliptically
fibered over the closure of $\Gamma_{\delta}$ and setting $\delta=0$
yields the union of two $dP_{9}$'s meeting over $\left\{ a=b=0\right\} $.
For fixed value $b_{2}\in B_{2}$ the fiber $\pi_{B_{2}}^{-1}\left(b_{2}\right)$
has homogeneous linear coordinates $\left[u_{0},v_{0}\right]$ and
$\pi_{B_{3}}^{-1}\left(\pi_{B_{2}}^{-1}\left(b_{2}\right)\right)$
is a $K3$-surface in $W_{4}$. Since $g_{2}$ and $g_{3}$ in \eqref{eq:Wform}
must be homogeneous forms of degree 4 and $6$ respectively in $\left[u_{0},v_{0}\right]$,
dividing \eqref{eq:Wform} by $\left(u_{0}+v_{0}\right)^{6}$ gives
the affine equation 
\begin{equation}
y^{2}=x^{3}+g_{2}\left(a\right)x_{0}+g_{3}\left(a\right)\label{D}
\end{equation}
where 
\[
\begin{array}{c}
a:=\frac{u_{0}-v_{0}}{u_{0}+v_{0}}\\
x:=\frac{x_{nf}}{\left(u_{0}+v_{0}\right)^{2}}\\
y:=\frac{y_{nf}}{\left(u_{0}+v_{0}\right)^{3}}.
\end{array}
\]
On the $\mathbb{P}^{1}$-fiber over $b_{2}\in B_{2}$ we obtain the
Weierstrass form 
\begin{equation}
y^{2}=x^{3}+g_{2}\left(b_{2},a\right)x+g_{3}\left(b_{2},a\right)\label{ready}
\end{equation}
where $g_{2}\left(b_{2},a\right)$ is a rational function of $a$
with denominator of degree $4$ and $g_{3}\left(b_{2},a\right)$ can
be expressed as a rational function of $a$ with denominator of degree
$6$.

Thus one can write a global decomposition of \eqref{ready} with 
\[
\begin{array}{c}
g_{2}\left(b_{2},a\right)=\sum_{j=1}^{4}g_{2,j}^{-}\left(b_{2}\right)a^{-j}+g_{2,0}\left(b_{2}\right)+\sum_{j=1}^{4}g_{2,j}^{+}a^{j}\\
g_{3}\left(b_{2},a\right)=\sum_{j=1}^{6}g_{3,j}^{-}\left(b_{2}\right)a^{-j}+g_{3,0}\left(b_{2}\right)+\sum_{j=1}^{6}g_{3,j}^{+}a^{j}.
\end{array}
\]
Then, letting $b=a^{-1}$, we can equivalently write 
\[
\begin{array}{c}
g_{2}\left(b_{2};a,b\right)=\sum_{j=1}^{4}g_{2,j}^{-}\left(b_{2}\right)b^{j}+g_{2,0}\left(b_{2}\right)+\sum_{j=1}^{4}g_{2,j}^{+}a^{j}\\
g_{3}\left(b_{2};a,b\right)=\sum_{j=1}^{6}g_{3,j}^{-}\left(b_{2}\right)b^{j}+g_{3,0}\left(b_{2}\right)+\sum_{j=1}^{6}g_{3,j}^{+}a^{j}.
\end{array}
\]
where the $g_{2}$ and $g_{3}$ are the functions on the curve $\left\{ ab=1\right\} $
over the point $b_{2}$ in $B_{2}$.

Then for any $b_{2}\in B_{2}$ and any $\left(a,b\right)\in\mathbb{C}^{2}$
we can write the Weierstrass form 
\[
\begin{array}{c}
y^{2}=\frac{1}{2}x^{3}+\left(\frac{g_{2,0}\left(b_{2}\right)}{2}+\sum_{j=1}^{4}g_{2,j}^{+}a^{j}\right)x+\left(\frac{g_{3,0}\left(b_{2}\right)}{2}+\sum_{j=1}^{6}g_{3,j}^{+}b^{j}\right)\\
+\frac{1}{2}x^{3}+\left(\frac{g_{2,0}\left(b_{2}\right)}{2})\sum_{j=1}^{4}g_{2,j}^{-}\left(b_{2}\right)b^{j}\right)x+\left(\frac{g_{3,0}\left(b_{2}\right)}{2}+\sum_{j=1}^{6}g_{3,j}^{-}\left(b_{2}\right)b^{j}\right).
\end{array}
\]

\subsection{$dP_{9}$-bundles over $B_{2}$ giving a singular Calabi-Yau fourfold\label{subsec:-bundles-over-}}

As in Section \ref{subsec:Degeneration-of-a} we consider $\left(\mathbb{P}^{1}\right)^{2}$
with coordinates $\left(\left[a',a''\right],\left[b',b''\right]\right)$
as a fiber of the total space 
\[
\mathbb{P}\left(K_{B_{2}}^{-1}\oplus K_{B_{2}}^{-1}\right)\times_{B_{2}}\mathbb{P}\left(K_{B_{2}}^{-1}\oplus K_{B_{2}}^{-1}\right).
\]
The union of our two $dP_{9}$-bundles is then given as the subspace
of defined by \eqref{eq:goodDP} and the equation 
\[
a'b'=0.
\]
Returning to our fibration 
\[
\pi_{B_{2}}:B_{3}\rightarrow B_{2},
\]
for each fiber $\pi_{B_{3}}^{-1}\left(b_{3}\right)$ of $\tilde{W}_{4}/B_{3}$
we will associate $\mathit{two}$ copies of the Weierstrass equation,
namely the one distinguished by designating $\tilde{\zeta}\left(b_{3}\right)$
as the identity element of the group structure and the other distinguished
by designating $\tilde{\tau}\left(b_{3}\right)$ as the identity element
of the group structure. The two fibers $\pi_{B_{3}}^{-1}\left(b_{3}\right)$
and $\pi_{B_{3}}^{-1}\left(\beta\left(b_{3}\right)\right)$ are identified
under the isomorphism induced by the involution $\tilde{\beta}_{4}$
induced by the involution $\beta_{3}$ on $B_{3}$. However, as will
become clear in Section \ref{sec:Geometric-Model-Double-Cover} the
line bundle 
\begin{equation}
\mathcal{O}_{\pi_{B_{3}}^{-1}\left(b_{3}\right)}\left(\tilde{\zeta}\left(b_{3}\right)-\tilde{\tau}\left(b_{3}\right)\right)\label{eq:lb}
\end{equation}
is not trivial over any $b_{3}\in S_{\mathrm{GUT}}$. As we have seen
in \eqref{eq:Wnf}, the identification acts on sheaves $\mathcal{F}$
on the elliptic curve in Weierstrass form 
\begin{equation}
wy_{nf}^{2}=x_{nf}^{3}+g_{2}w^{2}x_{nf}+w^{3}\label{eq:Wform}
\end{equation}
where $g_{2}\left(b_{3}\right)=\left(Bz^{2}-\frac{A^{2}}{3}\right)$
and $g_{3}\left(b_{3}\right)=\left(Cz^{4}-\frac{AB}{3}z^{2}+\frac{2A^{3}}{27}\right)$.
The action on $\mathcal{F}$ is given on $\pi_{B_{3}}^{-1}\left(b_{3}\right)$
by 
\begin{equation}
\mathcal{F\mapsto F}\otimes\mathcal{O}_{\pi_{B_{3}}^{-1}\left(b_{3}\right)}\left(\tilde{\zeta}\left(b_{3}\right)-\tilde{\tau}\left(b_{3}\right)\right)\label{eq:paste}
\end{equation}
so that for each $b_{2}\in B_{2}$ the $K3$-surface $\left(\pi_{B_{2}}\circ\pi_{B_{3}}\right)^{-1}\left(b_{2}\right)$
can be obtained as the smoothing the union of two $dP_{9}$'s described
in Subsubsection \ref{subsec:Degeneration-of-a}. That is, we realize
$\tilde{W}_{4}/B_{2}$ as the smoothing of two $dP_{9}$-bundles 
\begin{equation}
dP_{a}\cup dP_{b}\label{eq:DP}
\end{equation}
over $B_{2}$.

Simultaneously, via 
\[
S_{\mathrm{GUT},\delta}:=\left\{ \delta z+\left(1-\delta\right)z_{0}=0\right\} \subseteq B_{2},
\]
we move $S_{\mathrm{GUT}}$ to the reducible quadric $z_{0}$ given
by 
\begin{equation}
z_{0}=\left(u_{0}^{2}-v_{0}^{2}\right)\cdot q\left(u_{1},v_{1},u_{2},v_{2}\right)\label{eq:ssdq}
\end{equation}
where $q\left(u_{1},v_{1},u_{2},v_{2}\right)$ is invariant under
the action of the involution $\beta_{2}$ on $B_{2}$. So at $\delta=0$
$S_{\mathrm{GUT}}$ splits into two `horizontal' components given
by $u_{0}-v_{0}=0$ and $u_{0}+v_{0}=0$ and a `vertical' component
given by $\pi_{B_{2}}^{-1}\left(\left\{ q\left(u_{1},v_{1},u_{2},v_{2}\right)=0\right\} \right)$.
Notice that, since $q$ is skew-symmetric, it vanishes on the four
fixpoints of the action of $\beta_{2}$ on $B_{2}$. Thus the intersection
\[
S_{\mathrm{GUT},\delta}\cap\pi_{B_{2}}^{-1}\left(fixpoint\,set\,of\,\beta_{2}\right)
\]
is independent of $\delta$. 

From the Weierstrass forms just above, we read off that $\tilde{W}_{4}$
is the smoothing over $ab=1$ of the union of the two $dP_{9}$-bundles
given over each point $b_{2}\in B_{2}$ by 
\begin{equation}
\frac{1}{2}y^{2}=\frac{1}{2}x^{3}+\left(\frac{g_{2,0}\left(b_{2}\right)}{2}+\sum_{j=1}^{8}g_{2,j}^{+}a^{j}\right)x+\left(\frac{g_{3,0}\left(b_{2}\right)}{2}+\sum_{j=1}^{12}g_{3,j}^{+}a^{j}\right)\label{p1}
\end{equation}
and 
\begin{equation}
\frac{1}{2}y^{2}=\frac{1}{2}x^{3}+\left(\frac{g_{2,0}\left(b_{2}\right)}{2}\sum_{j=1}^{8}g_{2,j}^{-}\left(b_{2}\right)b^{j}\right)x_{0}+\left(\frac{g_{3,0}\left(b_{2}\right)}{2}+\sum_{j=1}^{12}g_{3,j}^{-}\left(b_{2}\right)b^{j}\right)\label{p2}
\end{equation}
These two $dP_{9}$'s over $b_{2}$ contain the common fiber of $V_{3}/B_{2}$
whose Weierstass form is 
\[
wy^{2}=x^{3}+w^{2}g_{2,0}\left(b_{2}\right)x+w^{3}g_{3,0}\left(b_{2}\right).
\]
The spectral data on the Heterotic side, namely the two $E_{8}$-bundles
on the fiber of $V_{3}/B_{2}$ over $b_{2}$, are given via the Friedman-Morgan-Witten
classification \cite{Friedman:1997yq} by the two $dP_{9}$-bundles.
Namely in Section 4.2 of \cite{Friedman:1997yq} Friedman-Morgan-Witten
give a classifying space for imbeddings of an elliptic fiber $E_{b_{2}}$
of $V_{3}/B_{2}$ into a rational elliptic surface $dP_{9}\left(b_{2}\right)$,
each such corresponding canonically by a theorem of E. Looijenga \cite{Looijenga}
to an isomorphism class of flat $E_{8}^{\mathbb{C}}$-bundles $F$
over $E_{b_{2}}.$ Considered as fibrations over $B_{2}$, fibers
are so-called $dP_{9}$-surfaces. Setting $\left[s,t\right]=\left[a',a''\right]$,
respectively $\left[s,t\right]=\left[b',b''\right]$ for $s,t$ as
in \cite{Friedman:1997yq}, fibers are given uniquely in $\mathbb{P}_{1,1,2,3}^{3}$
by an equation 
\begin{equation}
y^{2}=4x^{3}-\left(g_{2}t^{4}-\beta_{1}st^{3}-\ldots-\beta_{4}s^{4}\right)x-\left(g_{3}t^{6}-\alpha_{2}s^{2}t^{4}-\ldots-\alpha_{6}s^{6}\right)\label{eq:LFMW}
\end{equation}
of weighted homogeneous degree $6$ in the variables $x,y,s,t$ with
respective weights $2,3,1,1$ over the weighted projective space with
coordinates 
\[
\left[\alpha_{2},\ldots,\alpha_{6},\beta_{1},\ldots,\beta_{4}\right]\in\mathbb{P}_{\left[2,3,4,5,6,1,2,3,4\right]}^{8}
\]
weighted as per their respective indices. 

So to realize the semi-stable degeneration geometrically, the crepant
resolution $\tilde{W}_{4}/B_{3}$ of $W_{4}/B_{3}$ is given over
the locus 
\[
\Gamma_{1}=\left\{ \left(a,b\right):ab=1\right\} \subseteq B_{3}\times\mathbb{C}^{2}.
\]
The deformation in this Section is given by restricting the Weierstrass
form to the locus 
\[
\Gamma_{\delta}=\left\{ \left(a,b\right):ab=\delta\right\} 
\]
as $\delta$ goes to $0$. Furthermore $S_{\mathrm{GUT}}$ deforms
with $\delta$ via the formula 
\begin{equation}
z_{\delta}=\delta\text{·}z+z_{0}.\label{eq:smooth}
\end{equation}
We thereby obtain the family$\left\{ \tilde{W}_{4,\delta}\right\} $
of fourfolds over the affine line $\mathbb{C}_{\delta}^{\ast}$ as
in Subsection \ref{subsec:Degeneration-of-a}.

Then the Heterotic model $\left(V_{3},F_{a},F_{b}\right)$ over $B_{2}$
canonically corresponds to a normal crossing Calabi-Yau $4$-fourfold
with two components $dP_{a}$ and $dP_{b}$ obtained by making the
construction described just above equivariantly over $B_{2}$.

Thus the family $W_{4,\delta}/B_{3,\delta}$ defined by the Tate form
on $B_{3,\delta}$ degenerates as $\delta$ approaches zero to a reducible
Calabi-Yau $4$-fold 
\[
W_{4,0}=dP_{a}\cup dP_{b}
\]
over 
\[
\left(B_{2}\times\mathbb{P}_{\left[a,1\right]}\right)\cup\left(B_{2}\times\mathbb{P}_{\left[b,1\right]}\right)
\]
with 
\[
V_{3}=dP_{a}\cap dP_{b}
\]
giving the Heterotic model $V_{3}/B_{2}$ defined by $a=b=0$. Finally
there are common distinguished sections on every $W_{4,\delta}/B_{3,\delta}$.
defined equivariantly by 
\[
\begin{array}{c}
\zeta_{\delta}=\left\{ w=x=0\right\} \\
\tau_{\delta}=\left\{ \left[w,x,y\right]=\left[w,wz_{\delta}^{2},wz_{\delta}^{3}\right]\right\} .
\end{array}.
\]

The Calabi-Yau threefold 
\[
V_{3}=dP_{a}\cap dP_{b}
\]
is elliptically fibered over $B_{2}\times\left\{ a=b=0\right\} $,
the general fiber of $dP_{a}/B_{2}$ has an $I_{5}$-fiber at $a=\infty$
and the general fiber of $dP_{b}/B_{2}$ has an $A_{4}$-fiber at
$b=\infty$.

\subsection{The involution \label{subsec:The-involution-}}

Then by construction the involution $\tilde{\beta}_{4}/\beta_{3}$
induces involutions $\tilde{\beta}_{4,\delta}/\beta_{3,\delta}$ on
$W_{4,\delta}/B_{3,\delta}$ for all $\delta\text{\ensuremath{\ge}}0$.
Furthermore the above asssumptions force 
\[
\begin{array}{ccc}
W_{4,0}/B_{3,0} & \overset{\tilde{\beta}_{4,0}}{\longrightarrow} & W_{4,0}/B_{3,0}\\
\downarrow &  & \downarrow\\
B_{2} & \overset{\beta_{2}}{\longrightarrow} & B_{2}
\end{array}
\]
to be given by the elliptic fibrations with involutions 
\begin{equation}
\begin{array}{c}
dP_{a}/\left(B_{2}\times\mathbb{P}_{\left[a,1\right]}\right)\overset{\tilde{\beta}_{a}}{\longrightarrow}dP_{9,a}/\left(B_{2}\times\mathbb{P}_{\left[a,1\right]}\right)\\
\left(\left[w,x,y\right],\left(b_{2},\left[a,1\right]\right)\right)\mapsto\left(\left[w,x,-y\right],\left(\beta\left(b_{2}\right),\left[-a,1\right]\right)\right)
\end{array}\label{eq:stacka}
\end{equation}
and 
\begin{equation}
\begin{array}{c}
dP_{9,b}/\left(B_{2}\times\mathbb{P}_{\left[b,1\right]}\right)\overset{\tilde{\beta}_{b}}{\longrightarrow}dP_{9,b}/\left(B_{2}\times\mathbb{P}_{\left[b,1\right]}\right)\\
\left(\left[w,x,y\right],\left(b_{2},\left[b,1\right]\right)\right)\mapsto\left(\left[w,x,-y\right],\left(\beta\left(b_{2}\right),\left[-b,1\right]\right)\right).
\end{array}\label{eq:stackb}
\end{equation}
The action of these involutions over the fixpoints of the action of
$\beta_{2}$ on $B_{2}$ is treated in detail in §2.4 of \cite{Clemens-1}. 

\section{\label{sec:NC}Passing from Heterotic theory to F-theory }

Essentially one passes from the Heterotic model to the $F$-theory
model by reading the Subsections of the previous Chapter in reverse
order and from bottom to top. We can paste one of these $dP_{9}$
along $E$ at $s=0$ and the other along $E$ at $s=0$ to obtain
a normal crossing elliptic $K3$-surface with unobstructed deformation
space thereby joining $dP_{a}^{\vee}$ to $dP_{b}^{\vee}$ to form
a normal crossing Calabi-Yau fourfold. Having that, a theorem of Kawamata-Namikawa
\cite{Kawamata} guarantees that the normal crossing elliptic Calabi-Yau
fourfold has an unobstructed deformation theory. $B_{3}=\mathbb{P}_{\left[u_{0},v_{0}\right]}\times B_{2}$
so that our only choices are the section of $K_{B_{2}}^{-1}$ in the
definition of $z_{0}$ and the smooth section $z$ of $K_{B_{3}}^{-1}$
in \eqref{eq:smooth}.

\subsection{Initial data\label{subsec:The-base}}

In Heterotic theory, we begin with the smooth degree-$2$ del Pezzo
del Pezzo $\left(B_{2},\beta_{2}\right)$ with involution $\beta_{2}$
with $4$ fixpoints constructed in \cite{Clemens-2}. Letting $B_{2}^{\vee}$
denote the quotient, we are given a smooth, elliptically fibered Calabi-Yau
threefold 
\[
V_{3}^{\vee}\rightarrow B_{2}^{\vee}
\]
with Weierstrass form 
\begin{equation}
y^{2}=x^{3}+g_{2}\left(b_{2}\right)x+g_{3}\left(b_{2}\right)\label{W}
\end{equation}
for each $b_{2}\in B_{2}^{\vee}$. This Weierstrass equation lives
at each point $b_{2}$ of a $\mathbb{P}_{\left[w,x,y\right]}$-bundle
over $B_{2}$ with 
\begin{align*}
x & \in H^{0}\left(K_{B_{2}}^{-2}\right)\\
y & \in H^{0}\left(K_{B_{2}}^{-3}\right)\\
g_{2} & \in H^{0}\left(K_{B_{2}}^{-4}\right)\\
g_{3} & \in H^{0}\left(K_{B_{2}}^{-6}\right)
\end{align*}
in order that $V_{3}^{\vee}$ have trivial canonical bundle. has fundamental
group $\mathbb{Z}_{2}$.

Furthermore $V_{3}^{\vee}$ supports two principal $E_{8}$-bundles
\begin{equation}
F_{a}^{\vee}\oplus F_{b}^{\vee}\label{eq:bundles}
\end{equation}
with Yang-Mills connections. Pulling back via
\[
\begin{array}{ccc}
V_{3}=B_{2}\times_{B_{2}^{\vee}}V_{3}^{\vee} & \rightarrow & B_{2}\\
\downarrow &  & \downarrow\\
V_{3}^{\vee} & \rightarrow & B_{2}^{\vee}
\end{array}
\]
we have two $\beta_{2}$-invariant $E_{8}$-bundles, $F_{a}$ and
$F_{b}$ each with a Yang-Mills connection with fiber over $b_{2}\in B_{2}$. 

The smooth, torus-fibered Calabi-Yau threefold $V_{3}^{\vee}/B_{2}^{\vee}$
with fundamental group $\mathbb{Z}_{2}$ is also endowed with two
disjoint sections 
\[
\zeta,\tau:B_{2}^{\vee}\rightarrow V_{3}^{\vee}.
\]
The pull-back of the two sections under 
\[
\pi_{B_{2}^{\vee}}:B_{2}\rightarrow B_{2}^{\vee}
\]
becomes the union of two disjoint sections 
\[
\zeta,\tau:B_{2}\rightarrow V_{3}:=B_{2}\times_{B_{2}^{\vee}}V_{3}^{\vee}.
\]

Also, following §2.4 of \cite{Clemens-1}, the fiber of the smooth
threefold $V_{3}^{\vee}/B_{2}^{\vee}$ over the four orbifold points
of $B_{2}^{\vee}$ must be acquired with multiplicity two, being doubly
covered by the elliptic fibers $E_{b_{2}}$ of $V_{3}/B_{2}$ over
the $\beta_{2}$-fixpoints $b_{2}\in B_{2}$. The covering must be
unbranched via translation by a distinguished half-period. Given the
prior conditions imposed on our model that half-period must be $\mathcal{O}_{E_{b_{2}}}\left(\zeta\left(b_{2}\right)-\tau\left(b_{2}\right)\right)$.
The role of the logarithmic transform over a neighborhood of each
$\beta_{2}$-fixpoint, especially how it induces the action 
\[
\left(b_{2},\left(x,y\right)\right)\mapsto\left(\beta_{2}\left(b_{2}\right),\left(x,y\right)\right)
\]
of $\tilde{\beta}_{3}/\beta_{2}$ on (Weierstrass) fibers of $V_{3}/B_{2}$,
as well as the transition from the $\beta_{2}$-pull-backs 
\begin{equation}
F_{a}\oplus F_{b}\label{eq:bundle2}
\end{equation}
is explained in \cite{Clemens-1}. 

The Yang-Mills connections on the bundles \eqref{eq:bundle2} restrict
to sums of eight flat line bundles on each fiber of $V_{3}/B_{2}$
and are completely determined by that family of restrictions.

\subsection{Building a normal-crossing $K3$ from an elliptic curve with two
flat $E_{8}$-bundles }

As in \eqref{eq:LFMW} the two flat $E_{8}$-bundles on the elliptic
fiber $E_{b_{2}}$ of $V_{3}/B_{2}$ over $b_{2}$ determine two $dP_{9}$-bundles
\[
dP_{a}\left(b_{2}\right)\cup dP_{b}\left(b_{2}\right),
\]
with $E_{b_{2}}$ given in each by 
\[
s=0.
\]

That is, over $B_{2}$ we have the union 
\[
dP_{a}\cup dP_{b}
\]
of two $dP_{9}$-bundles with
\[
dP_{a}\cap dP_{b}=V_{3}.
\]
Since the canonical bundle of this normal-crossing variety is trivial,
the theorem of Kawamata-Namikawa cited above establishes that its
deformation space is unobstructed. In fact we employ the very specific
smoothing determined by \eqref{eq:deg1} and \eqref{eq:contract}.

Finally the homogeneous coordinates $\left[u_{0},v_{0}\right]$ in
\cite{Clemens-2} and the curve \eqref{deg2}, the equations 
\[
\begin{array}{c}
\left|\begin{array}{cc}
u_{0}-v_{0} & u_{0}+v_{0}\\
a' & a''
\end{array}\right|=0\\
\\
\left|\begin{array}{cc}
u_{0}+v_{0} & u_{0}-v_{0}\\
b' & b''
\end{array}\right|=0
\end{array}
\]
allow us as in \eqref{eq:deg1} to form the family of hypersurfaces
\[
B_{3,\delta}\subseteq B_{2}\times\left(\mathbb{P}_{\left[a',a''\right]}\times\mathbb{P}_{\left[b',b''\right]}^{1}\right)\rightarrow B_{2}
\]
with fibers given by 
\begin{equation}
\left\{ a'\text{·}b'=\delta\text{·}a''\text{·}b''\right\} \label{eq:bend}
\end{equation}
with 
\[
K_{B_{3,\delta}}^{-1}=\left.K_{B_{2}}^{-1}\boxtimes\left(\mathcal{O}_{\mathbb{P}^{1}}\left(1\right)\boxtimes\mathcal{O}_{\mathbb{P}^{1}}\left(1\right)\right)\right|_{\left\{ a'\text{·}b'=\delta\text{·}a''\text{·}b''\right\} }=:N_{\delta}.
\]
The forms \eqref{eq:contract} then yield our $F$-theory model $W_{4}/B_{3}$
at $\delta=1$.

\subsection{The action of the involution on the Heterotic model}

In Section 4.2 of \cite{Friedman:1997yq} Friedman-Morgan-Witten gives
a classifying space for imbeddings of an elliptic curve $E$ in Weierstrass
form into $dP_{9}$'s, each corresponding canonically to an isomorphism
class of flat $E_{8}$-bundles 
\[
F\rightarrow E.
\]
The flat $E_{8}$-bundle $F$ is given as the sum of eight flat line
bundles, each given by the divisor of the form
\[
p-e
\]
where $e=\left(\tilde{\zeta}\right)\cap E$ is the identity element
of $E$ considered as an abelian group and $p$ is a geometrically
given point on the torus $E$. Again in \cite{Clemens-1} we follow
Friedman-Morgan-Witten and consider the family of $dP_{9}$-hypersurfaces
\begin{equation}
y^{2}=4x^{3}-\left(g_{2}t^{4}-\beta_{1}st^{3}-\ldots-\beta_{4}s^{4}\right)x-\left(g_{3}t^{6}-\alpha_{2}s^{2}t^{4}-\ldots-\alpha_{6}s^{6}\right)\label{44}
\end{equation}
in $\mathbb{P}_{1,1,2,3}^{3}$ where the $\alpha_{j}$ and $\beta_{j}$
are homogeneous forms of weight $j$ in a $\mathbb{P}_{1,2,2,3,3,4,4,5,6}^{8}$.
Fixing the values of $\alpha_{j}$ and $\beta_{j}$ we think of the
solution set of \eqref{44} as a rational hypersurface in $\mathbb{P}_{1,1,2,3}^{3}$
with distinguished pencil 
\begin{equation}
\gamma s+\delta t=0.\label{444}
\end{equation}
The given elliptic curve $E$ sits in each $dP_{9}$ in \eqref{44}
as the solution set to the equation 
\[
s=0.
\]

If we change the basepoint of the elliptic curve $E_{b_{3}}$ from
$\tilde{\zeta}\left(b_{3}\right)$ to $\tilde{\tau}\left(b_{3}\right)$,
then the trivial bundle goes to itself so that translation by $\left(\tilde{\tau}\left(b_{3}\right)-\tilde{\zeta}\left(b_{3}\right)\right)$
acts trivially on $\mathrm{Pic}^{0}\left(E_{b_{3}}\right)$ and therefore
acts trivially on flat $E_{8}$-bundles on $E_{b_{3}}$. However changing
the identity element on the torus $E$ from $\left(\tilde{\zeta}\right)\cap E$
to $\left(\tilde{\tau}\right)\cap E$ and then applying the Friedman-Morgan-Witten
dictionary gives a different sum of eight flat line bundles, namely
those obtained by divisors
\[
p-e+\left(\left(\tilde{\zeta}\right)\cap E\right)-\left(\left(\tilde{\tau}\right)\cap E\right)
\]
that is, all eight flat line bundles are tensored with the non-trivial
flat line bundle
\[
\mathcal{O}_{\tilde{W}_{4}}\left(\left(\tilde{\zeta}\right)-\left(\tilde{\tau}\right)\right).
\]
\begin{lem}
i) The induced action of the involution $\tilde{\beta}_{4,0}$ on
the union of the two $dP_{9}$-bundles at $\delta=z_{0}=0$ takes
each of the two to itself.

ii) The action of the involution $\tilde{\beta}_{3}$ on the intersection
$V_{3}$ of the two $dP_{9}$-bundles is given by the map 
\[
\left(b_{2};x,y\right)\mapsto\left(\beta_{2}\left(b_{2}\right);x,y\right)+\left(\tilde{\zeta}\left(\beta_{2}\left(b_{2}\right)\right)-\tilde{\tau}\left(\beta_{2}\left(b_{2}\right)\right)\right)
\]
where addition is with respect to the addition law on the elliptic
curve and the (identical) Weierstass forms 
\begin{equation}
\begin{array}{c}
y^{2}=x^{3}+g_{2}\left(b_{2}\right)x+g_{3}\left(b_{2}\right)\\
y^{2}=x^{3}+g_{2}\left(\beta\left(b_{2}\right)\right)x+g_{3}\left(\beta\left(b_{2}\right)\right).
\end{array}\label{hetcurve-1}
\end{equation}

iii) Over fixpoints $b_{2}$ of the action of $\beta_{2}$ on $B_{2}$,
$\tau\left(b_{2}\right)-\zeta\left(b_{2}\right)$ is a non-trivial
half-period on the fiber $E_{b_{2}}$ of $V_{3}/B_{2}$ but the action
of $\tilde{\beta}_{3}$ on the intersection cycle 
\[
4\text{·}\zeta\left(b_{2}\right)+5\text{·}\tau\left(b_{2}\right)
\]
of length nine determining the $E_{8}$-bundles on the fiber is given
by tensoring with the canonical section of the line bundle associated
to the divisor $\zeta\left(b_{2}\right)-\tau\left(b_{2}\right).$ 
\end{lem}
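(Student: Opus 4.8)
The plan is to prove the three assertions by combining the explicit involution formulas \eqref{eq:stacka}, \eqref{eq:stackb} with the Friedman--Morgan--Witten (FMW) basepoint-change dictionary recalled just above the statement. For (i) I would first pin down the action of $\beta_3$ on the $\mathbb{P}_{[u_0,v_0]}$-factor of $B_3$. Since $z_0=(u_0^2-v_0^2)\,q$ is required to be skew-invariant under $\beta_3$ while $q$ is $\beta_2$-invariant (see \eqref{eq:ssdq}), the factor involution can only be the swap $u_0\leftrightarrow v_0$: it sends $u_0-v_0\mapsto-(u_0-v_0)$ and $u_0+v_0\mapsto u_0+v_0$, hence $z_0\mapsto-z_0$, whereas the only other involution type $[u_0,v_0]\mapsto[u_0,-v_0]$ leaves $u_0^2-v_0^2$ invariant and is excluded. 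By \eqref{deg2} the swap induces $a\mapsto-a$ and $b\mapsto-b$, so it preserves each chart of the central fibre $B_{3,0}=(B_2\times\mathbb{P}_{[a',a'']})\cup(B_2\times\mathbb{P}_{[b',b'']})$ rather than exchanging them. Reading this off \eqref{eq:stacka}, \eqref{eq:stackb}, where the two maps fix the loci $a=0$ and $b=0$ respectively, gives at once that $\tilde\beta_{4,0}$ carries $dP_a$ to $dP_a$ and $dP_b$ to $dP_b$, which is (i).

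For (ii) the essential point is the sign discrepancy between \eqref{eq:y} and \eqref{eq:-y}. Restricting \eqref{eq:stacka} to $V_3=\{a=b=0\}$ gives the raw map $(b_2;x,y)\mapsto(\beta_2(b_2);x,-y)$, which is inversion in the $\tilde\zeta$-based group law and therefore acts by $-1$ on $dx/y$; this is $\tilde\beta_4|_{V_3}$, not the Heterotic involution. The Heterotic $\tilde\beta_3$ is instead required to act by $+1$ on $dx/y$, so over each $b_2$ it must be an isomorphism $E_{b_2}\to E_{\beta_2(b_2)}$ preserving the invariant differential; since the two Weierstrass forms in \eqref{hetcurve-1} are pulled back from $B_2^\vee$ and hence identical, such an isomorphism is a translation. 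I would fix the translation vector through the FMW picture: passing between the $\tilde\zeta$-based description carried by $dP_a$ and the $\tilde\tau$-based description carried by $dP_b$ is exactly the basepoint change that, by the dictionary recalled above, tensors every flat line bundle by $\mathcal{O}(\tilde\zeta-\tilde\tau)$, i.e.\ translates by $\tilde\zeta-\tilde\tau$. Transporting the identity from $\tilde\zeta$ to $\tilde\tau$ in this way converts the raw restriction into the translation displayed in the statement, giving (ii).

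For (iii) I would specialize (ii) to a fixpoint $b_2=\beta_2(b_2)$, where $\tilde\beta_3$ restricts to the self-map of $E_{b_2}$ given by translation by $h:=\zeta(b_2)-\tau(b_2)$. That $h$ is a half-period follows from $\tilde\beta_3^{\,2}=\mathrm{id}$: over the fixpoint this squares to translation by $2h$, forcing $2h\equiv0$. Non-triviality is immediate from the disjointness of the two sections, so $\zeta(b_2)\neq\tau(b_2)$, and is in any case forced by the requirement that $\tilde\beta_3$ act freely in order to wrap the Wilson line. Finally, the length-nine cycle $4\,\zeta(b_2)+5\,\tau(b_2)$ is the FMW/Looijenga datum fixing the $E_8$-bundle on $E_{b_2}$ that arises, in the model of \cite{Clemens-2}, from the $(4+1)$ spectral split \eqref{eq:cond}; since $\tilde\beta_3$ acts over the fixpoint precisely by the basepoint change of (ii), its effect on this datum is to tensor by $\mathcal{O}(\zeta(b_2)-\tau(b_2))$, as claimed.

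The delicate step is (ii). The restriction of the geometric involution $\tilde\beta_4$ to $V_3$ is an inversion and so acts by $-1$ on $dx/y$, whereas the Heterotic involution must act by $+1$; this cannot be repaired by composing with a single translation, since an inversion composed with any translation is still odd on $dx/y$. Resolving it requires genuinely re-interpreting the involution through the two distinct group structures on $dP_a$ and $dP_b$ and the logarithmic transform of \cite{Clemens-1}. Making precise that this re-interpretation is implemented by the FMW basepoint change, and that the resulting translation is exactly $\tilde\zeta-\tilde\tau$ and not some other half-period, is where the real content lies.
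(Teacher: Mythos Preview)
Your treatment of (i) and (iii) is fine and close in spirit to the paper's; in (iii) you use $\tilde\beta_3^{\,2}=\mathrm{id}$ to force $2h\equiv0$, which is a clean alternative to the paper's appeal to the identification $E\cong\mathrm{Pic}^0(E)$ and the choice $a_5\neq0$ at the fixpoints.

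The genuine gap is in (ii), and you correctly locate it in your last paragraph but do not close it. Your proposed mechanism---FMW basepoint change from $\tilde\zeta$ to $\tilde\tau$---cannot by itself convert the raw restriction $(b_2;x,y)\mapsto(\beta_2(b_2);x,-y)$ into a translation, precisely for the reason you state: inversion composed with a translation is still odd on $dx/y$. The paper resolves the sign discrepancy by a different and more direct route. It does not work with the relative one-form on $W_4/B_3$ at all, but with the relative holomorphic \emph{two}-form on $\tilde W_4/B_2$, locally $\frac{da\wedge dx}{y}$. Under $\tilde\beta_{4,0}$ one has $a\mapsto-a$ and $y\mapsto-y$, so this two-form is invariant. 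The relative one-form on $V_3/B_2$ is then produced from this invariant two-form (by the residue/contraction along the $B_3/B_2$ direction, with the sign of $y$ absorbed in that passage as explained in \cite{Clemens-1}), and is therefore itself invariant. Once $\tilde\beta_3$ preserves $dx/y$ on the fibers of $V_3/B_2$, it is forced to be a translation on each fiber; the translation vector is then pinned down not by any FMW dictionary but simply by observing that $\tilde\zeta(b_2)\mapsto\tilde\tau(\beta_2(b_2))$, which gives exactly $\tilde\zeta-\tilde\tau$. So the missing idea is the two-form invariance over $B_2$, which is what lets the extra sign from $a\mapsto-a$ cancel the sign from $y\mapsto-y$ before you ever restrict to $V_3$.
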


\begin{proof}
i) This assertion is immediate from the definition of $\tilde{\beta}_{4}$.

ii) First of all the involution is given by 
\begin{equation}
\begin{array}{c}
\left(x,y\right)\mapsto\left(x,-y\right)\\
a\mapsto-a
\end{array}\label{eq:signs-1}
\end{equation}
but the fiber of 
\[
V_{3}/B_{2}
\]
is given in (\ref{p1}) and (\ref{p2}) as 
\[
y^{2}=x^{3}+g_{2}^{0}x+g_{3}^{0}
\]
so that it does not reflect the sign change of $a$ and we showed
in \cite{Clemens-1} how the sign change of $y$ is absorbed in the
process of taking residue. Secondly the equations of the two $K3$-surfaces
fibers 
\[
\tilde{W}_{4}\times_{B_{2}}\left\{ b_{2}\right\} \cup\tilde{W}_{4}\times_{B_{2}}\left\{ \beta\left(b_{2}\right)\right\} 
\]
are invariant under $\tilde{\beta}_{4}$ and so are identical. Furthermore
by \eqref{eq:signs-1} the globally defined relative holomorphic $2$-form
on $\tilde{W}_{4}/B_{2}$ is invariant under the action of $\tilde{\beta}_{4}$.
Thus, for any point $b_{3}\in B_{3}$ lying over $b_{2}\in B_{2}$,
the relative one-form on $\tilde{W}_{4}/B_{2}$ obtained from the
contraction of the relative holomorphic $2$-form on $\tilde{W}_{4}/B_{2}$
against a section of the relative tangent bundle of $B_{3}/B_{2}$
is invariant under the action of $\tilde{\beta}_{4}$. So the induced
relative one-form on $V_{3}/B_{2}$ is invariant under the action
of $\tilde{\beta}_{3}$. So as we showed in \cite{Clemens-1} $\tilde{\beta}_{3}$
must act as a translation of distinguished basepoint with respect
to the common Weierstrass form \eqref{hetcurve-1}. Since 
\[
\begin{array}{c}
\tilde{\zeta}\left(b_{2}\right)\mapsto\tilde{\tau}\left(b_{2}\right)\\
\tilde{\zeta}\left(\beta\left(b_{2}\right)\right)\mapsto\tilde{\tau}\left(\beta\left(b_{2}\right)\right)
\end{array}
\]
respectively, ii) is proved.

iii) Choosing $a_{5}\text{\ensuremath{\neq}}0$ at the fixpoints $b_{3}$
of the action of $\tilde{\beta}$ on $B_{3}$ separates $\zeta\left(b_{3}\right)$
from $\tau\left(b_{3}\right)$. The assertion is forced by the fact
that, in the canonical identification of an elliptic curve $E$ with
$\mathrm{Pic}^{0}\left(E\right)$, the trivial bundle is associated
to $\zeta\left(b_{3}\right)$ for the Weierstrass form based at $\zeta\left(b_{3}\right)$
and is associated to $\tau\left(b_{2}\right)$ for the Weierstrass
form based at $\tau\left(b_{3}\right)$. 
\end{proof}
Said otherwise the involution on $dP_{9}$-structures \eqref{44}
induced by 
\begin{equation}
\begin{array}{c}
y\mapsto-y\\
\left(s,t\right)\mapsto\left(-s,t\right)
\end{array}\label{eq:fac1}
\end{equation}
induces the involution 
\[
\begin{array}{c}
\mathrm{Pic}^{0}\left(E\right)\rightarrow\mathrm{Pic}^{0}\left(E\right)\\
L\mapsto L^{-1}
\end{array}
\]
so that there is a unique translation 
\begin{equation}
\begin{array}{c}
E\rightarrow E\\
x\mapsto x+\left(e'-e\right)
\end{array}\label{eq:fac2}
\end{equation}
such that the composition takes the trivial bundle to itself under
the change of basepoint of the torus $E$ from $e$ to $e'$. Using
the decomposition 
\[
\mathrm{Pic}^{0}\left(E\right)^{\oplus8}
\]
of the moduli space of semi-stable $E_{8}$-bundles on $E$, the functorial
diagram 
\begin{equation}
\begin{array}{ccc}
F_{a} & \overset{\tilde{\beta}_{3}^{\ast}}{\longrightarrow} & F_{a}\\
\downarrow &  & \downarrow\\
\mathrm{Pic}^{0}\left(V_{3}/B_{2}\right)^{\oplus8} & \overset{\left(\otimes\mathcal{O}_{\tilde{W}_{0}}\left(\tilde{\zeta}-\tilde{\tau}\right)\right)\circ\left(\right)^{-1}}{\longrightarrow} & \mathrm{Pic}^{0}\left(V_{3}/B_{2}\right)^{\oplus8}
\end{array}\label{eq:goodDP}
\end{equation}
is such that the bottom horizonal map is the identity map. This is
what allows us to induce a semi-stable $E_{8}$-bundle on the quotient
\[
B_{2}^{\vee}=\frac{B_{2}}{\left\{ \beta_{2}\right\} }.
\]
See \cite{Clemens-1} for a more detailed analysis. 
\begin{rem}
\label{rem:If-we-change}The classifying space for flat $E_{8}$-structures
on the elliptic curve $E$ is acted on by the semi-direct product
\[
\mathrm{Pic}^{0}\left(E\right)\rtimes\mathbb{Z}_{2}
\]
where $\mathbb{Z}_{2}$ is generated by 
\[
\left(\left[s,t\right],\left(x,y\right)\right)\mapsto\left(\left[-s,t\right],\left(x,-y\right)\right).
\]
\end{rem}

\subsection{$SU\left(5\right)_{gauge}$-roots and the semi-stable limit\label{subsec:The-chiral-spectrum}}

The action of $\tilde{\beta}_{3}$ on 
\[
V_{3}/B_{2}=dP_{a}\cap dP_{b}
\]
is explained in \cite{Clemens-1} through the lens of its local action
over a neighborhood of $\beta_{2}$-fixpoints, as is the compatibility
of this action with the action induced by $\tilde{\beta}_{4}$ on
the semi-stable degeneration 
\[
\tilde{W}_{4}/B_{2}\Rightarrow dP_{a}\cup dP_{b}.
\]
One checks directly that this action is compatible with the semi-stable
degeneration of $S_{\mathrm{GUT}}$ to 
\[
\left\{ z_{0}=q\text{·}\left(u_{0}^{2}-v_{0}^{2}\right)=0\right\} .
\]
$z_{0}$ deposits an $I_{5}$-fiber at the point $\left\{ a''=0\right\} $
on each fiber of $dP_{a}/B_{2}$ and at the point $\left\{ b''=0\right\} $
on each fiber of $dP_{b}/B_{2}$.By construction, $\tilde{\zeta}$
and $\tilde{\tau}$ can only meet over $\left\{ z_{\delta}=0\right\} $.
If 
\[
b_{2}\in\left\{ u_{1}^{2}-v_{1}^{2}=u_{2}^{2}-v_{2}^{2}=0\right\} \subseteq B_{2}
\]
is any of the $4$ fixpoints of the involution $\beta_{2}$, then
when $\delta=0$ the fixpoints of the action of $\beta_{3}$ on the
fiber $\mathbb{P}_{\left[u_{0},v_{0}\right]}\times\left\{ b_{2}\right\} $
occur at $a=0,\infty$ since $\beta$ takes $a$ to $-a$. We choose
the form $q\in H^{0}\left(K_{B_{2}}^{-1}\right)$ in the definition
of $z_{0}$ so as not to vanish on any of the fixpoints of the action
of $\beta_{2}$ on $B_{2}$. 

\section{Geometric Model-Double Cover Form\label{sec:Geometric-Model-Double-Cover}}

\subsection{Putting the sections $\zeta$ and $\tau$ of $W_{4}/B_{3}$ on equal
footing }

As mentioned earlier, the $\mathbb{Z}_{2}$-action will ultimately
incorporate a translation in the fiber direction that will interchange
the two distinguished sections $\zeta$ and $\tau$, and so will interchange
the two distinguished divisors, $\left(\zeta\right)$ and $\left(\tau\right)$,
to which they correspond. In particular, the Weierstrass forms for
the elliptic fibration determined by either of the two sections must
be 'on equal footing with the other one' throughout. So we will have
to begin from another birational model for $W_{4}/B_{3}$ that achieves
the desired `equal footing.' On each fiber of $W_{4}/B_{3}$, the
intersection with the sections $\left(\zeta\right)+\left(\tau\right)$
distinguishes a $g_{2}^{1}$ (that is, a linear series of projective
dimension one and degree two). Taken together the $g_{2}^{1}$ 's
yield a $\mathbb{P}^{1}$-bundle 
\[
Q:=\mathbb{P}\left(\left(\pi_{B_{3}}\right)_{\ast}\mathcal{O}_{W_{4}}\left(\left(\zeta\right)+\left(\tau\right)\right)\right)
\]
over $B_{3}$.

Since we are eventually going to change the elliptic group structure
on the torus fibers of $W_{4}/B_{3}$ from the one given by $\left(\zeta\right)$
that does not pass through the set of singular points of $W_{4}$
to the one given by $\left(\tau\right)$ that actually contains the
set of singular points of $W_{4}$, we first want to define a crepant
partial resolution of our $F$-theory model $W_{4}/B_{3}$ that becomes
a branched double cover of $Q/B_{3}$ and has the property that the
proper transform of $\left(\tau\right)$ misses the singular set of
the resulting fourfold entirely. We achieve this by projecting each
torus fiber from the third point $\upsilon\left(b_{3}\right)$ of
intersection of the line between $\zeta\left(b_{3}\right)$ and $\tau\left(b_{3}\right)$
with the fiber of $W_{4}/B_{3}$ over $b_{3}$.

\subsubsection{Line between the sections}

Over $b_{3}\in\left(B_{3}-S_{\mathrm{GUT}}\right)$, the line through
$\zeta\left(b_{3}\right)$ and $\tau\left(b_{3}\right)$ in $\left\{ b_{3}\right\} \times_{B_{3}}P$
is given by 
\[
x-z^{2}w=0.
\]
We have denoted the third point of intersection of this line with
$\left\{ b_{3}\right\} \times_{B_{3}}W_{4}$ as 
\[
\upsilon\left(b_{3}\right)=\left[w,z^{2}w,\left(-z-a_{420}\right)z^{2}w\right].
\]

Thus we can modify the defining equation 
\[
\left|\begin{array}{cc}
x^{3}+a_{4}zwx^{2}+a_{2}z^{3}w^{2}x+a_{0}z^{5}w^{3} & 1\\
wy\left(y-\left(a_{5}x+a_{3}z^{2}w\right)\right) & 1
\end{array}\right|=0
\]
of $W_{4}$ by projecting $\left\{ b_{3}\right\} \times_{B_{3}}W_{4}$
from the point $\upsilon\left(b_{3}\right)$. That is we write 
\begin{equation}
\begin{array}{c}
x-z^{2}w=X\\
y+\left(z+a_{420}\right)z^{2}w=Y
\end{array}\label{eq:relation}
\end{equation}
in order that the section $\left(\upsilon\right)$ correspond to the
point $\left[w,X,Y\right]=\left[1,0,0\right]$. The change of fiber
coordinates in $P/B_{3}$ is given by 
\begin{equation}
\left(\begin{array}{c}
w\\
X\\
Y
\end{array}\right)=\left(\begin{array}{ccc}
1 & 0 & 0\\
-z^{2} & 1 & 0\\
\left(z+a_{420}\right)z^{2} & 0 & 1
\end{array}\right)\left(\begin{array}{c}
w\\
x\\
y
\end{array}\right).\label{eq:matrixrel}
\end{equation}
(Notice that this is the identity matrix over a first-order neighborhood
of $S_{\mathrm{GUT}}$.) Recalling again that $-a_{53}=a_{420}$,
we obtain 
\begin{equation}
\left|\begin{array}{cc}
\left(X^{3}+3z^{2}wX^{2}+3z^{4}w^{2}X\right)+a_{4}zw\left(X^{2}+2z^{2}wX\right)+a_{2}z^{3}w^{2}X & 1\\
wY\left(Y-z^{3}w-a_{5}X\right)-\left(z+a_{420}\right)z^{2}w^{2}\left(Y-a_{5}X\right) & 1
\end{array}\right|=0\label{eq:W4}
\end{equation}
as the equation for $W_{4}$ in the projective coordinates $\left[w,X,Y\right]$.

\subsection{Fundamental projection and modified Weierstrass form\label{sec:Fundamental-projection-and}}

By considering \eqref{eq:W4} as a quadratic equation in $w$, the
Tate form then yields a birationally equivalent double-cover 
\[
\overline{W}_{4}/Q
\]
where the branched double cover $\overline{W}_{4}$ of $Q$ is given
by the equation 
\begin{equation}
\begin{array}{c}
w^{2}=\left(3z^{2}X^{2}+a_{4}zX^{2}-Y^{2}+a_{5}XY\right)^{2}\\
-4\left(\left(3z^{4}+\left(2a_{4}+a_{2}-a_{5}\right)z^{3}-a_{5}a_{420}z^{2}\right)X^{4}+\left(2z^{3}+a_{420}z^{2}\right)X^{3}Y\right)
\end{array}\label{eq:Delta0}
\end{equation}
where the fiber coordinates of $Q/B_{3}$ are given by 
\[
\begin{array}{c}
X=x-z^{2}w\\
Y=y+\left(z+a_{420}\right)z^{2}.
\end{array}
\]

\eqref{eq:Delta0} also allows us to see that $W_{4}$ is birationally
a double cover 
\[
\pi_{Q}:\overline{W}_{4}\rightarrow Q
\]
with equation 
\begin{equation}
\begin{array}{c}
w^{2}=\left(3z^{2}X^{2}+a_{4}zX^{2}-Y^{2}+a_{5}XY\right)^{2}\\
-4\left(\left(3z^{4}+\left(2a_{4}+a_{2}-a_{5}\right)z^{3}-a_{5}a_{420}z^{2}\right)X^{4}+\left(2z^{3}+a_{420}z^{2}\right)X^{3}Y\right)
\end{array}\label{eq:neweq}
\end{equation}
where 
\[
\overline{W}_{4}\subseteq\bar{P}:=\mathbb{P}\left(\mathcal{O}_{B_{3}}\left(6N\right)\oplus\mathcal{O}_{B_{3}}\left(2N\right)\oplus\mathcal{O}_{B_{3}}\left(3N\right)\right)=\mathbb{P}_{\left[w,X,Y\right]}
\]
and 
\[
Q=\mathbb{P}_{\left[X,Y\right]}:=\mathbb{P}\left(\mathcal{O}_{B_{3}}\left(2N\right)\oplus\mathcal{O}_{B_{3}}\left(3N\right)\right)
\]
with branch locus $\Delta$ given by the zeros of the homogeneous
polynomial \eqref{eq:Delta0}. Notice that \eqref{eq:neweq} is invariant
under the transformations 
\[
\left[\left(a_{0},a_{2},a_{3},a_{4},a_{5}\right),z,X,Y,w\right]\mapsto\left[\left(-a_{0},-a_{2},-a_{3},-a_{4},-a_{5}\right),-z,\,X,-Y,\,\text{\textpm}w\right]
\]
so both are possible but only one will leave the holomorphic four-form
on $\overline{W}_{4}$ invariant, namely the one that is compatible
with the action of $\tilde{\beta}_{4}/\beta_{3}$ that transforms
the relative one-form by 
\[
\frac{dx}{y}\mapsto-\frac{dx}{y}.
\]
To see which one, we divide \eqref{eq:neweq} by $X^{4}$ and define
\[
\vartheta_{0}:=\frac{Y}{X}.
\]
We have the affine equation 
\begin{equation}
\begin{array}{c}
w_{0}^{2}=\left(3z^{2}+a_{4}z-\vartheta_{0}^{2}+a_{5}\vartheta_{0}\right)^{2}\\
-4\left(\left(3z^{4}+\left(2a_{4}+a_{2}-a_{5}\right)z^{3}-a_{5}a_{420}z^{2}\right)+\left(2z^{3}+a_{420}z^{2}\right)\vartheta_{0}\right)
\end{array}\label{eq:newer}
\end{equation}
Therefore, since $\vartheta_{0}\mapsto-\vartheta_{0}$, $\tilde{\beta}_{4}/\beta_{3}$
is only compatible with the transformation 
\[
\frac{dw_{0}}{\vartheta_{0}}\mapsto-\frac{dw_{0}}{\vartheta_{0}}
\]
on the relative one-form so we conclude that $w_{0}$ must be invariant.
Furthermore, since the sections $\zeta$ and $\tau$ are now given
by $\left\{ X=0\right\} $, so, referring to \eqref{eq:neweq}, their
equation becomes 
\[
0=w_{0}^{2}-Y^{4}=\left(w_{0}+Y^{2}\right)\left(w_{0}-Y^{2}\right)
\]
and so each of the two sections must be taken to itself under $\tilde{\beta}_{4}/\beta_{3}$.

The canonical bundle of $Q$ is 
\[
-2\left(X_{0}\right)-2N
\]
where $\left(X_{0}\right)$ denotes the divisor $\left\{ X_{0}=0\right\} $
whereas the branch locus $\Delta$ has divisor class 
\[
4\left(X_{0}\right)+4N.
\]
We therefore replace $W_{4}$ with the birationally equivalent double
cover 
\[
\pi_{Q}:\overline{W}_{4}\rightarrow Q
\]
branched over $\Delta$ and $\overline{W}_{4}$ is Calabi-Yau. Furthermore
\[
\overline{W}_{4}\cap\left\{ X=0\right\} =\left(\bar{\zeta}\right)\cup\left(\bar{\tau}\right)
\]
is the union of the proper transforms of the two original sections
of $W_{4}/B_{3}$.

\subsection{The branch locus}

The branch locus $\Delta$ is defined by the equation

\begin{equation}
\begin{array}{c}
0=\left(\left(3z^{2}+a_{4}z\right)-\vartheta_{0}\left(\vartheta_{0}-a_{5}\right)\right)^{2}\\
-4\left(\left(3z^{4}+\left(2a_{4}+a_{2}-a_{5}\right)z^{3}-a_{5}a_{420}z^{2}\right)+\left(2z^{3}+a_{420}z^{2}\right)\vartheta_{0}\right)
\end{array}\label{eq:affdelta}
\end{equation}
on the space 
\[
Q-\left\{ X=0\right\} =\left|\mathcal{O}_{B_{3}}\left(N\right)\right|,
\]
the total space of the line bundle $\mathcal{O}_{B_{3}}\left(N\right)$.
ReThen the equation for 
\[
\left(\overline{W}_{4}-\left\{ X=0\right\} \right)\subseteq\left(\bar{P}-\left\{ X=0\right\} \right)
\]
can be rewritten as 
\begin{equation}
\begin{array}{c}
w_{0}^{2}=\left(\left(3z^{2}+a_{4}z\right)-\vartheta_{0}\left(\vartheta_{0}-a_{5}\right)\right)^{2}-4a_{420}z^{2}\left(\vartheta_{0}-a_{5}\right)\\
-4z^{3}\left(3z+2a_{4}+a_{2}-\left(\vartheta_{0}+\left(\vartheta_{0}-a_{5}\right)\right)\right).
\end{array}\label{eq:waffeq}
\end{equation}
Again rescaling the $a_{j}$ and appealing to Bertini's theorem, we
will have for general choices of the $a_{j}\in\mathcal{A}$ that singularities
of $\Delta$ are supported on the locus 
\[
\left\{ \vartheta_{0}\left(\vartheta_{0}-a_{5}\right)=z=0\right\} .
\]
Thus $S_{\mathrm{GUT}}\times_{B_{3}}\overline{W}_{4}$ has two components,
each isomorphic to $S_{\mathrm{GUT}}\times_{B_{3}}Q$. They are given
by 
\[
w_{0}=\text{\textpm}\vartheta_{0}\left(\vartheta_{0}-a_{5}\right).
\]
One of these components, that we will denote as $D_{1}$, intersects
$\left(\zeta\right)$, while the other, that we will denote as $D_{4}$,
intersects $\left(\tau\right)$. We write $D_{1}=:\left\{ G_{1}=0\right\} $
and $D_{4}=:\left\{ G_{4}=0\right\} $ so that 
\begin{equation}
z=G_{1}G_{4}\label{eq:decomp-1}
\end{equation}
on $\overline{W}_{4}$.

Over $\left\{ z=0\right\} $ the two components coincide as the component
$\left\{ \vartheta_{0}=a_{5}\right\} $ of the branch locus of the
reducible double cover 
\begin{equation}
\left\{ w_{0}^{2}=\vartheta_{0}^{2}\left(\vartheta_{0}-a_{5}\right)^{2}\right\} \label{eq:BR}
\end{equation}
of 
\[
Q\times_{B_{3}}\left\{ z=0\right\} .
\]

\subsection{\label{subsec:The-standard--formulation}The standard $\mathbb{P}^{112}$-formulation}

To relate the presentation \eqref{eq:Delta0} to the more standard
$\mathbb{P}^{112}$-notation used for this type of model, let 
\[
\ddot{P}:=\mathbb{P}\left(\mathcal{O}_{B_{3}}\oplus\mathcal{O}_{B_{3}}\left(2\right)\oplus\mathcal{O}_{B_{3}}\left(4\right)\right)
\]
with coordinates $\left(w,x,y\right)$ and write 
\[
\begin{array}{c}
X\rightarrow w\\
Y\rightarrow x\\
w\rightarrow y+x^{2}
\end{array}
\]
so that 
\[
\begin{array}{c}
w^{2}=\left(3z^{2}X^{2}+a_{4}X^{2}z-Y\left(Y-a_{5}X\right)\right)^{2}\\
-4\left(\left(3z^{4}+\left(2a_{4}+a_{2}-a_{5}\right)z^{3}-a_{5}a_{420}z^{2}\right)X^{4}+\left(2z^{3}+a_{420}z^{2}\right)X^{3}Y\right)
\end{array}
\]
becomes 
\[
\begin{array}{c}
\left(y+x^{2}\right)^{2}=\left(w^{2}\left(3z^{2}+a_{4}z\right)-x\left(x-a_{5}w\right)\right)^{2}\\
-4\left(\left(3z^{4}+\left(2a_{4}+a_{2}-a_{5}\right)z^{3}-a_{5}a_{420}z^{2}\right)w^{4}+\left(2z^{3}+a_{420}z^{2}\right)xw^{3}\right)
\end{array}
\]
that can be rewritten as 
\begin{equation}
\begin{array}{c}
y^{2}+2x^{2}y=-2a_{5}x^{3}w\\
+\left(a_{5}^{2}-2a_{4}z-6z^{2}\right)x^{2}w^{2}\\
+\left(2a_{5}a_{4}z+\left(6a_{5}-4a_{420}\right)z^{2}-8z^{3}\right)xw^{3}\\
-\left(\left(-a_{4}^{2}-4a_{5}a_{420}\right)z^{2}+\left(2a_{4}+4a_{2}-4a_{5}\right)z^{3}+3z^{4}\right)w^{4}.
\end{array}\label{eq:pee112}
\end{equation}
We include Appendix \ref{subsec:-App.1} by Sakura Schäfer-Nameki
containing a brief overview of elliptically-fibered Calabi-Yau manifolds
symmetric with respect to two sections.

\section{Desingularization of $\overline{W}_{4}$\label{sec:Desingularization-of}}

\subsection{Localizing at the singularities of $\overline{W}{}_{4}$ \label{subsec:Localizing-at-the}}

We first rearrange terms of \eqref{eq:affdelta} in increasing order
of total degree in the variables $\left(z,\vartheta_{0}\right)$ so
that the branch locus $\Delta$ is given by 
\begin{equation}
\begin{array}{c}
\left(a_{5}\vartheta_{0}+a_{4}z\right)^{2}+4a_{5}a_{420}z^{2}\\
-2\left(a_{5}\vartheta_{0}^{3}+a_{4}\vartheta_{0}^{2}z\right)+\left(6a_{5}-4a_{420}\right)z^{2}\vartheta_{0}+\left(-2a_{4}-4a_{2}+4a_{5}\right)z^{3}\\
+\vartheta_{0}^{4}-6\vartheta_{0}^{2}z^{2}-8z^{3}\vartheta_{0}-3z^{4}.
\end{array}\label{eq:loc1}
\end{equation}
$\Delta$ will have ordinary nodal singularities along $\left\{ z=\vartheta=0\right\} $
except where the quadratic normal cone 
\[
\left(a_{5}\vartheta_{0}+a_{4}z\right)^{2}+4a_{5}a_{420}z^{2}=0
\]
is not of maximal rank, namely where 
\[
a_{5}a_{420}=0.
\]
If $a_{5}a_{420}=0$, the reduced quadratic cone is given by 
\[
\left[\vartheta_{0},z\right]=\left[a_{4},-a_{5}\right].
\]
on which the cubic cone evaluates as 
\[
a_{5}^{2}\left(a_{420}a_{4}+\left(2a_{4}+a_{2}+a_{5}\right)a_{5}\right)=0.
\]
So the quadratic and cubic cone both vanish identically over $\left\{ a_{5}=0\right\} $
but both only vanish over $\left\{ a_{420}=a_{53}=0\right\} $ when
additionally 
\begin{equation}
a_{5}^{3}\left(a_{4}-a_{0}+a_{5}\right)=0,\label{eq:bingo}
\end{equation}
a locus that has only finite intersection with $S_{\mathrm{GUT}}$
.

We next rearrange terms of \eqref{eq:affdelta} in increasing order
of total degree in the variables $\left(z,a_{5}-\vartheta_{0}\right)$.
The branch locus $\Delta$ is then given by 
\begin{equation}
-\begin{array}{c}
0=\left(a_{5}\left(a_{5}-\vartheta_{0}\right)+a_{4}z\right)^{2}\\
2a_{5}\left(a_{5}-\vartheta_{0}\right)^{3}-2a_{4}z\left(a_{5}-\vartheta_{0}\right)^{2}+\left(4a_{420}+6a_{5}\right)\left(a_{5}-\vartheta_{0}\right)z^{2}-\left(2a_{4}-4a_{2}-4a_{5}\right)z^{3}\\
+\left(a_{5}-\vartheta_{0}\right)^{4}-6\left(a_{5}-\vartheta_{0}\right)^{2}z^{2}+8z^{3}\left(a_{5}-\vartheta_{0}\right)-3z^{4}.
\end{array}\label{eq:loc2}
\end{equation}
The quadratic normal cone along $\left\{ z=a_{5}-\vartheta_{0}=0\right\} $
is of rank one except where $a_{5}=a_{4}=0$. It is spanned by the
vector $\left(a_{5}-\vartheta_{0},z\right)=\left(a_{4},-a_{5}\right)$.
Evaluating the cubic normal cone along this vector and recalling that
$a_{420}=-a_{53}$ gives 
\[
\left(-a_{4}a_{3}+a_{5}\left(a_{0}+a_{3}\right)\right)a_{5}^{2}=0.
\]
Thus a simple modification will resolve the singularity of $\overline{W}_{4}$
over $\left\{ a_{5}-\vartheta_{0}=z=0\right\} $ unless $a_{5}=0$
or 
\begin{equation}
\left|\begin{array}{cc}
a_{4} & -a_{5}\\
a_{0}+a_{3} & a_{3}
\end{array}\right|=0.\label{eq:bingo2}
\end{equation}
\medskip{}

\subsection{Resolving the singularities of $\overline{W}_{4}$\label{sec:Resolving-the-singularities}}

\subsubsection{First modification of $\Delta$}

We first make the modification of $\Delta$ and $\overline{W}_{4}$
at $\left\{ \vartheta=z=0\right\} $. Namely, inside 
\[
Q^{\left(1\right)}:=\left\{ \left|\begin{array}{cc}
\vartheta_{0} & z\\
\vartheta_{1} & Z_{14}
\end{array}\right|\right\} \subseteq Q\times\mathbb{P}_{\left[\vartheta_{1},Z_{14}\right]}
\]
we write 
\[
Z_{0}:=\frac{\vartheta_{0}}{\vartheta_{1}}=\frac{z}{Z_{14}}.
\]
The proper transform $\Delta^{\left(1\right)}$ of $\Delta$ in $Q^{\left(1\right)}$
is given by the equation 
\begin{equation}
\begin{array}{c}
0=\left(a_{5}\vartheta_{1}+a_{4}Z_{14}\right)^{2}+4a_{5}a_{420}Z_{14}^{2}\\
+\left(-2a_{5}\vartheta_{1}^{3}-2a_{4}\vartheta_{1}^{2}Z_{14}+\left(6a_{5}-4a_{420}\right)\vartheta_{1}Z_{14}^{2}-\left(2a_{4}+4a_{2}-4a_{5}\right)Z_{14}^{3}\right)Z_{0}\\
+\left(\vartheta_{1}^{4}-6\vartheta_{1}^{2}Z_{14}^{2}-8Z_{14}^{3}\vartheta_{1}-3Z_{14}^{4}\right)Z_{0}^{2}.
\end{array}\label{eq:DSp}
\end{equation}

Over $z=Z_{14}Z_{0}=0$ we have two possible singular loci in $\Delta^{\left(1\right)}$.
These loci are given respectively by 
\[
\left\{ Z_{14}=\left(\vartheta_{1}Z_{0}-a_{5}\right)^{2}=0\right\} 
\]
and by singular points of 
\[
\left\{ Z_{0}=\left(a_{5}\vartheta_{1}+a_{4}Z_{14}\right)^{2}+4a_{5}a_{420}Z_{14}^{2}=0\right\} 
\]
at which $Z_{0}=0$ and $Z_{14}\text{\ensuremath{\neq}}0$. In this
last case we write 
\[
Z_{0}=\left(a_{5}\frac{\vartheta_{1}}{Z_{14}}+a_{4}\right)^{2}+4a_{5}a_{420}=0
\]
to conclude that singular points can only occur where the discriminant
\[
\left\{ 4a_{5}a_{420}=0\right\} 
\]
of the quadratic equation in $\frac{\vartheta_{1}}{Z_{14}}$ is singular,
that is, where $Z_{0}=a_{5}=a_{420}=0$. Substituting in \eqref{eq:DSp}
we conclude that, in addition, $a_{4}=0.$ Since $\frac{\partial}{\partial Z_{0}}$
applied to \eqref{eq:DSp} and evaluated at such singular points would
also have to vanish, we would also have $a_{2}=0$ and so $a_{0}=0$
contradicting the assumption of a generic allowable selection of the
$a_{j}$ and $z$ in the linear system $\left|\mathcal{A}\right|$.
\begin{lem}
\label{lem:Singularities-of-the}Singularities of the branch locus
$\Delta^{\left(1\right)}$ lie on the locus 
\[
\left\{ Z_{14}=\left(\vartheta_{1}Z_{0}-a_{5}\right)^{2}=0\right\} .
\]
\end{lem}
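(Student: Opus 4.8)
The plan is to show that every point of $\mathrm{Sing}\left(\Delta^{\left(1\right)}\right)$ lies on the stated locus by a case analysis over the fiber $\left\{ z=0\right\}$. First I would reduce to that fiber. Since the modification $Q^{\left(1\right)}\to Q$ is the blow-up centered at $\left\{ \vartheta_{0}=z=0\right\}$ and is an isomorphism away from that center, and since we established earlier that $\mathrm{Sing}\left(\Delta\right)$ is supported on $\left\{ \vartheta_{0}\left(\vartheta_{0}-a_{5}\right)=z=0\right\}\subseteq\left\{ z=0\right\}$, the proper transform $\Delta^{\left(1\right)}$ is smooth over $B_{3}-S_{\mathrm{GUT}}$. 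Hence every singular point of $\Delta^{\left(1\right)}$ satisfies $z=Z_{14}Z_{0}=0$, i.e. either $Z_{14}=0$ or $Z_{0}=0$.

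The case $Z_{14}=0$ requires no further work. Restricting \eqref{eq:DSp} to $\left\{ Z_{14}=0\right\}$ collapses the equation to $a_{5}^{2}\vartheta_{1}^{2}-2a_{5}\vartheta_{1}^{3}Z_{0}+\vartheta_{1}^{4}Z_{0}^{2}=\vartheta_{1}^{2}\left(\vartheta_{1}Z_{0}-a_{5}\right)^{2}$. Since $Z_{14}=0$ forces $\vartheta_{1}\neq0$ on $\mathbb{P}_{\left[\vartheta_{1},Z_{14}\right]}$, the factor $\vartheta_{1}^{2}$ is a unit, so $\Delta^{\left(1\right)}\cap\left\{ Z_{14}=0\right\}=\left\{ Z_{14}=\left(\vartheta_{1}Z_{0}-a_{5}\right)^{2}=0\right\}$ set-theoretically. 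Thus any singular point with $Z_{14}=0$ already lies on the asserted locus, regardless of whether it is actually singular.

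The substantive step is to rule out singular points in the remaining case $Z_{0}=0$, $Z_{14}\neq0$, and here the genericity of the $a_{j}$ is essential. Writing \eqref{eq:DSp} as $F=Q+LZ_{0}+MZ_{0}^{2}$ with $Q=\left(a_{5}\vartheta_{1}+a_{4}Z_{14}\right)^{2}+4a_{5}a_{420}Z_{14}^{2}$ and $L$ the cubic coefficient of $Z_{0}$, the differential at a point with $Z_{0}=0$ is $dF=dQ+L\,dZ_{0}$, so a singular point forces $Q=0$, $L=0$, and $dQ=0$ in the $\left(\vartheta_{1},Z_{14}\right)$ and $S_{\mathrm{GUT}}$ directions simultaneously. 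Viewing $Q$ as a quadratic in $u:=\vartheta_{1}/Z_{14}$ after dividing by $Z_{14}^{2}$, its discriminant is a unit multiple of $a_{5}^{3}a_{420}$ with reduced zero locus $\left\{ a_{5}a_{420}=0\right\}$; vanishing of $dQ$ in the fiber directions forces $a_{5}a_{420}=0$, while vanishing of $dQ$ in the base directions then forces the discriminant divisor to be singular on $S_{\mathrm{GUT}}$, i.e. $a_{5}=a_{420}=0$. I would then feed this back into the equations: $Q=0$ with $a_{5}=0$ gives $a_{4}^{2}Z_{14}^{2}=0$, hence $a_{4}=0$ since $Z_{14}\neq0$; $L=0$ then gives $a_{2}=0$; and $a_{420}=a_{4}+a_{2}+a_{0}=0$ gives $a_{0}=0$, while $a_{53}=-a_{420}=0$ gives $a_{3}=0$. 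Thus all of $a_{0},a_{2},a_{3},a_{4},a_{5}$ vanish simultaneously, contradicting the standing assumption that $a_{2},a_{3},a_{4},a_{5}$ have no common zero. Hence there are no singular points with $Z_{0}=0$, $Z_{14}\neq0$, and the lemma follows.

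The main obstacle I anticipate is precisely this discriminant step: one must verify that a singularity of $\Delta^{\left(1\right)}$ over $\left\{ Z_{0}=0\right\}$ can arise only where the conic $\left\{ Q=0\right\}$ degenerates \emph{as a family over} $S_{\mathrm{GUT}}$, not merely in a single fiber. Because the $a_{j}$ are sections of $\mathcal{O}_{B_{3}}\left(N\right)$ rather than constants, the relevant Jacobian must include the derivatives along $S_{\mathrm{GUT}}$, and it is the interaction of these base derivatives with $\partial/\partial Z_{0}$ that upgrades the fiberwise condition $a_{5}a_{420}=0$ to the codimension-two condition $a_{5}=a_{420}=0$ and drives the chain $a_{5}=a_{420}=0\Rightarrow a_{4}=0\Rightarrow a_{2}=0\Rightarrow a_{0}=0$ to its contradiction. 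The Bertini input entering earlier, that $\mathrm{Sing}\left(\Delta\right)$ itself is confined to $\left\{ \vartheta_{0}\left(\vartheta_{0}-a_{5}\right)=z=0\right\}$, is what licenses treating the $a_{j}$ as having no common zero at the end.
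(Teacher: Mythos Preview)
Your proposal is correct and follows essentially the same approach as the paper. The paper's argument, given in the paragraph immediately preceding the lemma, proceeds by exactly the same chain: reduce to $Z_{14}Z_{0}=0$; on $\{Z_{14}=0\}$ the restriction of \eqref{eq:DSp} is $\vartheta_{1}^{2}(\vartheta_{1}Z_{0}-a_{5})^{2}$; and on $\{Z_{0}=0,\,Z_{14}\neq0\}$ one argues via the discriminant of the conic $Q$ that a singular point forces $a_{5}=a_{420}=0$, then $Q=0$ gives $a_{4}=0$, then $\partial/\partial Z_{0}$ (your $L=0$) gives $a_{2}=0$, hence $a_{0}=0$, contradicting the genericity of the $a_{j}$. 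Your write-up is in fact slightly more explicit than the paper's about why the base-direction vanishing of $dQ$ upgrades $a_{5}a_{420}=0$ to $a_{5}=a_{420}=0$, and you carry the chain one step further to $a_{3}=0$ via $a_{53}=-a_{420}=0$; the paper stops at $a_{0}=0$ and invokes genericity directly.
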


The assumption that the curve $\left\{ a_{5}=z=0\right\} \subseteq B_{3}$
is smooth can be weakened to allow nodal singularities. Since $a_{4}=0$
there, the potential nodal singularities in $\Delta{}^{\left(1\right)}$
thereby created are in fact already resolved by the fact that the
entire fiber of $Z_{0}$ over any point where $a_{5}=a_{4}=0$ already
lies in $\Delta{}^{\left(1\right)}$. In fact, in §6.3 of \cite{Clemens-2}
we impose a reducibility assumption on $\left\{ a_{5}=z=0\right\} \subseteq B_{3}$
. 

The proper transform $\Delta^{\left(1\right)}$ of $\Delta$ in $Q^{\left(1\right)}$
is the total transform minus twice the exceptional divisor, so that
the proper transform is again twice the anticanonical divisor of $Q^{\left(1\right)}$.
So the branched double cover $W_{4}^{\left(1\right)}$ given by the
equation 
\begin{equation}
\begin{array}{c}
w_{1}^{2}=\left(a_{5}\vartheta_{1}+a_{4}Z_{14}\right)^{2}+4a_{5}a_{420}Z_{14}^{2}\\
+\left(-2a_{5}\vartheta_{1}^{3}-2a_{4}\vartheta_{1}^{2}Z_{14}+\left(6a_{5}+4a_{420}\right)\vartheta_{1}Z_{14}^{2}-\left(2a_{4}+4a_{2}-4a_{5}\right)Z_{14}^{3}\right)Z_{0}\\
+\left(\vartheta_{1}^{4}-6\vartheta_{1}^{2}Z_{14}^{2}-8Z_{14}^{3}\vartheta_{1}-3Z_{14}^{4}\right)Z_{0}^{2}.
\end{array}\label{eq:d}
\end{equation}
is again Calabi-Yau. We retain the notation $D_{1}$ and $D_{4}$
for the proper transforms in $W_{4}^{\left(1\right)}$ of those respective
divisors in $\overline{W}_{4}$ , namely the two components of $W_{4}^{\left(1\right)}\times_{Q^{\left(1\right)}}\left\{ Z_{14}=0\right\} $.
We denote the irreducible branched double cover $W_{4}^{\left(1\right)}\times_{\Delta^{\left(1\right)}}\left\{ Z_{0}=0\right\} $
as $D_{0}$. Over $\left\{ z=a_{5}a_{420}=0\right\} $, $D_{0}$ splits
into two components, $\left\{ w_{1}=\text{\textpm}\left(a_{5}\vartheta_{1}+a_{4}Z_{14}\right)\right\} $,
and each component projects isomorphically to $\left\{ Z_{0}=0\right\} \subseteq Q^{\left(1\right)}.$

We note at this point that the section $\left\{ \vartheta_{0}=0\right\} $
is the projection of the section $\left(\upsilon\right)$ into $Q$.
It is this section that will be invariant under the final $\mathbb{Z}_{2}$-action
whereas the sections $\left(\zeta\right)$ and $\left(\tau\right)$
will be interchanged. Therefore the image of $\left(\upsilon\right)$
will be the distingushed section of the Calabi-Yau quotient $W_{4}^{\vee}/B_{3}^{\vee}$.
The proper transform of $\left(\upsilon\right)$ in $W_{4}^{\left(1\right)}/B_{3}$
passes though $D_{0}$ over $S\mathrm{_{GUT}}$, thereby justifying
the notation of $D_{0}$ as the 'inherited' component of the quotient
$W_{4}^{\vee}/B_{3}^{\vee}$.

\subsubsection{Second modification of $\Delta$ \label{subsec:First-modification-of}}

Recalling that $z=Z_{0}Z_{14}$ we must next attend to the singularities
of $\Delta^{\left(1\right)}$ lying in $\left\{ Z_{14}=0\right\} $.
As we have seen in the last Subsection, these lie on the locus $\left\{ a_{5}-\vartheta_{1}Z_{0}=Z_{14}=0\right\} $
that only intersects $D_{0}$ over $\left\{ a_{5}=Z_{0}=0\right\} $.

First, referring to \eqref{eq:DSp}, we rewrite the equation for $\Delta^{\left(1\right)}$
in terms of the variables $\left(\left(a_{5}-\vartheta_{1}Z_{0}\right),Z_{14}\right)$
as

\begin{equation}
\begin{array}{c}
0=\left(\vartheta_{1}\left(a_{5}-\vartheta_{1}Z_{0}\right)+a_{4}Z_{14}\right)^{2}\\
\left(6a_{5}+4a_{420}\right)\left(a_{5}-\vartheta_{1}Z_{0}\right)Z_{14}^{2}-\left(4a_{5}+2a_{4}+4a_{2}\right)Z_{14}^{3}Z_{0}\\
-6\left(a_{5}-\vartheta_{1}Z_{0}\right)^{2}Z_{14}^{2}+8\left(\vartheta_{1}Z_{0}-a_{5}\right)Z_{14}^{3}Z_{0}-3Z_{14}^{4}Z_{0}^{2}.
\end{array}\label{liftgut}
\end{equation}
Next define 
\[
Q^{\left(2\right)}:=\left\{ \left|\begin{array}{cc}
a_{5}-\vartheta_{1}Z_{0} & Z_{14}\\
\vartheta_{2} & \tilde{Z}_{14}
\end{array}\right|=0\right\} \subseteq Q^{\left(1\right)}\times\mathbb{P}_{\left[\vartheta_{2},\tilde{Z}_{14}\right]}
\]
and 
\[
Z_{23}:=\frac{a_{5}-\vartheta_{1}Z_{0}}{\vartheta_{2}}=\frac{Z_{14}}{\tilde{Z}_{14}}
\]
for which we have 
\[
\pi^{\left(2\right)}:Q^{\left(2\right)}\rightarrow B_{3}.
\]
The equation for the proper transform $\Delta^{\left(2\right)}$ of
$\Delta^{\left(1\right)}$ becomes 
\begin{equation}
\begin{array}{c}
0=\left(\vartheta_{1}\vartheta_{2}+a_{4}\tilde{Z}_{14}\right)^{2}\\
+\left(\left(6a_{5}+4a_{420}\right)\vartheta_{2}-\left(4a_{5}+2a_{4}+4a_{2}\right)Z_{0}\tilde{Z}_{14}\right)\tilde{Z}_{14}^{2}Z_{23}\\
-\left(6\vartheta_{2}^{2}-8\vartheta_{2}Z_{0}\tilde{Z}_{14}+3\tilde{Z}_{14}^{2}Z_{0}^{2}\right)\tilde{Z}_{14}^{2}Z_{23}^{2}.
\end{array}\label{eq:delta2}
\end{equation}

The proper transform $\Delta^{\left(2\right)}$ in $Q^{\left(2\right)}$
is the total transform minus twice the exceptional divisor, so that
the proper transform is again twice the anticanonical divisor of $Q^{\left(2\right)}$.
Therefore we have a Calabi-Yau fourfold $W_{4}^{\left(2\right)}$
given by the completion over $\left\{ X=0\right\} $ of the solution
set of 
\begin{equation}
\begin{array}{c}
w_{2}^{2}=\left(\vartheta_{1}\vartheta_{2}+a_{4}\tilde{Z}_{14}\right)^{2}\\
+\left(\left(6a_{5}+4a_{420}\right)\vartheta_{2}-\left(4a_{5}+2a_{4}+4a_{2}\right)Z_{0}\tilde{Z}_{14}\right)\tilde{Z}_{14}^{2}Z_{23}\\
-\left(6\vartheta_{2}^{2}-8\vartheta_{2}Z_{0}\tilde{Z}_{14}+3\tilde{Z}_{14}^{2}Z_{0}^{2}\right)\tilde{Z}_{14}^{2}Z_{23}^{2}.
\end{array}\label{eq:W0}
\end{equation}

Over the exceptional locus, given by $\left\{ Z_{23}=0\right\} \subseteq Q^{\left(2\right)}$,
the equation for $\Delta^{\left(2\right)}$ is a perfect square so
that $W_{4}^{\left(2\right)}\times_{B_{3}}S_{\mathrm{GUT}}$ splits
into five components. Two of these are new components lying over $\left\{ Z_{23}=0\right\} \subseteq Q^{\left(2\right)}$
that we call $D_{2}$ and $D_{3}$. In addition we have lifted components,
the two components that we continue to call $D_{1}$ and $D_{4}$
lying over $\left\{ \tilde{Z}_{14}=0\right\} \subseteq Q^{\left(2\right)}$,
and finally the lifted component over $\left\{ Z_{0}=0\right\} $
that we continue to call $D_{0}$. We number things so that, over
a general point of $S_{\mathrm{GUT}}$, $D_{2}$ intersects $D_{1}$
and $D_{3}$ intersects $D_{4}$. The incidence of the five components
over a general point of $S_{\mathrm{GUT}}$ is that of the extended
Dynkin diagram $\tilde{A}_{4}$.

By Lemma \ref{lem:Singularities-of-the} singular points of $\Delta^{\left(2\right)}$
can only occur where $Z_{23}=0$, so $\tilde{Z}_{14}\text{\ensuremath{\neq}}0$
there and these points can be singular points of $\Delta^{\left(2\right)}$
only if $\frac{\vartheta_{1}\vartheta_{2}}{\tilde{Z}_{14}}=-a_{4}$
and 
\[
\frac{\vartheta_{2}}{\tilde{Z}_{14}}=\frac{\left(2a_{4}+4a_{2}+4a_{5}\right)Z_{0}}{6a_{5}+4a_{420}}.
\]
Multiplying both sides of this last equation by $\vartheta_{1}$ and
substituting, then recalling that at these points $a_{5}-\vartheta_{1}Z_{0}=0$
and that $a_{54320}=0$, we obtain 
\[
a_{5}a_{0}+a_{3}\left(a_{5}+a_{4}\right)=0
\]
that can be rewritten as the relation 
\[
Z_{0}\text{·}\left|\begin{array}{cc}
a_{4} & -a_{5}\\
a_{0}+a_{3} & a_{3}
\end{array}\right|=0
\]
given in \eqref{eq:bingo2}. 
\begin{lem}
\label{lem:Singular2}Singularities of the branch locus $\Delta^{\left(2\right)}$
only occur over 
\[
z=\left|\begin{array}{cc}
a_{4} & -a_{5}\\
a_{0}+a_{3} & a_{3}
\end{array}\right|=0
\]
and then only where 
\[
\frac{\vartheta_{2}}{\tilde{Z}_{14}}=\frac{\left(a_{4}+2a_{2}+2a_{5}\right)Z_{0}}{a_{5}-2a_{3}}.
\]
\end{lem}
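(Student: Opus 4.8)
The plan is to apply the Jacobian criterion to the branch locus $\Delta^{(2)}=\{F=0\}$ defined by \eqref{eq:delta2}, which I will write schematically as $F=P^{2}+Q\,Z_{23}-R\,Z_{23}^{2}$ with $P=\vartheta_{1}\vartheta_{2}+a_{4}\tilde{Z}_{14}$, after first cutting down the locus where singular points can possibly live. By Lemma \ref{lem:Singularities-of-the} the singular points of $\Delta^{(1)}$ all lie on $\left\{Z_{14}=\left(\vartheta_{1}Z_{0}-a_{5}\right)^{2}=0\right\}$, and this is precisely the center of the second modification $Q^{(2)}\to Q^{(1)}$. Away from that center $\Delta^{(2)}$ is isomorphic to $\Delta^{(1)}$ and hence smooth, so every singular point of $\Delta^{(2)}$ must lie over it, that is, on the exceptional divisor $\{Z_{23}=0\}$, where on the relevant chart $\tilde{Z}_{14}\neq0$.

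Once localized to $\{Z_{23}=0\}$, the equation reduces to $F=P^{2}$, so $F=0$ forces the perfect-square condition $\vartheta_{1}\vartheta_{2}+a_{4}\tilde{Z}_{14}=0$. Next I would differentiate $F=P^{2}+Q\,Z_{23}-R\,Z_{23}^{2}$ and evaluate at a point with $Z_{23}=0$ and $P=0$: the term $2P\,\nabla P$ drops out and only the $Z_{23}$-partial survives, so $\nabla F=0$ is equivalent to $Q=0$. Dividing the coefficient $Q$ by $\tilde{Z}_{14}^{2}\neq0$ gives
\[
\left(6a_{5}+4a_{420}\right)\vartheta_{2}=\left(4a_{5}+2a_{4}+4a_{2}\right)Z_{0}\,\tilde{Z}_{14},
\]
and using $6a_{5}+4a_{420}=2\left(a_{5}-2a_{3}\right)$, which follows from $a_{420}=-a_{53}$, this is exactly the claimed formula for $\vartheta_{2}/\tilde{Z}_{14}$.

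It then remains to extract the determinant. On $\{Z_{23}=0\}$ one has $a_{5}-\vartheta_{1}Z_{0}=\vartheta_{2}Z_{23}=0$ automatically, hence $\vartheta_{1}Z_{0}=a_{5}$, and $z=Z_{0}Z_{23}\tilde{Z}_{14}=0$, so these points lie over $S_{\mathrm{GUT}}$. Multiplying the displayed relation by $\vartheta_{1}$, substituting $\vartheta_{1}\vartheta_{2}=-a_{4}\tilde{Z}_{14}$ from the perfect-square condition together with $\vartheta_{1}Z_{0}=a_{5}$, and cancelling $\tilde{Z}_{14}$, I expect to reach (after simplification with $a_{54320}=0$) the relation $a_{5}a_{0}+a_{3}\left(a_{5}+a_{4}\right)=0$. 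This is the expansion of the $2\times2$ determinant in \eqref{eq:bingo2}, so indeed $z=\det=0$, which is the assertion of the lemma.

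The hard part will be the reduction in the first paragraph: one must track the two successive modifications carefully enough to be certain that $\Delta^{(2)}$ acquires no singular points outside the exceptional divisor $\{Z_{23}=0\}$, and on that divisor only where $\tilde{Z}_{14}\neq0$ — in particular that nothing new is created where $\{Z_{23}=0\}$ meets the lifted components $D_{1},D_{4}=\{\tilde{Z}_{14}=0\}$. Once the problem is localized there, the remaining work is the routine algebra above, driven entirely by the perfect-square structure of \eqref{eq:delta2} over $\{Z_{23}=0\}$ together with the identities $a_{420}=-a_{53}$ and $a_{54320}=0$.
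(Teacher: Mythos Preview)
Your proposal is correct and follows essentially the same route as the paper: the argument preceding the lemma statement invokes Lemma~\ref{lem:Singularities-of-the} to localize to $\{Z_{23}=0\}$ with $\tilde{Z}_{14}\neq 0$, reads off $\vartheta_{1}\vartheta_{2}/\tilde{Z}_{14}=-a_{4}$ from the perfect square, obtains the displayed value of $\vartheta_{2}/\tilde{Z}_{14}$ from the vanishing of the $Z_{23}$-coefficient, and then multiplies by $\vartheta_{1}$ and uses $a_{5}-\vartheta_{1}Z_{0}=0$ together with $a_{54320}=0$ to reach $a_{5}a_{0}+a_{3}(a_{5}+a_{4})=0$, exactly as you outline. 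Your residual worry about $\{Z_{23}=0\}\cap\{\tilde{Z}_{14}=0\}$ is harmless: there the equation \eqref{eq:delta2} reduces to $(\vartheta_{1}\vartheta_{2})^{2}=0$, forcing $\vartheta_{1}=0$, which is excluded since $[\vartheta_{1}:Z_{14}]$ are projective with $Z_{14}=\tilde{Z}_{14}Z_{23}=0$.
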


Over a general point of $\left\{ z=a_{5}=0\right\} $, $D_{0}$ splits
into the two components $D_{01}+D_{04}$ that come to coincide with
the specialization of $D_{2}+D_{3}$ over $\left\{ z=a_{5}=0\right\} $
to form the two components of multiplicity two in the Dynkin diagram
for $D_{5}$ as is shown by the relation

\[
\vartheta_{1}Z_{0}+\vartheta_{2}Z_{23}=a_{5}.
\]

Over a general point of $\left\{ z=a_{420}=a_{53}=0\right\} $ again
$D_{0}$ splits into two components, augmenting the extended Dynkin
diagram $\tilde{A}_{4}$ to the extended Dynkin diagram $\tilde{A}_{5}$.

\subsubsection{Singularities of higher codimension\label{subsec:Singularities-of-higher}}

We have seen in Lemma \ref{lem:Singular2} that, at singular points
of $\Delta^{\left(2\right)}$, $Z_{23}=0$ and $\tilde{Z}_{14}\text{\ensuremath{\neq}}0$
and they lie over 
\begin{equation}
\left\{ z=\left|\begin{array}{cc}
a_{4} & -a_{5}\\
a_{0}+a_{3} & a_{3}
\end{array}\right|=0\right\} \subseteq B_{3}.\label{eq:keyformula}
\end{equation}
In fact, if one rewrites \eqref{eq:W0} in the form 
\begin{equation}
\left|\begin{array}{cc}
w_{2}-\left(\vartheta_{1}\vartheta_{2}+a_{4}\tilde{Z}_{14}\right) & \tilde{Z}_{14}^{2}Z_{23}\\
-A & w_{2}+\left(\vartheta_{1}\vartheta_{2}+a_{4}\tilde{Z}_{14}\right)
\end{array}\right|=0\label{eq:matrix1}
\end{equation}
where 
\[
\begin{array}{c}
-A=\left(\left(6a_{5}+4a_{420}\right)\vartheta_{2}-\left(4a_{5}+2a_{4}+4a_{2}\right)Z_{0}\tilde{Z}_{14}\right)\\
-\left(6\vartheta_{2}^{2}-8\vartheta_{2}Z_{0}\tilde{Z}_{14}+3\tilde{Z}_{14}^{2}Z_{0}^{2}\right)Z_{23}.
\end{array}
\]
over $\left\{ \tilde{Z}_{14}=0\right\} $ the equation of $W_{4}^{\left(2\right)}$
becomes 
\[
\left(w_{2}-\vartheta_{1}\vartheta_{2}\right)\left(w_{2}+\vartheta_{1}\vartheta_{2}\right)=0.
\]
Recalling that $\tilde{Z}_{14}$ and $\vartheta_{1}\vartheta_{2}$
cannot vanish simultaneously, one sees that 
\begin{equation}
C_{\mathbf{\bar{5}}}^{\left(44\right)}\cap\left\{ \tilde{Z}_{14}=0\right\} =\textrm{Ø}\label{eq:Ccurve}
\end{equation}
where $C_{\mathbf{\bar{5}}}^{\left(44\right)}$ is given by the vanishing
\begin{equation}
\begin{array}{c}
w_{2}=\vartheta_{1}\vartheta_{2}+a_{4}\tilde{Z}_{14}=Z_{23}=\\
\left(a_{5}-2a_{3}\right)\vartheta_{2}-\left(a_{4}+2a_{2}+2a_{5}\right)Z_{0}\tilde{Z}_{14}=0
\end{array}\label{eq:Higgs curve}
\end{equation}
of all four entries in the $2\times2$ matrix \eqref{eq:matrix1}
and is a simple nodal locus of $W_{4}^{\left(2\right)}$. Since $\tilde{Z}_{14}$
is never zero along $C_{\mathbf{\bar{5}}}^{\left(44\right)}$ we conclude
that 
\[
\left(D_{1}\cup D_{4}\right)\cap C_{\mathbf{\bar{5}}}^{\left(44\right)}=\textrm{Ø}
\]
where $C_{\mathbf{\bar{5}}}^{\left(44\right)}$ is given on $\left\{ Z_{23}=0\right\} $
by the equations
\[
\begin{array}{c}
\vartheta_{1}\frac{\vartheta_{2}}{\tilde{Z}_{14}}+a_{4}=0\\
\left(a_{5}-2a_{3}\right)\frac{\vartheta_{2}}{\tilde{Z}_{14}}-\left(a_{4}+2a_{2}+2a_{5}\right)Z_{0}=0.
\end{array}
\]

Since $Z_{0}$ and $Z_{23}$ are symmetric and all other variables
are skew, the four entries in \eqref{eq:matrix1} are all invariant
under the action of $\tilde{\beta}_{4}/\beta_{3}$. Therefore we can
take either small blow-up to complete the crepant resolution.

\subsection{The smooth model $\tilde{W}_{4}$}

After taking the small resolution of the nodal curve $C_{\mathbf{\bar{5}}}^{\left(44\right)}$
in $W_{4}^{\left(2\right)}$ as described just above, we obtain the
inclusion of proper transform 
\[
\tilde{\Delta}\subseteq\tilde{W}_{4}
\]
where $\tilde{\Delta}$ is non-singular.

We retain the notations $D_{j}$, $j=0,\ldots,4$, for the proper
transforms in $\tilde{W}_{4}$ of the corresponding divisors in $W_{4}^{\left(2\right)}$.
We will let $G_{j}$ denote the canonical section of the line bundle
determined by the divisor $D_{j}$, that is 
\[
D_{j}=\left\{ G_{j}=0\right\} .
\]
\begin{rem}
The exceptional divisors $D_{j}$ are identified with the positive
simple roots of $SU\left(5\right)_{gauge}$ in such a way that the
involution $\beta_{3}$ on $B_{3}$ and the action
\[
\frac{w_{0}}{\vartheta_{0}}\mapsto-\frac{w_{0}}{\vartheta_{0}}
\]
on the fibers of $\tilde{W}_{4}/B_{3}$ induces the non-trivial geometric
action 
\begin{equation}
\begin{array}{c}
D_{0}=-\left(D_{1}+D_{2}+D_{3}+D_{4}\right)\mapsto-D_{0}\\
D_{1}\mapsto-D_{4}\\
D_{2}\mapsto-D_{3}\\
D_{3}\mapsto-D_{2}\\
D_{4}\mapsto-D_{1}
\end{array}\label{eq:Croot2}
\end{equation}
on the exceptional fibers of $\tilde{W}_{4}$. As explained in \cite{Clemens-1},
the action of $\tilde{\beta}_{4}$ on the roots $D_{1},\ldots,D_{4}$
reverses the choice of positive Weyl chamber used in making the dictionary
between exceptional divisors of $\tilde{W}_{4}$ and the $SU\left(5\right)_{gauge}$-roots.
This reversal of positive Weyl chamber exactly reverses the non-trivial
involution \eqref{eq:Croot2} thereby preserving the $SU\left(5\right)_{gauge}$-symmetry
of the quotient $W_{4}^{\vee}$.\footnote{This assertion is reflected in the fact that the action of $\tilde{\beta}_{4}$
leaves all entries of equation (4.10) of \cite{Clemens-1} invariant.} 
\end{rem}

The identity 
\[
a_{5}=Z_{0}\vartheta_{1}+Z_{23}\vartheta_{2}
\]
intertwines $D_{0}$ and $D_{2}+D_{3}$ above $\left\{ z=a_{5}=0\right\} $.
After the first modification, the equation of the spectral variety
on $\left\{ Z_{0}=0\right\} $ is 
\[
\left(a_{5}\vartheta_{1}+a_{4}Z_{14}\right)^{2}-4a_{5}a_{420}Z_{14}^{2}=0
\]
whereas the equations of the center of the second modification are
\[
a_{5}-\vartheta_{1}=Z_{14}=0.
\]
Since $\vartheta_{1}$ and $Z_{14}$ cannot vanish simultaneously,
therefore after the second modification, the proper transform of $\left\{ Z_{0}=0\right\} $
will only intersect $\left\{ Z_{23}=0\right\} $ along the fiber of
$\left\{ Z_{23}=0\right\} $ over $\left\{ z=a_{5}=0\right\} $. If
the modifications were done in the opposite order, the divisors $\left\{ Z_{0}=0\right\} $
and $\left\{ Z_{23}=0\right\} $ in $\tilde{\Delta}$ would have been
flopped.

\subsection{The spectral divisor in $\tilde{W}_{4}$\label{sec:The-spectral-variety}}

Referring to \eqref{eq:specdiv}, the spectral divisor is obviously
invariant under the involution $\left(x,y\right)\mapsto\left(x,-y\right)$.
To compute the image 
\[
\mathcal{C}_{0}^{\left(4\right)}+\mathcal{C}_{0}^{\left(1\right)}\subseteq Q
\]
of $\mathcal{D}^{\left(4\right)}+\mathcal{D}^{\left(1\right)}$ we
write 
\[
t=\frac{y}{x}=\frac{Y-\left(z+a_{420}\right)z^{2}w}{X+z^{2}w}
\]
so that 
\[
t-z=\frac{Y-zX-\left(2z+a_{420}\right)z^{2}w}{X+z^{2}w}
\]
and 
\[
t+z=\frac{Y+zX-a_{420}z^{2}w}{X+z^{2}w}.
\]
So, setting $\frac{X}{z^{2}w}=1$ and $\frac{Y}{z^{2}w}=\vartheta$,
we have
\[
\begin{array}{c}
t=\frac{\vartheta_{0}-\left(z+a_{420}\right)}{2}\\
t-z=\frac{\vartheta_{0}-\left(3z+a_{420}\right)}{2}\\
t+z=\frac{\vartheta_{0}+z-a_{420}}{2}
\end{array}
\]
and so by \eqref{eq:preHiggs} the affine equations 
\begin{equation}
\begin{array}{c}
\left(\vartheta_{0}-\left(z+a_{420}\right)\right)^{2}\left(\left(a_{5}\left(\vartheta_{0}-\left(z+a_{420}\right)\right)+2a_{4}z\right)\left(\vartheta_{0}+z-a_{420}\right)-4a_{420}z^{2}\right)\\
-8a_{0}z^{3}\left(\vartheta_{0}+z-a_{420}\right)=0
\end{array}\label{eq:speceq}
\end{equation}
and 
\[
\vartheta_{0}-\left(3z+a_{420}\right)=0
\]
are the respective equations for the image $\mathcal{C}_{0}^{\left(4\right)}+\mathcal{C}_{0}^{\left(1\right)}\subseteq Q$
of the components of the spectral divisor. Setting $z=0$ we obtain
{]}
\begin{equation}
\begin{array}{c}
a_{5}\left(\vartheta_{0}-a_{420}\right)^{4}=0\\
\vartheta_{0}-a_{420}=0
\end{array}\label{eq:sGUT}
\end{equation}
consistent with the fact that the inverse image of both $\mathcal{C}_{0}^{\left(4\right)}$
and $\mathcal{C}_{0}^{\left(1\right)}$ in $\overline{W}_{4}$ are
reducible, only one of their two components correspond to the image
\begin{equation}
\mathcal{\bar{C}}_{Higgs}=\mathcal{\bar{C}}_{Higgs}^{\left(4\right)}+\mathcal{\bar{C}}_{Higgs}^{\left(1\right)}\subseteq\overline{W}_{4}\label{eq:specHiggs}
\end{equation}
of $\mathcal{D}^{\left(4\right)}+\mathcal{D}^{\left(1\right)}$ and
only $\mathcal{C}_{0}^{\left(4\right)}$ intersects the proper transform
$\left\{ X=0\right\} $ of $\left(\zeta\right)+\left(\tau\right)$
and it only intersects simply along one of the two sections. We let
\[
\mathcal{C}_{Higgs}=\mathcal{C}_{Higgs}^{\left(4\right)}+\mathcal{C}_{Higgs}^{\left(1\right)}\subseteq\tilde{Q}=Q^{\left(2\right)}
\]
denote the proper (also the total) transform of \eqref{eq:specHiggs}
and 
\[
\mathcal{\tilde{C}}_{Higgs}\subseteq\tilde{W}_{4}
\]
denote the proper (also the total) transform of $\mathcal{\bar{C}}_{Higgs}$.
Then 
\[
\mathcal{C}_{Higgs}:=\left(\pi_{\tilde{Q}}\right)_{\ast}\left(\tilde{\mathcal{C}}_{Higgs}\right).
\]

Under the first modification
\[
\begin{array}{c}
t=\frac{\vartheta_{1}Z_{0}-\left(Z_{0}Z_{14}+a_{420}\right)}{2}\\
t-z=\frac{\vartheta_{1}Z_{0}-\left(3Z_{0}Z_{14}+a_{420}\right)}{2}\\
t+z=\frac{\vartheta_{1}Z_{0}-\left(Z_{0}Z_{14}+a_{420}\right)}{2}
\end{array}
\]
 so that \eqref{eq:speceq} becomes
\[
\begin{array}{c}
\left(\vartheta_{1}Z_{0}-\left(Z_{14}Z_{0}+a_{420}\right)\right)^{2}\left(\begin{array}{c}
\left(a_{5}\left(\vartheta_{1}Z_{0}-\left(Z_{14}Z_{0}+a_{420}\right)\right)+2a_{4}Z_{14}Z_{0}\right)\left(\vartheta_{1}Z_{0}+Z_{14}Z_{0}-a_{420}\right)\\
-4a_{420}\left(Z_{14}Z_{0}\right)^{2}
\end{array}\right)\\
-8a_{0}\left(Z_{14}Z_{0}\right)^{3}\left(\vartheta_{1}Z_{0}+Z_{14}Z_{0}-a_{420}\right)=0.
\end{array}
\]
This last implies that $\mathcal{C}_{Higgs}$ is the total transform
of $\mathcal{C}_{0}$ so that 
\[
\mathcal{C}_{Higgs}\cdot Z_{0}\subseteq\left\{ a_{420}^{4}\text{·}a_{5}=Z_{0}=0\right\} 
\]
and therefore

\begin{equation}
\mathcal{C}_{Higgs}^{\left(4\right)}\times_{B_{3}}S_{\mathrm{GUT}}\subseteq\left\{ Z_{14}=0\right\} \label{eq:specGUT}
\end{equation}
where its equation is
\[
\left(\vartheta_{1}Z_{0}-a_{420}\right)^{4}=0
\]
while the equation for $\mathcal{C}_{Higgs}^{\left(1\right)}$ is
just $\vartheta_{1}Z_{0}-a_{420}=0$. Since $\vartheta_{1}$ and $Z_{14}$
cannot vanish simultaneously, $\vartheta_{1}$ can only vanish on
$\mathcal{C}_{Higgs}\times_{B_{3}}S_{\mathrm{GUT}}$ when $a_{420}=0$
and neither component of $\tilde{\mathcal{C}}_{Higgs}\times_{B_{3}}S_{\mathrm{GUT}}$
can lie in the inherited component $D_{1}\subseteq\tilde{W}_{4}$
containing the proper transform of $\left(\zeta\right)$. Thus 
\begin{equation}
\mathcal{\tilde{C}}_{Higgs}\times_{B_{3}}S_{\mathrm{GUT}}\subseteq\left\{ D_{4}=0\right\} .\label{eq:spec4}
\end{equation}

Under the second modification
\[
\begin{array}{c}
a_{5}=\vartheta_{1}Z_{0}+\vartheta_{2}Z_{23}\\
Z_{14}=\tilde{Z}_{14}Z_{23}.
\end{array}
\]
Now $\mathcal{C}_{Higgs}\times S_{\mathrm{GUT}}$ cannot lie in $\left\{ Z_{23}=0\right\} $
since there $\vartheta_{1}Z_{0}=a_{5}$ whereas on $\mathcal{C}_{Higgs}\times S_{\mathrm{GUT}}$
we have $\vartheta_{1}Z_{0}=a_{420}$. Therefore
\[
\mathcal{C}_{Higgs}\times S_{\mathrm{GUT}}\subseteq\tilde{Z}_{14}.
\]

Furthermore
\[
\begin{array}{c}
t=\frac{\vartheta_{1}Z_{0}-\left(Z_{0}Z_{14}+a_{420}\right)}{2}=\frac{a_{5}-\vartheta_{2}Z_{23}-\left(Z_{0}\tilde{Z}_{14}Z_{23}+a_{420}\right)}{2}\\
t-z=\frac{\vartheta_{1}Z_{0}-\left(3Z_{0}Z_{14}+a_{420}\right)}{2}=\frac{a_{5}-\vartheta_{2}Z_{23}-\left(3Z_{0}\tilde{Z}_{14}Z_{23}+a_{420}\right)}{2}\\
t+z=\frac{\vartheta_{1}Z_{0}-\left(Z_{0}Z_{14}+a_{420}\right)}{2}=\frac{a_{5}-\vartheta_{2}Z_{23}-\left(Z_{0}\tilde{Z}_{14}Z_{23}+a_{420}\right)}{2}.
\end{array}
\]
So substituting as above one can produce a (somewhat opaque) equation
for $\mathcal{C}_{Higgs}$ in $\tilde{Q}$.

Finally, by \eqref{eq:speceq}, $\mathcal{C}_{Higgs}^{\left(4\right)}+\mathcal{C}_{Higgs}^{\left(1\right)}$
has cohomology class 
\begin{equation}
\left(4\left(X_{2}\right)+5N\right)+\left(\left(X_{2}\right)+N\right)\label{eq:WF}
\end{equation}
in $Q.$

\subsubsection{Locating matter curves and the Higgs curve }

Referring to \eqref{eq:sGUT} $S_{\mathrm{GUT}}\subseteq B_{3}$ is
lifted into $\tilde{W}_{4}$ via 
\[
\left\{ \vartheta_{1}Z_{0}-a_{420}=\tilde{Z}_{14}=0\right\} \subseteq\tilde{Q}
\]
with the property that $\vartheta_{1}$ is never zero on the lifting.
The matter and Higgs curves will all lie in this lifting.

The matter curves will be given by 
\begin{equation}
\Sigma_{\mathbf{10}}^{\left(4\right)}:=\left\{ a_{5}=\tilde{Z}_{14}=\vartheta_{1}Z_{0}-a_{420}=0\right\} \subseteq\mathcal{C}_{Higgs}^{\left(4\right)}\label{eq:intinfo}
\end{equation}
and, since $\vartheta_{1}$ and $\tilde{Z}_{14}$ cannot vanish simultaneously,

\begin{equation}
\Sigma_{\mathbf{\bar{5}}}^{\left(41\right)}:=\left\{ Z_{0}=a_{420}=\tilde{Z}_{14}=0\right\} \subseteq\mathcal{C}_{Higgs}^{\left(4\right)}..\label{eq:intinfo'}
\end{equation}

By \eqref{eq:Ccurve} and \eqref{eq:Higgs curve} the singular curve
$C_{\mathbf{\bar{5}}}^{\left(44\right)}$ determining the Higgs curve
must lie on 
\[
\begin{array}{c}
\vartheta_{1}\vartheta_{2}+a_{4}\tilde{Z}_{14}=Z_{23}=\\
\left(a_{5}-2a_{3}\right)\vartheta_{2}-\left(a_{4}+2a_{2}-2a_{5}\right)Z_{0}\tilde{Z}_{14}=0
\end{array}
\]
as well as 
\[
\left\{ \left|\begin{array}{cc}
a_{4} & -a_{5}\\
a_{0}+a_{3} & a_{3}
\end{array}\right|=0\right\} \cap\left\{ Z_{23}=0\right\} 
\]
so that $\vartheta_{1}\vartheta_{2}\text{\ensuremath{\neq}}0$ and
so by Lemma \ref{lem:Singular2}
\[
\frac{\vartheta_{2}}{\tilde{Z}_{14}}=\frac{\left(a_{4}+2a_{2}+2a_{5}\right)Z_{0}}{a_{5}-2a_{3}}
\]
and therefore
\begin{equation}
_{\Sigma_{\mathbf{\bar{5}}}^{\left(44\right)}:=\left\{ \left|\begin{array}{cc}
a_{4} & a_{5}\\
a_{0}+a_{3} & -a_{3}
\end{array}\right|=\tilde{Z}_{14}=\vartheta_{1}Z_{0}-a_{420}=0\right\} \subseteq\mathcal{C}_{Higgs}^{\left(4\right)}.}\label{eq:sigma44}
\end{equation}
Furthermore the two values 
\[
\frac{\tilde{t}}{\tilde{z}}=\text{\textpm}\sqrt{\frac{a_{0}}{a_{54}}}=\frac{\left(a_{5}-\vartheta_{2}Z_{23}-a_{420}\right)-Z_{0}\tilde{Z}_{14}Z_{23}}{2}
\]
in \eqref{eq:Higgspush} lie in $\mathcal{C}_{Higgs}^{\left(4\right)}$
but do not meet the singular curve $C_{\mathbf{44}}\subseteq\tilde{\Delta}$. 

The lifted spectral curves $\tilde{\Sigma}_{\mathbf{10}}^{\left(4\right)},\,\tilde{\Sigma}_{\mathbf{\bar{5}}}^{\left(41\right)},\,\tilde{\Sigma}_{\mathbf{\bar{5}}}^{\left(44\right)}\subseteq\mathcal{\tilde{C}}_{Higgs}$
and by \eqref{eq:spec4} have associated spectral surfaces given by
\begin{equation}
\begin{array}{c}
\tilde{E}_{\mathbf{10}}^{\left(4\right)}:=\tilde{\Sigma}_{\mathbf{10}}^{\left(4\right)}\times_{S_{\mathrm{GUT}}}D_{4}\cong E_{\mathbf{10}}^{\left(4\right)}:=\Sigma_{\mathbf{10}}^{\left(4\right)}\times_{S_{\mathrm{GUT}}}\left\{ \tilde{Z}_{14}=0\right\} .\\
\tilde{E}_{\mathbf{\bar{5}}}^{\left(41\right)}:=\tilde{\Sigma}_{\mathbf{\bar{5}}}^{\left(41\right)}\times_{S_{\mathrm{GUT}}}D_{4}\cong\Sigma_{\mathbf{\bar{5}}}^{\left(41\right)}:=\Sigma_{\mathbf{\bar{5}}}^{\left(41\right)}\times_{S_{\mathrm{GUT}}}\left\{ \tilde{Z}_{14}=0\right\} \\
\tilde{E}_{\mathbf{\bar{5}}}^{\left(44\right)}:=\tilde{\Sigma}_{\mathbf{\bar{5}}}^{\left(44\right)}\times_{S_{\mathrm{GUT}}}D_{4}\cong\Sigma_{\mathbf{\bar{5}}}^{\left(44\right)}:=\Sigma_{\mathbf{\bar{5}}}^{\left(44\right)}\times_{S_{\mathrm{GUT}}}\left\{ \tilde{Z}_{14}=0\right\} .
\end{array}\label{eq:specsetup}
\end{equation}

\subsection{Topology of $\tilde{W}_{4}$}

\subsubsection{Picard groups}

The Picard group of $\tilde{Q}$ is freely generated by generators
\[
\begin{array}{c}
\pi_{B_{3}}^{\ast}\left(\mathrm{Pic}\left(B_{3}\right)\right)\\
\left(X\right),\left(Z_{0}\right),\left(Z_{23}\right)
\end{array}
\]
where $\left(X\right)$ denotes the divisor given by $X_{0}=0$, etc.
The Picard group of $\tilde{W}_{4}$ is generated by 
\[
\pi_{\tilde{Q}}^{\ast}\mathrm{Pic}\left(\tilde{Q}\right),\left(\tilde{\zeta}\right),\left(\tilde{\tau}\right),D_{0},\ldots,D_{4}
\]
with relations 
\begin{equation}
\begin{array}{c}
\pi_{\tilde{Q}}^{\ast}\left(\left(X_{0}\right)\right)=\left(\tilde{\zeta}\right)+\left(\tilde{\tau}\right)\\
\pi_{\tilde{Q}}^{\ast}\left(\left(Z_{34}\right)\right)=D_{2}+D_{3}\\
\pi_{\tilde{Q}}^{\ast}\left(\left(Z_{1}\right)\right)=D_{0}\\
\tilde{\pi}_{B_{3}}^{\ast}\left(S_{\mathrm{GUT}}\right)=\sum_{j=0}^{4}D_{j}.
\end{array}\label{eq:Picards}
\end{equation}
The cohomology class of $\tilde{\Delta}\subseteq\tilde{Q}$ is given
by 
\begin{equation}
4\left(\left(X_{2}\right)+N\right)-2\left(\left(Z_{0}\right)+\left(Z_{23}\right)\right)\label{eq:Higgshom-1}
\end{equation}
and 
\[
K_{\tilde{Q}}=-\left(2\left(X_{2}\right)+2N\right)+\left(\left(Z_{0}\right)+\left(Z_{23}\right)\right)
\]
where, as before, $N=\tilde{\pi}^{\ast}\left(c_{1}\left(\mathcal{N}\right)\right)$
and $\mathcal{N}$ is the line bundle on $B_{3}$ whose sections include
$z$ and the $a_{j}$. Thus the Picard group of $\tilde{\Delta}$
admits the effective square root 
\[
2\left(\left(X_{2}\right)+N\right)-\left(\left(Z_{0}\right)+\left(Z_{23}\right)\right)
\]
of the branch locus of $\pi_{\tilde{Q}}$. Also 
\begin{equation}
\begin{array}{c}
\mathcal{C}_{Higgs}=\mathcal{C}_{Higgs}^{\left(4\right)}+\mathcal{C}_{Higgs}^{\left(1\right)}=\mathcal{C}_{Higgs}^{\left(4\right)}+image\left(\tilde{\tau}\right)\\
\equiv5\left(X_{2}\right)+5N\in Pic\left(\tilde{Q}\right).
\end{array}\label{eq:Higgsint}
\end{equation}
We have the following linear equivalences on $\tilde{Q}$: 
\begin{equation}
\begin{array}{c}
\left(\vartheta_{0}\right)=\left(Y_{0}\right)-\left(X_{0}\right)\equiv N\\
\left(\vartheta_{1}\right)=\left(Y_{1}\right)-\left(X_{1}\right)=\left(\left(Y_{0}\right)-\left(Z_{0}\right)\right)-\left(X_{0}\right)\\
\left(\vartheta_{2}\right)=\left(Y_{2}\right)-\left(X_{2}\right)=\left(\left(Y_{0}\right)-\left(Z_{23}\right)\right)-\left(X_{0}\right)\\
\left(\vartheta_{1}\right)+\left(Z_{0}\right)=\left(\vartheta_{2}\right)+\left(Z_{23}\right)\equiv N.
\end{array}\label{eq:inttbl}
\end{equation}
Finally, since none of the blow-ups in the resolution over $B_{3}$
touch $\left(X_{0}\right)$, from now on we will simply identify 
\[
\left(X\right):=\left(X_{0}\right)=\left(X_{1}\right)=\left(X_{2}\right).
\]

\subsubsection{Intersections in $Q^{\left(2\right)}$}

We compute push-forwards to $B_{3}$ of intersections in $\tilde{Q}$
as follows. From the fact that $\left(\pi_{B_{3}}\right)_{\ast}\left(\left(Z_{0}\right)\cdot\left(Z_{23}\right)\right)$
is supported on the curve $\left\{ z=a_{5}=0\right\} $ we conclude
that 
\[
\left(\pi_{B_{3}}\right)_{\ast}\left(\left(Z_{0}\right)\cdot\left(Z_{23}\right)\right)\equiv0.
\]
Appealing to \eqref{eq:Picards} and \eqref{eq:Higgshom-1}, we compute
\begin{equation}
\begin{array}{c}
\left(\pi_{B_{3}}\right)_{\ast}\left(\left(X\right)\cdot_{\tilde{Q}}\left(X\right)\right)\equiv-N\\
\left(\pi_{B_{3}}\right)_{\ast}\left(\left(X\right)\cdot_{\tilde{Q}}\left(Z_{0}\right)\right)\equiv\left(\pi_{B_{3}}\right)_{\ast}\left(\left(X\right)\cdot_{\tilde{Q}}\left(Z_{23}\right)\right)\equiv0\\
\left(\pi_{B_{3}}\right)_{\ast}\left(\left(X\right)\cdot_{\tilde{Q}}\mathcal{C}_{Higgs}\right)\equiv0\\
\left(\pi_{B_{3}}\right)_{\ast}\left(\left(Z_{0}\right)^{2}\right)\equiv-N\\
\left(\pi_{B_{3}}\right)_{\ast}\left(\left(Z_{23}\right)^{2}\right)\equiv-N.
\end{array}\label{eq:Higgsgen}
\end{equation}

\section{\label{sec:Higgs-line-bundle}Higgs line bundle and the $G$-flux }

\subsection{Physical interpretation}

We begin with the projection map 
\[
\mathcal{D}\subseteq B_{3}\times\mathbb{P}_{\left[\tilde{t},\tilde{z}\right]}\overset{\psi}{\longrightarrow}B_{3}
\]
Recall that $E_{8}$ has the subgroup $SU\left(5\right)_{gauge}\times SU\left(4\right)_{Higgs}\times U\left(1\right)_{X}$
and the Higgs operator is a non-trivial element of the center of the
enveloping algebra of $SU\left(4\right)_{Higgs}$. We use it to break
$E_{8}$-symmetry of the subgroup $SU\left(4\right)_{Higgs}$. To
accomplish this we must identify a line bundle 
\[
\mathcal{L}_{Higgs}
\]
on $\mathcal{D}$ such that the first Chern class of the rank-4 vector
bundle 
\[
\mathcal{E}:=\psi_{\ast}\left(\mathcal{L}_{Higgs}\right)
\]
is zero. We then geometrically specify the eigenvalues of the Higgs
operator acting on the complexified Cartan subalgeba of $SU\left(4\right)_{Higgs}$.We
identify the fibers of $\mathcal{E}$ with the representation space
for the fundamental representation of $SU\left(4\right)$ so that
the roots 
\[
\frac{t}{z}=\tilde{t}_{1},\ldots,\tilde{t}_{4}
\]
of 
\[
a_{5}t^{4}+a_{54}t^{3}z-a_{20}t^{2}z^{2}-a_{0}z^{3}\left(t+z\right)=0
\]
in \eqref{eq:preHiggs} become the eigenvalues of the Higgs operator
on the eigenvalues of the standard representation of $SU\left(4\right)_{Higgs}$.
The condition 
\[
c_{1}\left(\mathcal{E}\right)=0
\]
becomes the condition that the sum of these eigenvalues is zero.

\subsection{The Higgs bundle}

So we begin with any line bundle $\mathcal{L}$ on $\mathcal{D}^{\left(4\right)}$.
By the Grothendieck Riemann-Roch theorem, 
\[
c_{1}\left(\psi_{\ast}\left(\mathcal{L}\right)\right)=\psi_{\ast}\left(c_{1}\left(\mathcal{L}\right)\right)-\psi_{\ast}\left(c_{1}\left(\mathcal{O}_{\mathcal{D}}\left(\frac{R}{2}\right)\right)\right)
\]
where $R$ is the ramification divisor of $\psi$.

To construct $\mathcal{L}_{Higgs}$ we proceed as follows. Applying
the formula for the discriminant of a fourth-degree equation \eqref{eq:loc1}
we obtain that the discriminant has class $6\text{·}N$ on $B_{3}$.
Thus we must have 
\[
\psi_{\ast}\left(c_{1}\left(\mathcal{O}_{\mathcal{D}}\left(\frac{R}{2}\right)\right)\right)=c_{1}\left(\mathcal{N}^{3}\right)
\]
so that we must choose an effective divisor class on $\mathcal{D}^{\left(4\right)}$
whose push-forward to $B_{3}$ has class $c_{1}\left(\mathcal{N}^{3}\right)$.

One obvious line bundle to use is 
\begin{equation}
\left.\mathcal{O}_{B_{3}}\left(N\right)\boxtimes\mathcal{O}_{\mathbb{P}_{\left[\tilde{t},\tilde{z}\right]}}\left(-1\right)\right|_{\mathcal{D}}\label{eq:approxHiggs}
\end{equation}
that pushes forward on $B_{3}$ to 
\[
3N.
\]

We denote the line bundle on $\mathcal{D}^{\left(4\right)}$ given
by this divisor as 
\[
\mathcal{L}_{Higgs}^{\left(0\right)}=\left.\mathcal{O}_{B_{3}}\left(N\right)\boxtimes\mathcal{O}_{\mathbb{P}_{\left[U,V\right]}}\left(-1\right)\right|_{\mathcal{D}^{\left(4\right)}}.
\]
From Subsection \ref{sec:The-spectral-variety} the matter curves
are given by 
\[
\begin{array}{c}
\Sigma_{\mathbf{10}}^{\left(4\right)}=\left\{ t/z=\infty\right\} \text{·}\mathcal{D}^{\left(4\right)}\\
\Sigma_{\mathbf{\bar{5}}}^{\left(41\right)}=\left\{ t/z=1\right\} \text{·}\mathcal{D}^{\left(4\right)}
\end{array}
\]
and, as in \cite{Clemens-2} the Higgs curve is given by a branched
double cover $\hat{\Sigma}_{\mathbf{\bar{5}}}^{\left(44\right)}\subseteq\mathcal{D}^{\left(4\right)}$
of the curve 
\[
\Sigma_{\mathbf{\bar{5}}}^{\left(44\right)}=\left\{ z=\left|\begin{array}{cc}
a_{4} & -a_{5}\\
a_{3}+a_{0} & a_{3}
\end{array}\right|=0\right\} \subseteq B_{3}.
\]
We will need the skew component of the push-forward to $\Sigma_{\mathbf{\bar{5}}}^{\left(44\right)}$
of $\mathcal{L}_{Higgs}^{\left(0\right)}$ . Since the skew component
of the restriction of $\mathcal{O}_{\mathbb{P}_{\left[\tilde{t},\tilde{z}\right]}}\left(-1\right)$
is trivial, we need only compute the skew component of the push-forward
of the pull-back of $\mathcal{O}_{B_{3}}\left(N\right)$. Again by
the Grothendieck Riemann-Roch theorem, this is a line bundle on $\Sigma_{\mathbf{\bar{5}}}^{\left(44\right)}$
whose square is the restriction of $\mathcal{O}_{B_{3}}\left(2N\right)$.
We will denote this line bundle as $\mathcal{L}_{Higgs}^{\left(0,-\right)}$.
Since the restriction of $\mathcal{O}_{B_{3}}\left(3N\right)$ to
$\Sigma_{\mathbf{\bar{5}}}^{\left(44\right)}$ is its canonical bundle,
the restriction of $\mathcal{L}_{Higgs}^{\left(0,-\right)}$ will
be a theta-characteristic of $\Sigma_{\mathbf{\bar{5}}}^{\left(44\right)}$
.

\subsection{Restrictions to the matter and Higgs curves\label{subsec:Restrictions-to-the}}

From (4.1) of \cite{Clemens-2} we have 
\[
c_{1}\left(K_{B_{3}}^{-1}\right)^{3}=12
\]
and so 
\[
\begin{array}{c}
deg\left(K_{\Sigma_{\mathbf{10}}^{\left(4\right)}}\right)=N^{3}=12\\
deg\left(K_{\Sigma_{\mathbf{\bar{5}}}^{\left(41\right)}}\right)=N^{3}=12\\
deg\left(K_{\Sigma_{\mathbf{\bar{5}}}^{\left(44\right)}}\right)=4N^{3}=48.
\end{array}
\]
Since the degree of the canonical bundle of a curve is $2g-2$, the
genus of $\Sigma_{\mathbf{\bar{5}}}^{\left(44\right)}$ is $25$ and
the genus of both $\Sigma_{\mathbf{10}}^{\left(4\right)}$ and $\Sigma_{\mathbf{\bar{5}}}^{\left(41\right)}$
is $7$.

To compute with $\mathcal{L}_{Higgs}^{\left(0\right)}$ we work on
the preimage $\mathcal{D}$ of $\mathcal{C}_{Higgs}$ in $B_{3}\times\mathbb{P}_{\left[U,V\right]}$
where 
\[
\mathcal{L}_{Higgs}^{\left(0\right)}=K_{B_{3}}^{-1}\boxtimes\mathcal{O}_{\mathbb{P}_{\left[U,V\right]}}\left(-1\right).
\]
Thus 
\[
\left.\mathcal{L}_{Higgs}^{\left(0\right)}\right|_{\Sigma_{\mathbf{10}}^{\left(4\right)}}=\left.\mathcal{N}\right|_{\Sigma_{\mathbf{10}}^{\left(4\right)}}=\left.K_{B_{3}}^{-1}\right|_{\Sigma_{\mathbf{10}}^{\left(4\right)}}=K_{\Sigma_{\mathbf{10}}^{\left(4\right)}}
\]
and 
\[
\left.\mathcal{L}_{Higgs}^{\left(0\right)}\right|_{\Sigma_{\mathbf{\bar{5}}}^{\left(41\right)}}=\left.\mathcal{N}\right|_{\Sigma_{\mathbf{\bar{5}}}^{\left(41\right)}}=\left.K_{B_{3}}^{-1}\right|_{\Sigma_{\mathbf{\bar{5}}}^{\left(41\right)}}=K_{\Sigma_{\mathbf{\bar{5}}}^{\left(41\right)}}.
\]
As we showed just above 
\[
\left.\mathcal{L}_{Higgs}^{\left(0,-\right)}\right|_{\Sigma_{\mathbf{\bar{5}}}^{\left(44\right)}}=\left.\mathcal{N}\right|_{\Sigma_{\mathbf{\bar{5}}}^{\left(44\right)}}
\]
so that, in particular 
\[
\left(\left.\mathcal{L}_{Higgs}^{\left(0,-\right)}\right|_{\Sigma_{\mathbf{\bar{5}}}^{\left(44\right)}}\right)^{2}=K_{\Sigma_{\mathbf{\bar{5}}}^{\left(44\right)}}.
\]
In fact, $\left.\mathcal{L}_{Higgs}^{\left(0\right)}\right|_{\Sigma_{\mathbf{\bar{5}}}^{\left(44\right)}}$
is nothing more than the theta-characteristic $Z_{2}\text{·}Z$ in
§5.1 of \cite{Clemens-2} for $Z_{2}=\Sigma_{\mathbf{\bar{5}}}^{\left(44\right)}$.

Thus 
\[
\begin{array}{c}
h^{0}\left(\left.\mathcal{L}_{Higgs}^{\left(0\right)}\right|_{\Sigma_{\mathbf{10}}^{\left(4\right)}}\right)-h^{1}\left(\left.\mathcal{L}_{Higgs}^{\left(0\right)}\right|_{\Sigma_{\mathbf{10}}^{\left(4\right)}}\right)=7-1=6\\
h^{0}\left(\left.\mathcal{L}_{Higgs}^{\left(0\right)}\right|_{\Sigma_{\mathbf{\bar{5}}}^{\left(41\right)}}\right)-h^{1}\left(\left.\mathcal{L}_{Higgs}^{\left(0\right)}\right|_{\Sigma_{\mathbf{\bar{5}}}^{\left(41\right)}}\right)=7-1=6\\
h^{0}\left(\left.\mathcal{L}_{Higgs}^{\left(0,-\right)}\right|_{\Sigma_{\mathbf{\bar{5}}}^{\left(44\right)}}\right)-h^{1}\left(\left.\mathcal{L}_{Higgs}^{\left(0,-\right)}\right|_{\Sigma_{\mathbf{\bar{5}}}^{\left(44\right)}}\right)=0.
\end{array}
\]
We therefore have the desired Euler characteristics for the Higgs
bundle on the matter and Higgs curves. However the ranks of the relevant
spaces of sections are not quite right.

We will rectify the undesired outcome by modifying $\mathcal{L}_{Higgs}^{\left(0\right)}$
by recalling from §5.1 of \cite{Clemens-2} that 
\[
\left(\left(S_{\mathrm{GUT}}\cap\left(F_{\dot{x}\dot{z}\ddot{y}\ddot{w}}\cup F_{\dot{y}\dot{w}\ddot{x}\ddot{z}}\right)\right)\times\mathbb{P}_{\left[U,V\right]}\right)\subseteq\mathcal{D}.
\]
This allows us to define 
\[
\mathcal{L}_{Higgs}=\mathcal{L}_{Higgs}^{\left(0\right)}\otimes\mathcal{O}_{\mathcal{D}}\left(m\text{·}\left(\left(S_{\mathrm{GUT}}\cap\left(F_{\dot{x}\dot{z}\ddot{y}\ddot{w}}-F_{\dot{y}\dot{w}\ddot{x}\ddot{z}}\right)\right)\times\mathbb{P}_{\left[U,V\right]}\right)\right)
\]
for $m\text{\ensuremath{\neq}}0$ as in Lemma 6 of \cite{Clemens-2}
and define 
\[
\begin{array}{c}
\mathcal{L}_{\mathbf{10}}^{\left(4\right)}:=\left.\mathcal{L}_{Higgs}\right|_{\Sigma_{\mathbf{10}}^{\left(4\right)}}\\
\mathcal{L}_{\mathbf{\bar{5}}}^{\left(41\right)}:=\left.\mathcal{L}_{Higgs}\right|_{\Sigma_{\mathbf{\bar{5}}}^{\left(41\right)}}\\
\mathcal{L}_{\mathbf{\bar{5}}}^{\left(44\right)}:=\left.\mathcal{L}_{Higgs}^{\left(0,-\right)}\right|_{\Sigma_{\mathbf{\bar{5}}}^{\left(44\right)}}.
\end{array}
\]
Since the restriction of $\mathcal{O}_{\mathcal{D}}\left(m\text{·}\left(\left(S_{\mathrm{GUT}}\cap\left(F_{\dot{x}\dot{z}\ddot{y}\ddot{w}}-F_{\dot{y}\dot{w}\ddot{x}\ddot{z}}\right)\right)\times\mathbb{P}_{\left[U,V\right]}\right)\right)$
to the respective curves is not the trivial bundle, we conclude that
\begin{equation}
\begin{array}{c}
h^{0}\left(\mathcal{L}_{\mathbf{10}}^{\left(4\right)}\right)=6\\
h^{1}\left(\mathcal{L}_{\mathbf{10}}^{\left(4\right)}\right)=0
\end{array}\label{eq:yes1}
\end{equation}
and 
\begin{equation}
\begin{array}{c}
h^{0}\left(\mathcal{L}_{\mathbf{\bar{5}}}^{\left(41\right)}\right)=6\\
h^{1}\left(\mathcal{L}_{\mathbf{\bar{5}}}^{\left(41\right)}\right)=0.
\end{array}\label{eq:yes2}
\end{equation}

Furthermore by Lemma 7 of \cite{Clemens-2} we have with respect to
the eigenbundles of the involution $C_{u,v}$ that 
\begin{equation}
\begin{array}{c}
h^{0}\left(\left(\pi_{\Sigma_{\mathbf{\bar{5}}}^{\left(44\right)}}\right)_{\ast}\left(\mathcal{L}_{\mathbf{\bar{5}}}^{\left(44\right)}\right)^{\left[+1\right]}\right)=0\\
h^{1}\left(\left(\pi_{\Sigma_{\mathbf{\bar{5}}}^{\left(44\right)}}\right)_{\ast}\left(\left.\mathcal{L}_{Higgs}\right|_{\hat{\Sigma}_{\mathbf{\bar{5}}}^{\left(44\right)}}^{\left[-1\right]}\right)^{\left[+1\right]}\right)=0\\
h^{0}\left(\left(\pi_{\Sigma_{\mathbf{\bar{5}}}^{\left(44\right)}}\right)_{\ast}\left(\left.\mathcal{L}_{Higgs}\right|_{\hat{\Sigma}_{\mathbf{\bar{5}}}^{\left(44\right)}}^{\left[-1\right]}\right)^{\left[-1\right]}\right)=1\\
h^{1}\left(\left(\pi_{\Sigma_{\mathbf{\bar{5}}}^{\left(44\right)}}\right)_{\ast}\left(\left.\mathcal{L}_{Higgs}\right|_{\hat{\Sigma}_{\mathbf{\bar{5}}}^{\left(44\right)}}^{\left[-1\right]}\right)^{\left[-1\right]}\right)=1.
\end{array}\label{eq:yes3}
\end{equation}
This gives three-generations 
\begin{equation}
\begin{array}{c}
h^{0}\left(\Sigma_{\mathbf{10}}^{\left(4\right)\vee};\mathcal{L}_{\mathbf{10}}^{\left(4\right)\left[\text{\textpm}1\right]}\right)=h^{0}\left(\Sigma_{\mathbf{\bar{5}}}^{\left(41\right)\vee};\mathcal{L}_{\mathbf{\bar{5}}}^{\left(41\right)\left[\text{\textpm}1\right]}\right)=3\\
h^{1}\left(\Sigma_{\mathbf{10}}^{\left(4\right)\vee};\mathcal{L}_{\mathbf{10}}^{\left(4\right)\left[\text{\textpm}1\right]}\right)=h^{1}\left(\Sigma_{\mathbf{\bar{5}}}^{\left(41\right)\vee};\mathcal{L}_{\mathbf{\bar{5}}}^{\left(41\right)\left[\text{\textpm}1\right]}\right)=0
\end{array}\label{eq:yes!}
\end{equation}
as well as

\begin{equation}
\begin{array}{c}
h^{0}\left(\Sigma_{\mathbf{\bar{5}}}^{\left(44\right)\vee};\mathcal{L}_{\mathbf{\bar{5}}}^{\left(44\right)\left[+1\right]}\right)=h^{1}\left(\Sigma_{\mathbf{\bar{5}}}^{\left(44\right)\vee};\mathcal{L}_{\mathbf{\bar{5}}}^{\left(44\right)\left[+1\right]}\right)=0\\
h^{0}\left(\Sigma_{\mathbf{\bar{5}}}^{\left(44\right)\vee};\mathcal{L}_{\mathbf{\bar{5}}}^{\left(44\right)\left[-1\right]}\right)=h^{1}\left(\Sigma_{\mathbf{\bar{5}}}^{\left(44\right)\vee};\mathcal{L}_{\mathbf{\bar{5}}}^{\left(44\right)\left[-1\right]}\right)=1
\end{array}\label{eq:Higgs count}
\end{equation}
where the sign indicates the $C_{u,v}$-eigen-line-bundle.

\subsection{Action of $\tilde{\beta}_{4}$ on $\mathcal{L}_{Higgs}$\label{subsec:Action-of-and flat}}

The isomorphisms

\begin{equation}
\left(\tilde{\zeta}\right)\leftrightarrow B_{3}\leftrightarrow\left(\tilde{\tau}\right)\label{eq:match}
\end{equation}
induced by $\tilde{\beta}_{4}$ and $\pi_{B_{3}}$ identifies sections
of $\mathcal{O}_{B_{3}}\left(N\right)$ with sections of the co-normal
bundles $\mathcal{N}_{\left(\tilde{\zeta}\right)|\tilde{W}_{4}}^{\vee}$
and $\mathcal{N}_{\left(\tilde{\tau}\right)|\tilde{W}_{4}}^{\vee}$
respectively. A section of $\mathcal{O}_{B_{3}}\left(N\right)$ pulls
back to sections of $\mathcal{N}_{\left(\tilde{\zeta}\right)|\tilde{W}_{4}}^{\vee}$
and $\mathcal{N}_{\left(\tilde{\tau}\right)|\tilde{W}_{4}}^{\vee}$
respectively that are interchanged under the action of $\tilde{\beta}_{4}$
thereby identifying $\mathcal{N}_{\left(\tilde{\zeta}\right)|\tilde{W}_{4}}^{\vee}$
with $\tilde{\beta}^{\ast}\mathcal{N}_{\left(\tilde{\tau}\right)|\tilde{W}_{4}}^{\vee}$.
In \eqref{eq:yes3} we have used that identification to define an
involution on the rank-$2$ vector bundle $\left(\pi_{\mathcal{C}_{Higgs}^{\vee}}\right)_{\ast}\mathcal{L}_{Higgs}$
that decomposes into the direct sum of line bundles consisting of
sections that respectively symmetric and skew-symmetric. It is the
skew-symmetric line bundle on $\mathcal{C}_{Higgs}^{\vee}$ that we
call $\mathcal{L}_{Higgs}^{\vee}$.

Let 
\begin{equation}
\check{\mathcal{L}}_{\mathbf{10}}^{\left(4\right)},\ \mathcal{\check{\mathcal{L}}}_{\mathbf{\bar{5}}}^{\left(41\right)},\ \check{\mathcal{L}}_{\mathbf{\bar{5}}}^{\left(44\right)}\label{eq:iso}
\end{equation}
denote the restriction of the line bundle $\mathcal{L}_{Higgs}^{\vee}$
to the respective quotient curves.

\subsection{The $G$-flux in $\tilde{W}_{4}$\label{subsec:The--G-flux-in}}

The $G$-flux $\mathcal{G}_{\tilde{W}_{4}}$ is defined as the push-forward
into $\tilde{W}_{4}$ of the Chern class of $\mathcal{L}_{Higgs}$
modified by a $\tilde{\beta}_{4}$-invariant two cycle in $\left(X\right)$
so that it has intersection number zero with $\pi_{B_{3}}^{\ast}\left(C\right)$
for any curve $C$ in $S_{\mathrm{GUT}}$. Since we can essentially
identify $\mathcal{C}_{Higgs}\subseteq\tilde{Q}$ with its lifting
$\mathcal{\tilde{C}}_{Higgs}\subseteq\tilde{W}_{4}$ we will first
work to compute the class of the $G$-flux $\mathcal{G}_{\tilde{Q}}\subseteq\tilde{Q}$.
By \eqref{eq:Higgsint} 
\[
\mathcal{C}_{Higgs}^{\left(4\right)}\equiv4\left(X\right)+5N\in Pic\left(\tilde{Q}\right).
\]
Therefore the class of the push-forward of $c_{1}\left(\mathcal{L}_{Higgs}\right)$
to $\tilde{Q}$ is 
\begin{equation}
\left(4\left(X\right)+5N\right)\text{·}_{\tilde{Q}}\left(N+mF\right).\label{eq:G}
\end{equation}

Since $F=F_{+}-F_{-}$ projects into the curve $\left\{ z=\vartheta_{0}+a_{420}=0\right\} \subseteq B_{3}$
\[
\left(\pi_{B_{3}}\right)_{\ast}\left(mF\right)=0
\]
and we have by the projection formula that the intersection of \eqref{eq:G}
with $\pi_{B_{3}}^{\ast}\left(C\right)$ projects to $C\cdot_{B_{3}}\left(4N\right)$,
we must have 
\[
\mathcal{G}_{\tilde{Q}}\equiv\left(4N+5N\right)\text{·}_{\tilde{Q}}\left(N+mF\right).
\]
So 
\[
\left(\pi_{B_{3}}\right)_{\ast}\left(\mathcal{G}_{\tilde{Q}}^{2}\right)=0
\]
and therefore, recalling that $\mathcal{G}_{\tilde{W}_{4}}$ projects
birationally to $\mathcal{G}_{\tilde{Q}}$, 
\begin{equation}
\mathcal{G}_{\tilde{Q}}^{2}=\mathcal{G}_{\tilde{W}_{4}}^{2}=0.\label{eq:G-int}
\end{equation}

\subsection{Numerical conditions on the square of the $G$-flux\label{subsec:Numerical-conditions-on}}

The standard conditions that $\mathcal{G}_{W_{4}^{\vee}}$ will have
to satisfy are that 
\begin{equation}
\begin{array}{c}
\mathcal{G}_{W_{4}^{\vee}}\cdot_{W_{4}^{\vee}}\mathcal{G}_{W_{4}^{\vee}}\geq0\\
\mathcal{G}_{W_{4}^{\vee}}\cdot_{W_{4}^{\vee}}\mathcal{G}_{W_{4}^{\vee}}-\frac{\chi\left(W_{4}^{\vee}\right)}{24}\leq0
\end{array}\label{eq:important-1}
\end{equation}
where the subscript on the intersection dot indicates the space in
which the intersection is taking place. But by \eqref{eq:G-int},
\[
\mathcal{G}_{W_{4}^{\vee}}\cdot_{W_{4}^{\vee}}\mathcal{G}_{W_{4}^{\vee}}=\mathcal{G}_{\tilde{W}_{4}}^{2}=0
\]
so we must only check that $\frac{\chi\left(W_{4}^{\vee}\right)}{24}$
is a non-negative integer.

Now the involution $\tilde{\beta}_{4}$ on $W_{4}$ has four fixpoints
over each of the eight fixpoints of the action of the involution $\beta_{3}$
on $B_{3}$. As in formula (5.3) of \cite{Marsano:2012yc} each orbifold
point will affect the computation of $\chi\left(W_{4}^{\vee}\right)$
in \eqref{eq:important-1}. That is, since the (string-theoretic)
orbifold Euler characteristic of the fiber of $\tilde{W}_{4}/B_{3}$
over each fixpoint of $\beta_{3}$ in $B_{3}$ is $6,$ 
\[
\chi\left(W_{4}^{\vee}\right)=\frac{1}{2}\chi\left(\tilde{W}_{4}\right)+8\text{·}6.
\]
Since $\tilde{W}_{4}$ is smooth with a $\left(4+1\right)$-split
of $\mathcal{\tilde{C}}_{Higgs}$ we can apply the formula 
\[
\chi\left(\tilde{W}_{4}\right)=6\int_{B_{3}^{\vee}}\left(24c_{1}^{3}-44c_{1}^{2}\text{·}N+27c_{1}\text{·}N^{2}-5N^{3}+2c_{1}c_{2}\right)
\]
given in (5.19) of \cite{Marsano:2012yc} where $c_{j}$ denotes the
$j$-th Chern class of $B_{3}$. Now in our case $N=c_{1}$ and, since
$\chi\left(\mathcal{O}_{B_{3}}\right)=1$, the Riemann-Roch theorem
gives that $\int_{B_{3}}c_{1}\text{·}c_{2}=24$. Furthermore, by (4.1)
of \cite{Clemens-2} $c_{1}^{3}=12$. So 
\[
\chi\left(\tilde{W}_{4}\right)=12\int_{B_{3}^{\vee}}\left(c_{1}^{3}+c_{1}c_{2}\right)=12\left(12+24\right)
\]
and 
\[
\chi\left(W_{4}^{\vee}\right)=6\text{·}\left(36+8\right)=24\text{·}11.
\]

\subsection{\label{sec:The-D-term}The $D$-term}

We next consider the $U\left(1\right)_{X}$ factor associated to the
$\left(4+1\right)$-decomposition of the spectral variety. Following
\cite{Grimm:2010ez} this $D$ term must vanish, otherwise we might
break $R$-parity in the low energy theory.

After resolution, 
\[
\tilde{W}_{4}-E_{\mathbf{\bar{5}}}^{\left(44\right)}\rightarrow\tilde{Q}-C_{\mathbf{\bar{5}}}^{\left(44\right)}
\]
is a branched double cover, so that $\tilde{W}_{4}$ will inherit
ample divisors from sufficiently ample divisors on $\tilde{Q}$. So,
to compute the $D$-term we must first adjust the divisor 
\[
\mathcal{C}_{Higgs}^{\left(4\right)}-4\mathcal{C}_{Higgs}^{\left(1\right)}\equiv N
\]
so that its intersections with all curves of the form $\left(X\right)\cdot_{\tilde{Q}}D$
and $\pi_{B_{3}}^{\ast}\left(point\right)$ are zero. Now 
\[
\left(X\right)\cdot_{\tilde{Q}}D\cdot_{\tilde{Q}}N=\pi_{B_{3}}\left(\left(X\right)\cdot_{\tilde{Q}}D\right)\cdot_{B_{3}}N
\]
so that we must adjust by $-N$. But the $D$-term is computed by
identifying an ample divisor on $\tilde{Q}$ that has zero intersection
number with the two-class 
\begin{equation}
\mathcal{G}_{\tilde{Q}}\cdot_{\tilde{Q}}\left(\mathcal{C}_{Higgs}^{\left(4\right)}-4\mathcal{C}_{Higgs}^{\left(1\right)}-\hat{S}_{\mathrm{GUT}}\right).\label{eq:curve}
\end{equation}
But 
\[
\mathcal{G}_{\tilde{Q}}\cdot_{\tilde{Q}}\left(\mathcal{C}_{Higgs}^{\left(4\right)}-4\mathcal{C}_{Higgs}^{\left(1\right)}-S_{\mathrm{GUT}}\right)=\mathcal{G}_{\tilde{Q}}\cdot_{\tilde{Q}}0=0
\]
so any ample divisor trivially has zero intersection with \eqref{eq:curve}.

\section{Wilson line: Symmetry-breaking to the Standard Model\label{subsec:Symmetry-breaking}}

We write
\[
\left(\pi_{B_{3}^{\vee}}\right)_{\ast}\mathcal{O}_{B_{3}}=\mathcal{O}_{B_{3}^{\vee}}\oplus\mathcal{O}_{B_{3}^{\vee}}\left(\varepsilon_{u,v}\right)
\]
with connection isomorphism
\[
\varepsilon_{u,v}:\pi_{1}\left(B_{3}^{\vee}\right)\rightarrow\left\{ \pm1,\text{·}\right\} .
\]

Let
\[
Y=\left(\begin{array}{ccccc}
-1/3\\
 & -1/3\\
 &  & -1/3\\
 &  &  & 1/2\\
 &  &  &  & 1/2
\end{array}\right)\in\mathfrak{su}\left(5\right)
\]
denote the hypercharge direction when acting on the fundamental representation.
The Wilson line is the flat rank-$5$ vector bundle $L_{Y}$ given
by the homomorphism

\[
\pi_{1}\left(S_{\mathrm{GUT}}^{\vee}\right)\rightarrow SU\left(5\right)_{gauge}
\]
that takes the generator to $\exp\left(6\pi i\text{·}Y\right)$ when
viewed as acting on the fundamental representation of $SU\left(5\right)_{gauge}$.
This has the effect of breaking $SU\left(5\right)_{gauge}$-representations
over $S_{\mathrm{GUT}}$ to $SU\left(3\right)\times SU\left(2\right)\times U\left(1\right)_{Y}$-representations
over $S_{\mathrm{GUT}}^{\vee}$. Via tensor and exterior product,
these representations yield eigenspaces respect to the $\mathbb{Z}_{2}\times\mathbb{Z}_{2}=\left\{ \pm1\right\} \times\left\{ \exp\left(6\pi i\text{·}Y\right)\right\} $
holonomy group. Only eigenspaces with eigenvalue pairs in the diagonal
of $\mathbb{Z}_{2}\times\mathbb{Z}_{2}$ remain in the theory. The
flat $U\left(1\right)_{Y}$-bundle gives mass to the gauge states
in $SU\left(5\right)/\left[SU\left(3\right)\times SU\left(2\right)\times U_{Y}\left(1\right)\right]$
at a scale given by the cycle $\propto1/R_{cycle}$ on the $\mathrm{GUT}$
surface. \footnote{Referring to Lemma \ref{lem:The-sections-,Mordell} the flat $U\left(1\right)_{Y}$-bundle
wraps the non-contractible cycle $\left(\upsilon\right)$ while $\left(\zeta\right)$
and $\left(\tau\right)$ are interchanged. } It endows the quotient 
\[
W_{4}^{\vee}/B_{3}^{\vee}
\]
with the Standard Model gauge symmetry $SU\left(3\right)\times SU\left(2\right)\times U\left(1\right)_{Y}$.

We are about to incorporate the very particular translation by $\tau\left(b_{3}\right)-\zeta\left(b_{3}\right)$
in the fiber direction into the definition of the $\mathbb{Z}_{2}$-action
on $\tilde{W}_{4}/B_{3}$. This translation commutes with all the
properties we have attributed to $\tilde{\beta}_{4}/\beta_{3}$ up
to this point. The purpose of including the change of basepoint into
the involution $\tilde{\beta}_{4}/\beta_{3}$ is to eliminate vector-like
exotics from the bulk spectrum. If we were \textit{not} to incorporate
the translation $\tau\left(b_{3}\right)-\zeta\left(b_{3}\right)$
into $\tilde{\beta}_{4}/\beta_{3}$ , the bulk spectrum data for $SU\left(3\right)\times SU\left(2\right)\times U\left(1\right)_{Y}$
would be detailed in the following Table, in which the non-vanishing
of the cohomology groups in the last two rows indicate the existence
of vector-like exotics \cite{Beasley:2008kw,Donagi:2008ca,Marsano:2012yc}:\smallskip{}
 
\begin{center}
\begin{tabular}{|c|c|c|}
\hline 
Representation  & Type of multiplet  & Cohomology group dimension\tabularnewline
\hline 
\hline 
$\left(\mathbf{8},\mathbf{1}\right)_{0}$  & Vector  & $h^{2}\left(S_{\mathrm{GUT}}^{\vee},K_{S_{\mathrm{GUT}}^{\vee}}\right)=1$\tabularnewline
\hline 
$\left(\mathbf{1},\mathbf{3}\right)_{0}$  & Vector  & $h^{2}\left(S_{\mathrm{GUT}}^{\vee},K_{S_{\mathrm{GUT}}^{\vee}}\right)=1$\tabularnewline
\hline 
$\left(\mathbf{1},\mathbf{1}\right)_{0}$  & Vector  & $h^{2}\left(S_{\mathrm{GUT}}^{\vee},K_{S_{\mathrm{GUT}}^{\vee}}\right)=1$\tabularnewline
\hline 
$\left(\mathbf{8},\mathbf{1}\right)_{0}$  & Chiral  & $h^{0}\left(S_{\mathrm{GUT}}^{\vee},K_{S_{\mathrm{GUT}}^{\vee}}\right)\oplus h^{1}\left(S_{\mathrm{GUT}}^{\vee},K_{S_{\mathrm{GUT}}^{\vee}}\right)=0$\tabularnewline
\hline 
$\left(\mathbf{1},\mathbf{3}\right)_{0}$  & Chiral  & $h^{0}\left(S_{\mathrm{GUT}}^{\vee},K_{S_{\mathrm{GUT}}^{\vee}}\right)\oplus h^{1}\left(S_{\mathrm{GUT}}^{\vee},K_{S_{\mathrm{GUT}}^{\vee}}\right)=0$\tabularnewline
\hline 
$\left(\mathbf{1},\mathbf{1}\right)_{0}$  & Chiral  & $h^{0}\left(S_{\mathrm{GUT}}^{\vee},K_{S_{\mathrm{GUT}}^{\vee}}\right)\oplus h^{1}\left(S_{\mathrm{GUT}}^{\vee},K_{S_{\mathrm{GUT}}^{\vee}}\right)=0$\tabularnewline
\hline 
$\left(\mathbf{3},\mathbf{2}\right)_{-5/6}$  & Vector  & $h^{0}\left(S_{\mathrm{GUT}}^{\vee},\mathcal{O}_{S_{\mathrm{GUT}}^{\vee}}\left(\varepsilon_{u,v}\right)\right)=0$\tabularnewline
\hline 
$\left(\bar{\mathbf{3},}\mathbf{2}\right)_{5/6}$  & Vector  & $h^{0}\left(S_{\mathrm{GUT}}^{\vee},\mathcal{O}_{S_{\mathrm{GUT}}^{\vee}}\left(\varepsilon_{u,v}\right)\right)=0$\tabularnewline
\hline 
$\left(\mathbf{3},\mathbf{2}\right)_{-5/6}$  & Chiral  & $h^{1}\left(S_{\mathrm{GUT}}^{\vee},\mathcal{O}_{S_{\mathrm{GUT}}^{\vee}}\left(\varepsilon_{u,v}\right)\right)\oplus h^{2}\left(S_{\mathrm{GUT}}^{\vee},\mathcal{O}_{S_{\mathrm{GUT}}^{\vee}}\left(\varepsilon_{u,v}\right)\right)=1$\tabularnewline
\hline 
$\left(\bar{\mathbf{3},}\mathbf{2}\right)_{5/6}$  & Chiral  & $h^{1}\left(S_{\mathrm{GUT}}^{\vee},\mathcal{O}_{S_{\mathrm{GUT}}^{\vee}}\left(\varepsilon_{u,v}\right)\right)\oplus h^{2}\left(S_{\mathrm{GUT}}^{\vee},\mathcal{O}_{S_{\mathrm{GUT}}^{\vee}}\left(\varepsilon_{u,v}\right)\right)=1$\tabularnewline
\hline 
\end{tabular}\smallskip{}
\par\end{center}

Reviewing the construction of the 'semi-stable degeneration' in Section
\ref{sec:NC} one sees that all the modification steps 
\[
\begin{array}{ccccc}
\overline{W}_{4}^{\wedge}/B_{3}^{\wedge} & \rightarrow & \bar{V}_{3}^{\wedge}/B_{2} & \subseteq & \overline{dP}{}_{a}\cup\overline{dP}{}_{b}\\
\downarrow &  & \downarrow &  & \downarrow\\
\overline{W}_{4}/B_{3} & \rightarrow & \bar{V}_{3}/B_{2} & \subseteq & \overline{dP}{}_{a}\cup\overline{dP}{}_{b}\\
\uparrow &  & \uparrow &  & \uparrow\\
W_{4}^{\left(1\right)}/B_{3} & \rightarrow & V_{3}^{\left(1\right)}/B_{2} & \subseteq & dP_{a}^{\left(1\right)}\cup dP_{b}^{\left(1\right)}\\
\uparrow &  & \uparrow &  & \uparrow\\
W_{4}^{\left(2\right)}/B_{3} & \rightarrow & V_{3}^{\left(2\right)}/B_{2} & \subseteq & dP_{a}^{\left(2\right)}\cup dP_{b}^{\left(2\right)}\\
\uparrow &  & \uparrow &  & \uparrow\\
\tilde{W}_{4}/B_{3} & \rightarrow & V_{3}/B_{2} & \subseteq & dP_{a}\cup dP_{b}
\end{array}
\]
are equivariant with respect to the action of the automorphism group
of $B_{3}$ given by the permutation group $S_{4}\subseteq W\left(SU\left(5\right)\right)$
as in \cite{Clemens-2}. Now on the Heterotic side $\overline{dP}{}_{a}\cup\overline{dP}{}_{b}$
is simply the blow-up of $W_{4,0}$ with center $\left(\upsilon\right)$.

Since 
\[
\left(x,y\right)\mapsto\left(x,-y\right)
\]
before we introduce the translation into the definition of $\tilde{\beta}_{4}$
we have that 
\[
\begin{array}{c}
y\left(\zeta\left(\beta_{3}\left(b_{3}\right)\right)\right)=-y\left(\zeta\left(b_{3}\right)\right)\\
y\left(\tau\left(\beta_{3}\left(b_{3}\right)\right)\right)=-y\left(\tau\left(b_{3}\right)\right)
\end{array}
\]
whereas composing with the translation we have that 
\[
\begin{array}{c}
y\left(\zeta\left(\beta_{3}\left(b_{3}\right)\right)\right)=-y\left(\tau\left(b_{3}\right)\right)\\
y\left(\tau\left(\beta_{3}\left(b_{3}\right)\right)\right)=-y\left(\zeta\left(b_{3}\right)\right)
\end{array}
\]
so that 
\[
y\left(\tau\left(\beta_{3}\left(b_{3}\right)\right)\right)-y\left(\zeta\left(\beta_{3}\left(b_{3}\right)\right)\right)=-y\left(\zeta\left(b_{3}\right)\right)-\left(-y\left(\tau\left(b_{3}\right)\right)\right)
\]
allowing the translation to descend to the quotient. 

Previous construction utilized a single section that was invariant
under the involution, thereby forcing the existence of vector-like
exotics. The existence of two sections and the incorporation of the
translation between them into the $\mathbb{Z}_{2}$-action allows
the elimination of the vector-like exotics.

It is $\left(\upsilon\right)$ on which the non-contractible cycle
of $dP_{a}\cup dP_{b}$ is wrapped, while $\left(\zeta\right)$ and
$\left(\tau\right)$ are interchanged. Thus it is $\left(\upsilon\right)$
that endows the quotient of the action by $\tilde{\beta}_{4}$ with
a section. 

\section{\label{sec:Skew-twisted-cohomology}Bulk and chiral spectra with
Wilson Line on $W_{4}^{\vee}$}

\subsection{Bulk spectrum with Wilson line }

Since the canonical bundle of $S_{\mathrm{GUT}}$ is trivial, the
introduction of the translation into the definition of $\tilde{\beta}_{4}$
will allow us to replace the trivial line bundle by
\begin{equation}
\mathcal{O}_{\tilde{W}_{4}}\left(\left(\tilde{\tau}\right)-\left(\tilde{\zeta}\right)\right)\in\mathrm{Pic}\left(\tilde{W}_{4}\right)\label{eq:shift-1}
\end{equation}
in the computation of the bulk spectrum. More specifically, the semi-stable
degeneration is the geometric bridge between Heterotic theory and
$F$-theory. The introduction of the translation into the definition
of $\tilde{\beta}_{4,0}$ affects the dictionary \eqref{eq:LFMW}
that is given in terms of the Weierstrass form on the $dP_{9}$-bundles
coming from the $F$-theory side and flat line bundles on the elliptic
fibers on the Heterotic side. Change of basepoint from the one given
by $\zeta$ to the one given by $\tau$ does not change the uniquely
given Weierstrass form of the elliptic fiber however on the other
side it does change the sum of eight flat line bundles given by the
restriction of the $E_{8}$-bundle since they are given by the differences
between each of the eight points marked by the (asymptotic) Tate form
and the identity element of the elliptic fiber as a group and the
involution changes this last from the intersection of the torus fiber
with $\left(\zeta\right)$ to the intersection of the torus fiber
with $\left(\tau\right)$.

Therefore instead of pushing forward the canonical bundle $K_{S_{\mathrm{GUT}}}=\mathcal{O}_{S_{\mathrm{GUT}}}$
to $S_{\mathrm{GUT}}^{\vee}$ we must incorporate the twist by the
bundle \eqref{eq:shift-1} in order to compute the bulk spectrum for
$W_{4}^{\vee}/B_{3}^{\vee}$, that is, we must compute the (derived)
cohomology of the push-forward of the twisted structure sheaf 
\[
\left(\mathcal{O}_{S_{\mathrm{GUT}}\times_{\hat{B}_{3}}\tilde{W}_{4}}\right)\otimes\mathcal{O}_{\tilde{W}_{4}}\left(\left(\tilde{\zeta}\right)-\left(\tilde{\tau}\right)\right)
\]
that goes to minus itself under the isomorphism $\tilde{\beta}_{4}$.
The push-forward of the line bundle \eqref{eq:shift-1} to $S_{\mathrm{GUT}}^{\vee}\times_{B_{3}^{\vee}}W_{4}^{\vee}$
breaks into the sum 
\begin{equation}
\left(\pi^{\ast}K_{S_{\mathrm{GUT}}^{\vee}}\otimes\mathcal{O}_{S_{\mathrm{GUT}}^{\vee}\times_{B_{3}^{\vee}}W_{4}^{\vee}}\left(\varepsilon_{u,v}\right)\right)\oplus\left(\mathcal{O}_{S_{\mathrm{GUT}}^{\vee}\times_{B_{3}^{\vee}}W_{4}^{\vee}}\left(\varepsilon_{u,v}\text{·}\tilde{\tau}-\varepsilon_{u,v}\text{·}\tilde{\zeta}\right)\right)\label{eq:summands}
\end{equation}
of symmetric and a skew-symmetric line bundle summands where $\varepsilon_{u,v}$
is the flat orbifold line bundle induced by the involution $\tilde{\beta}_{4}$
with finite fixpoint set.

Thus the derived push-forwards of the summands in \eqref{eq:summands}
are respectively 
\[
\mathcal{O}_{S_{\mathrm{GUT}}^{\vee}}\oplus\left\{ 0\right\} 
\]
are all zero since the divisor $\left(\left(\tilde{\tau}\right)-\left(\tilde{\zeta}\right)\right)$
is not linearly equivalent to zero anywhere, in particular nowhere
over $S_{\mathrm{GUT}}$. Therefore referring to the Table in Section
\ref{subsec:Symmetry-breaking} but now with the incorporation of
the translation $\left(\tilde{\tau}\right)-\left(\tilde{\zeta}\right)$
bulk spectrum data is detailed in the following Table:\smallskip{}
 
\begin{center}
\begin{tabular}{|c|c|c|}
\hline 
Representation  & Type of multiplet  & Cohomology group dimension\tabularnewline
\hline 
\hline 
$\left(\mathbf{8},\mathbf{1}\right)_{0}$  & Vector  & $h^{2}\left(S_{\mathrm{GUT}}^{\vee},K_{S_{\mathrm{GUT}}^{\vee}}\right)=h^{0}\left(\mathcal{O}_{S_{\mathrm{GUT}}^{\vee}}\right)=1$\tabularnewline
\hline 
$\left(\mathbf{1},\mathbf{3}\right)_{0}$  & Vector  & $h^{2}\left(S_{\mathrm{GUT}}^{\vee},K_{S_{\mathrm{GUT}}^{\vee}}\right)=h^{0}\left(\mathcal{O}_{S_{\mathrm{GUT}}^{\vee}}\right)=1$\tabularnewline
\hline 
$\left(\mathbf{1},\mathbf{1}\right)_{0}$  & Vector  & $h^{2}\left(S_{\mathrm{GUT}}^{\vee},K_{S_{\mathrm{GUT}}^{\vee}}\right)=h^{0}\left(\mathcal{O}_{S_{\mathrm{GUT}}^{\vee}}\right)=1$\tabularnewline
\hline 
$\left(\mathbf{8},\mathbf{1}\right)_{0}$  & Chiral  & $h^{0}\left(S_{\mathrm{GUT}}^{\vee},K_{S_{\mathrm{GUT}}^{\vee}}\right)\oplus h^{1}\left(S_{\mathrm{GUT}}^{\vee},K_{S_{\mathrm{GUT}}^{\vee}}\right)=0$\tabularnewline
\hline 
$\left(\mathbf{1},\mathbf{3}\right)_{0}$  & Chiral  & $h^{0}\left(S_{\mathrm{GUT}}^{\vee},K_{S_{\mathrm{GUT}}^{\vee}}\right)\oplus h^{1}\left(S_{\mathrm{GUT}}^{\vee},K_{S_{\mathrm{GUT}}^{\vee}}\right)=0$\tabularnewline
\hline 
$\left(\mathbf{1},\mathbf{1}\right)_{0}$  & Chiral  & $h^{0}\left(S_{\mathrm{GUT}}^{\vee},K_{S_{\mathrm{GUT}}^{\vee}}\right)\oplus h^{1}\left(S_{\mathrm{GUT}}^{\vee},K_{S_{\mathrm{GUT}}^{\vee}}\right)=0$\tabularnewline
\hline 
$\left(\mathbf{3},\mathbf{2}\right)_{-5/6}$  & Vector  & $h^{0}\left(\mathcal{O}_{S_{\mathrm{GUT}}^{\vee}\times_{B_{3}^{\vee}}W_{4}^{\vee}}\left(\varepsilon_{u,v}\text{·}\tilde{\tau}-\varepsilon_{u,v}\text{·}\tilde{\zeta}\right)\right)=0$\tabularnewline
\hline 
$\left(\bar{\mathbf{3},}\mathbf{2}\right)_{5/6}$  & Vector  & $h^{0}\left(\mathcal{O}_{S_{\mathrm{GUT}}^{\vee}\times_{B_{3}^{\vee}}W_{4}^{\vee}}\left(\varepsilon_{u,v}\text{·}\tilde{\tau}-\varepsilon_{u,v}\text{·}\tilde{\zeta}\right)\right)=0$\tabularnewline
\hline 
$\left(\mathbf{3},\mathbf{2}\right)_{-5/6}$  & Chiral  & $h^{1}\left(\mathcal{O}_{S_{\mathrm{GUT}}^{\vee}\times_{B_{3}^{\vee}}W_{4}^{\vee}}\left(\varepsilon_{u,v}\text{·}\tilde{\tau}-\varepsilon_{u,v}\text{·}\tilde{\zeta}\right)\right)\oplus h^{2}\left(\ldots\right)=0$\tabularnewline
\hline 
$\left(\bar{\mathbf{3},}\mathbf{2}\right)_{5/6}$  & Chiral  & $h^{1}\left(\mathcal{O}_{S_{\mathrm{GUT}}^{\vee}\times_{B_{3}^{\vee}}W_{4}^{\vee}}\left(\varepsilon_{u,v}\text{·}\tilde{\tau}-\varepsilon_{u,v}\text{·}\tilde{\zeta}\right)\right)\oplus h^{2}\left(\ldots\right)=0$\tabularnewline
\hline 
\end{tabular}\smallskip{}
 
\par\end{center}

Thus there are no vector-like exotics!

\subsection{Matter spectrum }

In fact we have already computed the correct chiral spectra in \eqref{eq:yes1},
\eqref{eq:yes2}, and \eqref{eq:yes3} of Subsection \ref{subsec:Restrictions-to-the}. 
\begin{lem}
\label{lem:With-respect-to}With respect to the $C_{u,v}$-eigen-line-bundle
decomposition of the push-forwards to $B_{3}^{\vee}$ of the (restrictions
of the) Higgs bundle to the matter and Higgs curves in $B_{3}$, 
\end{lem}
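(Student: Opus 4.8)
The plan is to obtain the lemma by repackaging the cohomology computations of Subsection~\ref{subsec:Restrictions-to-the} according to the $C_{u,v}$-eigenspace splitting and then descending them from the covering curves in $B_{3}$ to their images in $B_{3}^{\vee}$. The essential upstairs input is already in hand: equations~\eqref{eq:yes1} and~\eqref{eq:yes2} give $h^{0}=6$, $h^{1}=0$ for $\mathcal{L}_{\mathbf{10}}^{\left(4\right)}$ and $\mathcal{L}_{\mathbf{\bar{5}}}^{\left(41\right)}$ on the matter curves $\Sigma_{\mathbf{10}}^{\left(4\right)},\Sigma_{\mathbf{\bar{5}}}^{\left(41\right)}\subseteq B_{3}$, while~\eqref{eq:yes3} records the four eigen-cohomology numbers on the Higgs curve $\Sigma_{\mathbf{\bar{5}}}^{\left(44\right)}$.

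First I would exploit the freeness of $C_{u,v}$ (equivalently $\beta_{3}$) on $S_{\mathrm{GUT}}$. Since the matter and Higgs curves all lie in $S_{\mathrm{GUT}}$ and $\beta_{3}$ acts there without fixed points, each covering $\Sigma\to\Sigma^{\vee}$ is an unramified double cover, so the push-forward of any $C_{u,v}$-linearized line bundle splits on $\Sigma^{\vee}$ as a direct sum of its $\left[+1\right]$- and $\left[-1\right]$-eigen-line-bundles, and the upstairs cohomology is the direct sum of the two eigen-contributions. Applied to $\mathcal{L}_{\mathbf{10}}^{\left(4\right)}$ and $\mathcal{L}_{\mathbf{\bar{5}}}^{\left(41\right)}$ this breaks $h^{0}=6$ into the pair $\left(3,3\right)$ recorded in~\eqref{eq:yes!}, the three-generation count. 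The nontrivial point is the \emph{balanced} splitting $6=3+3$: it is forced by the twisting divisor $m\cdot\left(S_{\mathrm{GUT}}\cap\left(F_{\dot{x}\dot{z}\ddot{y}\ddot{w}}-F_{\dot{y}\dot{w}\ddot{x}\ddot{z}}\right)\right)$ entering $\mathcal{L}_{Higgs}$ exactly as in Lemma~6 and Lemma~7 of~\cite{Clemens-2}, and I would verify it by restricting that divisor to each eigen-component and checking that it is non-trivial there.

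For part~ii), the assertion that the model carries only a \emph{single} Higgs doublet, I would invoke the non-torsion lemma: $\tau-\zeta$ has infinite order in $\mathrm{Pic}^{0}\left(\tilde{W}_{4}/B_{3}\right)$ over a general point of $S_{\mathrm{GUT}}$. This makes the twist $\mathcal{O}_{\tilde{W}_{4}}\left(\left(\tilde{\tau}\right)-\left(\tilde{\zeta}\right)\right)$ nontrivial on every fiber over $S_{\mathrm{GUT}}$, so that in the eigen-decomposition~\eqref{eq:Higgs count} of $\mathcal{L}_{\mathbf{\bar{5}}}^{\left(44\right)}$ the $\left[+1\right]$ summand contributes nothing while the $\left[-1\right]$ summand contributes $h^{0}=h^{1}=1$, leaving precisely one Higgs--anti-Higgs pair. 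I would then match these geometric $C_{u,v}$-eigenvalues against the $\mathbb{Z}_{2}\times\mathbb{Z}_{2}$ holonomy selection rule of Section~\ref{subsec:Symmetry-breaking}, retaining only the diagonal eigenspaces, and read off the Standard-Model quantum numbers together with the $U\left(1\right)_{X}$ charges of the $\left(4+1\right)$-split for each surviving state.

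The main obstacle will be the eigenvalue bookkeeping: one must ensure that the geometric $C_{u,v}$-action used to split the push-forwards agrees sign-for-sign with the holonomy $\left\{ \pm1\right\} \times\left\{ \exp\left(6\pi i\,Y\right)\right\}$ that projects out states, and that the theta-characteristic structure carried by $\mathcal{L}_{Higgs}^{\left(0,-\right)}$ on $\Sigma_{\mathbf{\bar{5}}}^{\left(44\right)}$ descends correctly through the quotient. Once the $\left[+1\right]$/$\left[-1\right]$ labels are pinned down consistently, the lemma is a direct transcription of~\eqref{eq:yes!} and~\eqref{eq:Higgs count}.
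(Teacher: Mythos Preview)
Your approach matches the paper's: the lemma is stated immediately after the sentence ``In fact we have already computed the correct chiral spectra in \eqref{eq:yes1}, \eqref{eq:yes2}, and \eqref{eq:yes3} of Subsection~\ref{subsec:Restrictions-to-the},'' and the paper offers no further argument beyond those equations together with Lemmas~6 and~7 of \cite{Clemens-2}, exactly as you propose. Your identification of the free $C_{u,v}$-action on $S_{\mathrm{GUT}}$ as the reason the upstairs cohomology splits cleanly into eigen-summands is the right structural point.

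One caution on your treatment of part~ii). You invoke the non-torsion of $\tau-\zeta$ and then explain its role via the twist $\mathcal{O}_{\tilde{W}_{4}}\left(\left(\tilde{\tau}\right)-\left(\tilde{\zeta}\right)\right)$ being nontrivial on fibers, forcing the $[+1]$-summand of $\mathcal{L}_{\mathbf{\bar 5}}^{(44)}$ to vanish. That mechanism is the one used for the \emph{bulk} spectrum in Section~\ref{sec:Skew-twisted-cohomology}, not for the Higgs curve. The Higgs-curve numbers in \eqref{eq:yes3} and \eqref{eq:Higgs count} come from the theta-characteristic structure of $\mathcal{L}_{Higgs}^{(0,-)}$ on $\Sigma_{\mathbf{\bar 5}}^{(44)}$ and Lemma~7 of \cite{Clemens-2}; the paper does assert (Subsection~\ref{subsec:The-second-section}) that the non-torsion lemma is ``essential to the proof in \cite{Clemens-2}'' of part~ii), but the route is through the companion paper's theta-characteristic computation rather than through the fiberwise twist you describe. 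Keep the citation to the non-torsion lemma, but do not tie it to the $\mathcal{O}\left(\left(\tilde{\tau}\right)-\left(\tilde{\zeta}\right)\right)$-twist argument, which belongs to the bulk table and not to the curve cohomology.
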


i) 
\[
\begin{array}{c}
h^{0}\left(\check{\mathcal{L}}_{\mathbf{10}}^{\left(4\right)\left[\text{\textpm}1\right]}\right)=h^{0}\left(\check{\mathcal{L}}_{\mathbf{\bar{5}}}^{\left(41\right)\left[\text{\textpm1}\right]}\right)=3\\
h^{1}\left(\check{\mathcal{L}}_{\mathbf{10}}^{\left(4\right)\left[\text{\textpm1}\right]}\right)=h^{1}\left(\check{\mathcal{L}}_{\mathbf{\bar{5}}}^{\left(41\right)\left[\text{\textpm1}\right]}\right)=0.
\end{array}
\]
ii) 
\[
\begin{array}{c}
h^{0}\left(\check{\mathcal{L}}_{\mathbf{\bar{5}}}^{\left(44\right)\left[+1\right]}\right)=h^{1}\left(\check{\mathcal{L}}_{\mathbf{\bar{5}}}^{\left(44\right)\left[+1\right]}\right)=0\\
h^{0}\left(\check{\mathcal{L}}_{\mathbf{\bar{5}}}^{\left(44\right)\left[-1\right]}\right)=h^{1}\left(\check{\mathcal{L}}_{\mathbf{\bar{5}}}^{\left(44\right)\left[-1\right]}\right)=1.
\end{array}
\]

Therefore referring to Tables 1 and 2 in Section 7 of \cite{Clemens-2},
the flux distribution associated to the $\mathbb{Z}_{2}$-action given
by the involution $\tilde{\beta}_{4}$ and the Wilson line that is
wrapped by that involution is presented in the Tables below. Namely
the distribution of MSSM matter fields is as follows:

\smallskip{}
\noindent \begin{center}
\begin{tabular}{|c|c|c|c|c|}
\hline 
$\Sigma_{\mathbf{10}}^{\left(4\right)}=\left\{ a_{5}=z=0\right\} $  & $C_{u,v}$  & \textbf{$L_{Y}$} & $\mathcal{L}_{Higgs}$  & $SU\left(3\right)\times SU\left(2\right)\times U\left(1\right)_{Y}$\tabularnewline
\hline 
\hline 
$h^{0}\left(\check{\mathcal{L}}_{\mathbf{10}}^{\left(4\right)\left[\text{\textpm}1\right]}\right)$ & $+1$  & $+1$  & $3$  & $\left(\mathbf{\mathbf{\mathbf{1}}},\mathbf{1}\right)_{+1}$\tabularnewline
\hline 
 & $-1$  & $-1$  &  & $\left(\mathbf{\mathbf{\mathbf{3}}},\mathbf{2}\right)_{+1/6}$\tabularnewline
\hline 
 & $+1$  & $+1$  &  & $\left(\mathbf{\mathbf{\mathbf{\bar{3}}}},\mathbf{1}\right)_{-2/3}$\tabularnewline
\hline 
$h^{1}\left(\check{\mathcal{L}}_{\mathbf{10}}^{\left(4\right)\left[\text{\textpm}1\right]}\right)$  & $+1$  & $+1$  & $0$  & $\left(\mathbf{\mathbf{\mathbf{1}}},\mathbf{1}\right)_{+1}$\tabularnewline
\hline 
 & $-1$  & $-1$  &  & $\left(\mathbf{\mathbf{\mathbf{\bar{3}}}},\mathbf{2}\right)_{+1/6}$\tabularnewline
\hline 
 & $+1$  & $+1$  &  & $\left(\mathbf{\mathbf{\mathbf{3}}},\mathbf{1}\right)_{+2/3}$\tabularnewline
\hline 
\end{tabular}\smallskip{}
\begin{tabular}{|c|c|c|c|c|}
\hline 
$\Sigma_{\mathbf{\bar{5}}}^{\left(41\right)}=\left\{ a_{420}=z=0\right\} $  & $C_{u,v}$  & \textbf{$L_{Y}$} & $\mathcal{L}_{Higgs}$  & $SU\left(3\right)\times SU\left(2\right)\times U\left(1\right)_{Y}$\tabularnewline
\hline 
\hline 
$h^{0}\left(\check{\mathcal{L}}_{\mathbf{\bar{5}}}^{\left(41\right)\left[\text{\textpm1}\right]}\right)$  & $+1$  & $+1$  & $3$  & $\left(\mathbf{\mathbf{\mathbf{\bar{3}}}},\mathbf{1}\right)_{+1/3}$\tabularnewline
\hline 
 & $-1$  & $-1$  &  & $\left(\mathbf{\mathbf{\mathbf{1}}},\mathbf{2}\right)_{-1/2}$\tabularnewline
\hline 
$h^{1}\left(\check{\mathcal{L}}_{\mathbf{\bar{5}}}^{\left(41\right)\left[\text{\textpm1}\right]}\right)$  & $+1$  & $+1$  & $0$  & $\left(\mathbf{\mathbf{\mathbf{3}}},\mathbf{1}\right)_{-1/3}$\tabularnewline
\hline 
 & $-1$  & $-1$  &  & $\left(\mathbf{\mathbf{\mathbf{1}}},\mathbf{2}\right)_{+1/2}$\tabularnewline
\hline 
\end{tabular}\smallskip{}
 
\par\end{center}

As for the Higgs fields we have the following:\smallskip{}
\noindent \begin{center}
\begin{tabular}{|c|c|c|c|c|}
\hline 
$\Sigma_{\mathbf{\bar{5}}}^{\left(44\right)}=\left\{ a_{4}a_{3}+a_{5}\left(a_{0}-a_{3}\right)=z=0\right\} $  & $C_{u,v}$  & $L_{Y}$  & $\mathcal{L}_{Higgs}$  & $SU\left(3\right)\times SU\left(2\right)\times U\left(1\right)_{Y}$\tabularnewline
\hline 
\hline 
$h^{0}\left(\check{\mathcal{L}}_{\mathbf{\bar{5}}}^{\left(44\right)\left[+1\right]}\right)$  & $+1$  & $+1$  & $0$  & $\left(\mathbf{\mathbf{\mathbf{\bar{3}}}},\mathbf{1}\right)_{+1/3}$\tabularnewline
\hline 
$h^{0}\left(\check{\mathcal{L}}_{\mathbf{\bar{5}}}^{\left(44\right)\left[-1\right]}\right)$  & $-1$  & $-1$  & $1$  & $\left(\mathbf{\mathbf{\mathbf{1}}},\mathbf{2}\right)_{-1/2}$\tabularnewline
\hline 
$h^{1}\left(\check{\mathcal{L}}_{\mathbf{\bar{5}}}^{\left(44\right)\left[+1\right]}\right)$  & $+1$  & $+1$  & $0$  & $\left(\mathbf{\mathbf{\mathbf{3}}},\mathbf{1}\right)_{-1/3}$\tabularnewline
\hline 
$h^{1}\left(\check{\mathcal{L}}_{\mathbf{\bar{5}}}^{\left(44\right)\left[-1\right]}\right)$  & $-1$  & $-1$  & $1$  & $\left(\mathbf{\mathbf{\mathbf{1}}},\mathbf{2}\right)_{+1/2}$\tabularnewline
\hline 
\end{tabular}
\par\end{center}

\smallskip{}

In conclusion we have reproduced the spectrum of the minimal supersymmetric
Standard Model.

\section{Asymptotic $\mathbb{Z}_{4}$ R-symmetry\label{sec:Asymptotic--R-symmetry}}

The asymptotic $\mathbb{Z}_{4}$-symmetry $T_{u,v}$ of the semi-stable
$F$-theory limit constructed on $B_{3}^{\vee}$ in Section 6 of \cite{Clemens-2}
acts on sections and twisted sections of the anti-canonical bundle
$B_{3}^{\vee}$. Referring to the results and notation in Sections
5 and 6 of \cite{Clemens-2}, the asymptotic $\mathbb{Z}_{4}$ $\mathbf{R}$-symmetry
$T_{u,v}$ constructed there on
\[
\lim_{\delta\rightarrow0}\tilde{W}_{4,\delta}
\]
 is compatible with the following table from Subsection 8.2 of \cite{Clemens-2}:
\footnote{Given the $\mathbb{Z}_{4}$ R-charges, $i^{q_{f}+1}$, for the fermionic
components of $N=1$ superfields, then the bosonic components of the
chiral superfields are given by $i^{q_{f}}$. This is in accord with
the anti-commuting superspace coordinate transforming as $\theta^{\prime}=i^{-1}\theta$.
Otherwise, these are equivalent to the $\mathbb{Z}_{4}$ R-charges
of \cite{Lee,Lee-1:}. }
\begin{center}
\smallskip{}
\par\end{center}

\begin{center}
\begin{tabular}{|c|c|c|}
\hline 
TABLE 3:\quad{}$T_{u,v}$  & $T_{u,v}$-charge  & space\tabularnewline
\hline 
\hline 
matter fields on $\frac{\Sigma_{\mathbf{10}}^{\left(4\right)}}{\left\{ C_{u,v}\right\} }$ & $-1$  & $H^{0}\left(\frac{\Sigma_{\mathbf{10}}^{\left(4\right)}}{\left\{ C_{u,v}\right\} };\mathcal{L}_{Higgs}^{\vee,\left[\pm1\right]}\right)$\tabularnewline
\hline 
matter fields on $\frac{\Sigma_{\mathbf{\bar{5}}}^{\left(41\right)}}{\left\{ C_{u,v}\right\} }$ & $-1$ & $H^{0}\left(\frac{\Sigma_{\mathbf{\bar{5}}}^{\left(41\right)}}{\left\{ C_{u,v}\right\} };\mathcal{L}_{Higgs}^{\vee,\left[\pm1\right]}\right)$\tabularnewline
\hline 
Higgs fields on $\frac{\Sigma_{\mathbf{\bar{5}}}^{\left(44\right)}}{\left\{ C_{u,v}\right\} }$  & $+i$  & $H^{0}\left(\frac{\Sigma_{\mathbf{\bar{5}}}^{\left(44\right)}}{\left\{ C_{u,v}\right\} };\mathcal{L}_{Higgs}^{\vee,\left[-1\right]}\right)$
/ $H^{1}\left(\frac{\Sigma_{\mathbf{\bar{5}}}^{\left(44\right)}}{\left\{ C_{u,v}\right\} };\mathcal{L}_{Higgs}^{\vee,\left[-1\right]}\right)$\tabularnewline
\hline 
bulk matter on $\frac{S_{\mathrm{GUT}}}{\left\{ C_{u,v}\right\} }$ & $-i$ & $H^{2}\left(K_{\frac{S_{\mathrm{GUT}}}{\left\{ C_{u,v}\right\} }}\right)$\tabularnewline
\hline 
\end{tabular}
\par\end{center}

\noindent \begin{center}
\smallskip{}
\par\end{center}

The $\mathbb{Z}_{4}$ R symmetry forbids the Higgs $\mu$-term and
dimension 4 and 5 baryon and lepton number violating operators. 

\section{Conclusions}

In this paper we have constructed an SU(5) GUT $F$-theory model.
We have shown how to break the GUT group with a non-local Wilson line.
Thus we are able to identify the GUT scale with the compactification
scale of the GUT surface. Our model includes three families of quarks
and leptons and one pair of Higgs doublets. The price to pay for this
result is that we have a mirror world where the mass scales and couplings
of the mirror states may be different than for the MSSM. This mirror
world can, in principle, be the dark matter of the universe. There
may or may not also be direct couplings of the mirror and MSSM sectors
of the theory. These give interesting physics as in \cite{Berezhiani,Chacko,Barbieri-1}.
There are no vector-like exotics in the bulk spectrum or on the matter
curves, neither are there chiral exotics. 
\begin{flushleft}
The existence of the $\mathbb{Z}_{4}$ $\mathbf{R}$-symmetry generated
by the automorphism $T_{u,v}$ that reverses the sign on the holomorphic
$4$-form addresses the issue of dimension-5 proton decay operators
and forbids a $\mu$-term \cite{Lee}. The charges of the matter states
under the $\mathbb{Z}_{4}$ $\mathbf{R}$-symmetry are given in Table
3. Finally, Wilson line symmetry-breaking is addressed in Sections
\ref{subsec:Symmetry-breaking} and \ref{sec:Skew-twisted-cohomology}. 
\par\end{flushleft}

However there are several issues which are not resolved in this paper.
Moduli stabilization is not addressed. We have not generated a $\mu$-term.
In principle the $\mathbb{Z}_{4}$ $\mathbf{R}$-symmetry can be broken
by non-perturbative physics down to matter parity which then allows
for a $\mu$-term of order the weak scale and severely suppressed
dimension-5 proton decay operators. We have not discussed the possible
Yukawa interactions needed to give quarks and leptons mass. We may
or may not also have right-handed neutrinos which would be useful
for a see-saw mechanism of neutrino masses.

Finally, the $U(1)_{X}$ gauge symmetry may be broken to $\mathbb{Z}_{2}$
matter parity via non-perturbative effects at the GUT scale or by
a Stueckelberg mechanism. This would then allow for neutrino Majorana
masses near the GUT scale.

\section*{Acknowledgments}

The authors would like to thank Dave Morrison, Tony Pantev and Sakura
Schäfer-Nameki for their guidance and many helpful conversations over
several years. The authors would also like to thank Sakura Schäfer-Nameki
for her contribution of the Appendix to this paper. However the authors
themselves take sole responsibility for any errors or omissions in
this paper.

S.R. acknowledges partial support from DOE grant DE-SC0011726.

\appendix
%dummy comment inserted by tex2lyx to ensure that this paragraph is not empty%dummy comment inserted by tex2lyx to ensure that this paragraph is not empty

\section{$\mathbb{P}^{112}$ Formulations}
\begin{center}
by Sakura Schäfer-Nameki 
\par\end{center}

\subsection{\label{subsec:-App.1}Realizing $\bar{W_{4}}$ in $\mathbb{P}^{112}$}

An alternative formulation of the elliptic fibrations $\overline{W}_{4}$
can be given in terms of the $\mathbb{P}^{112}$-fibration in \cite{Morrison:2012ei}.
Let 
\[
\mathbb{P}^{112}:=\mathbb{P}\left(\mathcal{O}_{B_{3}}\oplus\mathcal{O}_{B_{3}}\left(2\right)\oplus\mathcal{O}_{B_{3}}\left(4\right)\right)
\]
with projective coordinates $[w,x,y]$. For $c_{i}\in H^{0}\left(\mathcal{O}_{B_{3}}\left(8-2i\right)\right)$
define 
\begin{equation}
\begin{array}{c}
c_{0}w^{4}+c_{1}w^{3}x+c_{2}w^{2}x^{2}+c_{3}wx^{3}\\
=y^{2}+\left(b_{0}x^{2}+b_{1}wx+b_{2}w^{2}\right)y.
\end{array}\label{eq:QuarticDef}
\end{equation}
This elliptic fibration has two rational sections 
\[
\begin{array}{c}
w=y=0\\
w=y+b_{0}x^{2}=0
\end{array}
\]
that can be made symmetric by the shift 
\begin{equation}
y\mapsto y-\frac{b_{0}x^{2}}{2}.\label{eq:shift}
\end{equation}
\[
\left|\begin{array}{cc}
h_{0} & -\hat{t}\left(\hat{h}_{1}\hat{z}-\hat{h}_{2}\hat{t}\right)\\
\hat{t}\left(d_{2}\hat{z}+d_{3}\hat{t}\right) & \hat{c}_{4}\hat{t}^{2}\left(\hat{t}^{2}+\hat{z}^{2}\right)+\hat{c}_{3}\hat{t}^{3}\hat{z}+\hat{c}_{1}z^{3}\left(\hat{t}-\hat{z}\right)
\end{array}\right|=0.
\]
\[
\left|\begin{array}{cc}
h_{0} & -\hat{t}\left(\hat{h}_{1}\hat{z}-\hat{h}_{2}\hat{t}\right)\\
\hat{t}\left(d_{2}\hat{z}+d_{3}\hat{t}\right) & \hat{c}_{4}\hat{t}^{2}\left(\hat{t}^{2}+\hat{z}^{2}\right)+\hat{c}_{3}\hat{t}^{3}\hat{z}+\hat{c}_{1}z^{3}\left(\hat{t}-\hat{z}\right)
\end{array}\right|=0.
\]

Denote by 
\[
\begin{array}{c}
\gamma_{i}=ord_{z}\left(c_{i}\right)\\
\beta_{j}=ord_{z}\left(b_{j}\right)
\end{array}
\]
and let 
\[
Q\left(\gamma_{0},\gamma_{1},\gamma_{3},\gamma_{0},\beta_{0},\beta_{1},\beta_{2}\right)
\]
denote the quartic \eqref{eq:QuarticDef} for the given values. Using
Tate's algorithm, there are various ways to degenerate this to an
$I_{5}$ fiber about the locus $\left\{ z=0\right\} $ \cite{Kuntzler:2014ila},
both by specifying the vanishing orders in \eqref{eq:QuarticDef}
without further relations among the leading order coefficients $c_{i}$
and $b_{j}$ (`canonical Tate models'), or by imposing relations among
the coefficients (`non-canonical models').

In particular, the fibration $\overline{W}_{4}$ wth coordinates $\left(X,Y,w\right)$
in Subsection \ref{subsec:The-standard--formulation} becomes in this
language a non-canonical model with coordinates $\left(w,x,y\right)$
given by the substitution
\[
\begin{array}{c}
X\leftrightarrow w\\
Y\leftrightarrow x\\
w\leftrightarrow y+x^{2}
\end{array}
\]
so that \eqref{eq:waffeq} becomes the $Q\left(2,1,0,0,0,\infty,\infty\right)$
given in \eqref{eq:pee112} by 
\[
\begin{array}{c}
y^{2}+2x^{2}y=-2a_{5}x^{3}w\\
+\left(a_{5}^{2}-2a_{4}z-6z^{2}\right)x^{2}w^{2}\\
+\left(2a_{5}a_{4}z+\left(6a_{5}-4a_{420}\right)z^{2}-8z^{3}\right)xw^{3}\\
-\left(\left(-a_{4}^{2}-4a_{5}a_{420}\right)z^{2}+\left(2a_{4}+4a_{2}-4a_{5}\right)z^{3}+3z^{4}\right)w^{4}.
\end{array}
\]

The $I_{5}$ singular fiber enhances to $I_{6}$ along $-a_{53}=a_{420}=0$
and $\left|\begin{array}{cc}
a_{4} & -a_{5}\\
a_{0}+a_{3} & a_{3}
\end{array}\right|=0,$ and we have $I_{1}^{*}$ enhancement along $a_{5}=0$.

The sections intersect the $I_{5}$ fiber along $D_{0}$ and $D_{2}$,
where, as above, the rational curves in the $I_{5}$ associated to
the simple roots $\alpha_{i}$ are denoted by $D_{i}$, with the extended
node corresponding to $D_{0}$. This type of $I_{5}$ fiber with two
rational sections was denoted by $I_{5}^{(0||1)}$, where the separation
of the sections is $\#|-1$. \cite{Kuntzler:2014ila,Lawrie}

\subsection{Other models with similar fiber type}

A different set of $I_{5}^{(0||1)}$ non-canonical models were determined
in \cite{Kuntzler:2014ila} \smallskip{}
 
\begin{table}
\centering %
\begin{tabular}{c|c|c}
Matter loci  & Rep and $U(1)$ charge  & Codim 2 fiber \tabularnewline
\hline 
\hline 
$\sigma_{3}$  & $\mathbf{10}_{1}+\overline{\mathbf{10}}_{-1}$  & $I_{1}^{*(0||1)}$ \tabularnewline
$\sigma_{1}$  & $\mathbf{10}_{-4}+\overline{\mathbf{10}}_{4}$  & $I_{1}^{*(01)}$ \tabularnewline
$\sigma_{2}$  & $\mathbf{5}_{-7}+\overline{\mathbf{5}}_{7}$  & $I_{6}^{(0|1)}$ \tabularnewline
$P_{1}$  & $\mathbf{5}_{-2}+\overline{\mathbf{5}}_{2}$  & $I_{6}^{(0||1)}$ \tabularnewline
$P_{2}$  & $\mathbf{5}_{3}+\overline{\mathbf{5}}_{-3}$  & $I_{6}^{(0|||1)}$ \tabularnewline
\end{tabular}\caption{ Matter loci and $U(1)$ charges for the $I_{5}^{(0||1)}$ fiber given
by $Q(3,2,1,1,0,0,1)|_{P_{0}=0}$. }
\end{table}

This model is based on the $I_{4}$ fiber $Q\left(3,2,1,0,0,0,1\right)$
given by 
\[
\begin{array}{c}
c_{0}z^{3}w^{4}+c_{1}z^{2}w^{3}x+c_{2}zw^{2}x^{2}+c_{3}zwx^{3}\\
=y^{2}+b_{0}x^{2}y+b_{1}ywx+b_{2}zw^{2}y.
\end{array}
\]
where in addition we impose 
\[
P_{0}=\left|\begin{array}{cc}
b_{0} & b_{2}\\
c_{3} & c_{2}
\end{array}\right|=0.
\]
This last is solved as follows: 
\[
\left|\begin{array}{cc}
b_{0} & b_{2}\\
c_{3} & c_{2}
\end{array}\right|=\left|\left(\begin{array}{c}
\sigma_{1}\\
\sigma_{4}
\end{array}\right)\left(\begin{array}{cc}
\sigma_{2} & \sigma_{3}\end{array}\right)\right|=0.
\]
This fibration $Q(3,2,1,1,0,0,1)|_{P_{0}=0}$ has generically $I_{5}$
fiber, with same intersection pattern with the two rational sections
as $W_{4}$, but it provides more matter loci, shown in the table,
with the following expressions for the matter curves 
\[
\begin{array}{c}
P_{1}=\sigma_{4}b_{2,1}^{2}+\sigma_{1}^{2}\sigma_{3}c_{0,3}-\sigma_{1}b_{2,1}c_{1,2}\\
P_{2}=\sigma_{1}\sigma_{2}^{2}\left|\begin{array}{cc}
\sigma_{1} & b_{2,1}\\
\sigma_{4} & c_{1,2}
\end{array}\right|+\sigma_{1}\sigma_{3}^{2}\left|\begin{array}{cc}
\sigma_{1} & b_{0,1}\\
\sigma_{4} & c_{3,2}
\end{array}\right|\\
+\sigma_{3}\sigma_{2}\left(\sigma_{4}\sigma_{1}b_{1,1}-\sigma_{1}^{2}c_{2,2}+\sigma_{4}^{2}\right).
\end{array}
\]
The sections are not symmetric in the form above, but this can again
be remedied by shifting as in \eqref{eq:shift}.

\end{document}